\DeclarePairedDelimiter{\floor}{\lfloor}{\rfloor}
\definecolor{darkblue}{rgb}{0.0,0.0,0.5}
\newcommand{\mfh}{\mathfrak{h}}
\newcommand{\Om}{\Omega}
\newcommand{\mcF}{\mathcal{F}}
\newcommand{\hil}{\mathcal{H}} 
\newcommand{\lan}{\langle}
\newcommand{\ran}{\rangle}
\newcommand{\bN}{\mathbb{N}} 
\newcommand{\bE}{\mathbb{E}}
\newcommand{\bC}{\mathbb{C}}
\newcommand{\bR}{\mathbb{R}}
\newcommand{\bS}{\mathbb{S}}
\newcommand{\cS}{\mathcal{S}}
\newcommand{\cR}{\mathcal{R}}
\newcommand{\cT}{\mathcal{T}}
\newcommand{\cF}{\mathcal{F}}
\newcommand{\sS}{\mathscr{S}}
\newcommand{\sD}{\mathscr{D}}
\newcommand{\rx}{\mathrm{x}}
\newcommand{\ry}{\mathrm{y}}
\newcommand{\rd}{\mathrm{d}}
\newcommand{\re}{\mathrm{e}}
\newcommand{\rc}{\mathrm{c}}
\newcommand{\supp}{\mathrm{supp}}
\newcommand*\botimes{{\mathpalette\botimes@{1.5}}}
\newcommand*\botimes@[2]{\mathbin{\vcenter{\hbox{\scalebox{#2}{\hspace{0.0mm}$\m@th#1\otimes$\hspace{0.0mm}}}}}}
\newcommand*\Cdot{{\mathpalette\Cdot@{.6}}}
\newcommand*\Cdot@[2]{\mathbin{\vcenter{\hbox{\scalebox{#2}{\hspace{0.5mm}$\m@th#1\bullet$\hspace{0.5mm}}}}}}
\newtheorem{thm}{Theorem}
\newtheorem{lem}[thm]{Lemma}
\newtheorem{prop}[thm]{Proposition}
\theoremstyle{remark}
\newtheorem{rem}[thm]{Remark}
\theoremstyle{remark}
\theoremstyle{definition}
\newtheorem{dfn}[thm]{Definition}
\numberwithin{equation}{section}
\numberwithin{thm}{section}
\numberwithin{ass}{section}
\begin{document}

\title{Stochastic quantization of two-dimensional \\$P(\Phi)$ Quantum Field Theory}
\author{Pawe{\l} Duch, Wojciech Dybalski and Azam Jahandideh
\\
{\normalsize Faculty of Mathematics and Computer Science} \\  
{\normalsize Adam Mickiewicz University in Pozna\'n}\\
{\normalsize ul. Uniwersytetu Pozna\'nskiego 4, 61-614 Pozna\'n, Poland}\\
{\normalsize\{pawel.duch, wojciech.dybalski, azajah\}@amu.edu.pl}
}
\date{\today}

\maketitle

\begin{abstract}
We give a simple and self-contained construction of of the $P(\Phi)$ Euclidean Quantum Field Theory in the plane and verify the Osterwalder-Schrader axioms: translational and rotational invariance, reflection positivity and regularity. In the intermediate steps of the construction we study measures on spheres. In order to control the infinite volume limit we use the parabolic stochastic quantization equation and the energy method. To prove the translational and rotational invariance of the limit measure we take advantage of the fact that the symmetry groups of the plane and the sphere have the same dimension.
\end{abstract}

\tableofcontents

\section{Introduction}

We revisit the construction of the Euclidean two-dimensional $P(\Phi)$ quantum field theory model also known as the $P(\Phi)_2$ model. The main new contribution is a simple construction of the infinite volume measure of this model using the stochastic quantization technique~\cite{parisi1981} and the verification of all Osterwalder-Schrader axioms with the exception of clustering. By the Osterwalder-Schrader reconstruction theorem~\cite{OS2} this yields the existence of the theory in the Lorentzian signature that satisfies all the Wightman axioms possibly with the exception of the uniqueness of the vacuum. Let us point out that the proof of the invariance of the infinite volume measure under all of the Euclidean transformations of the plane, which is one of the Osterwalder-Schrader axioms, is quite non-trivial. In fact, finite-volume measures, which are typically introduced in the intermediate steps of the construction, are usually invariant only under a certain subset of the Euclidean transformations. The novelty of the approach taken in the present work is to study finite-volume $P(\Phi)_2$ measures defined on spheres in the intermediate steps of the construction. Such measures are invariant under the action of the three-dimensional Lie group of the rotations of the sphere (in contrast, measures defined on a torus are only invariant under the action of the two-dimensional Lie group). To prove the Euclidean invariance of the infinite volume measure we crucially use the fact that the symmetry groups of the plane and the sphere have the same dimension.

Fix $n\in2\bN_+$, $n\geq 4$, and a real polynomial
\begin{equation}
 P(\tau) = \sum_{m=0}^n a_m \tau^m,
 \qquad \tau\in\bR,\qquad a_0,\ldots,a_{n-1}\in\bR, \quad a_n=1/n\,.
\end{equation}
Let $\bS_R$ be a  round two-dimensional sphere of radius $R\in\bN_+$ with the metric induced from~$\bR^3$. The Laplace-Beltrami operator on $\bS_R$ is denoted by $\Delta_R$ and the canonical Riemannian volume form on $\bS_R$ is denoted by $\rho_R$. For $R\in\bN_+$ a probability measure $\mu_R$ on $\sD'(\bS_R)$ is defined by
\begin{equation}\label{eq:measure_mu_R}
 \mu_R(\rd\phi):=\frac{1}{\mathcal{Z}_R}\exp\left(-\int_{\bS_R} \lambda :\!P(\phi(\rx))\!:\,\rho_R(\rd \rx)\right) \nu_R(\rd\phi),
\end{equation}
where $\lambda\in(0,\infty)$ is the coupling constant, $\mathcal{Z}_R\in(0,\infty)$ is the normalization factor, $\nu_R$ is the Gaussian measure on $\sD'(\bS_R)$ with covariance $G_R:=(1-\Delta_R)^{-1}$ and $:\Cdot:$ denotes the Wick ordering. The measure $\mu_R$ is called the $P(\Phi)_2$ measure on $\bS_R$. In Sec.~\ref{sec:uv_limit} we review the construction of this measure based on the Nelson hypercontractivity argument~\cite{Nelson}. By construction, $\mu_R$ is invariant under rotations of~$\bS_R$. Let $\jmath_R\,:\,\bR^2\to\bS_R$ be the parametrization of $\bS_R$ by the stereographic coordinates. By $\jmath_R^*\sharp\mu_R$ we denote the measure on $\sS'(\bR^2)$ obtained by the push-forward of $\mu_R$ by the pullback $\jmath_R^*\,:\,\sD'(\bS_R)\to\sS'(\bR^2)$. The main result of the paper is the following theorem.

\begin{thm}
The sequence of measures $(\jmath_R^*\sharp\mu_R)_{R\in\bN_+}$ on $\sS'(\bR^2)$ has a weakly convergent subsequence. Every accumulation point $\mu$ of $(\jmath_R^*\sharp\mu_R)_{R\in\bN_+}$ is invariant under the Euclidean symmetries of the plane and reflection positive. Moreover, there exists a ball $B\subset\sS(\bR^2)$ with respect to some Schwartz semi-norm centered at the origin such that for all $f\in B$ it holds 
\begin{equation}\label{eq:main_thm_bound}
 \int\exp(\phi(f)^n)\,\mu(\rd\phi) 
 \leq 2.
\end{equation}
\end{thm}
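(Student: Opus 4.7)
The plan proceeds in three parts matching the three claims of the theorem: an $R$-uniform analytic bound via stochastic quantization on the sphere, giving both tightness and \eqref{eq:main_thm_bound}; Euclidean invariance via a dimension-matching between $SO(3)$ and $E(2)$; and reflection positivity via geodesic reflections across great circles. The common thread is to treat $\mu_R$ as the invariant measure of a stationary parabolic SPDE, which enables the a priori estimates uniformly in~$R$.

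To obtain the uniform exponential bound, I would realize $\mu_R$ as the time-marginal of the stationary solution to
\begin{equation*}
 \partial_t \Phi = (\Delta_R - 1)\Phi - \lambda :\!P'(\Phi)\!: + \sqrt{2}\,\xi
\end{equation*}
on $\bR\times\bS_R$ driven by space-time white noise. Writing $\Phi=Z+Y$ with $Z$ the stationary Ornstein-Uhlenbeck process of covariance $G_R$, the remainder $Y$ satisfies a random PDE whose source is built from the Wick powers $:\!Z^k\!:$ for $k\le n-1$. Testing against $|Y|^{p-2}Y$ for large $p$, the dissipative part produces a positive $L^p$ quantity, while $:\!P'(Y+Z)\!:$ yields the coercive contribution $\lambda\int Y^n|Y|^{p-2}$ (using that $n$ is even and $a_n=1/n$) plus cross terms that can be absorbed into the coercive term by Young's inequality, at the cost of Besov norms of $:\!Z^k\!:$. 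Provided $G_R$ has the Euclidean logarithmic short-distance singularity modulo an $R$-independent correction, the Wick powers have $R$-uniform Besov bounds, and the energy estimate can be exponentiated (either through an $\exp$-Lyapunov functional or through the Bou\'e-Dupuis variational representation applied to the stationary measure) to give $\int \exp(\phi(f)^n)\,\mu_R(\rd\phi)\leq 2$ for all $f$ in a Schwartz ball $B$ centered at the origin. Because $\jmath_R$ identifies any fixed compact of $\bR^2$ with a neighborhood of the south pole on which the induced metric converges to the flat one, $B$ can be chosen independent of $R$. The same bound gives tightness of $(\jmath_R^*\sharp\mu_R)$ on $\sS'(\bR^2)$, and it passes to every accumulation point $\mu$ by Fatou.

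For Euclidean invariance the crucial point is that $SO(3)$ and $E(2)=SO(2)\ltimes\bR^2$ are both three-dimensional Lie groups. Given any one-parameter subgroup of $E(2)$ — a translation along a fixed direction or a rotation about the origin — one constructs a sequence of one-parameter subgroups $(g_R^s)_{s\in\bR}\subset SO(3)$ whose stereographic conjugates $\jmath_R^{-1}\circ g_R^s\circ\jmath_R$ converge, uniformly on compacts of $\bR^2$, to the given flow: rotations about the axis through the two poles pull back to rotations of $\bR^2$, while rotations about equatorial axes pull back to translations in the large-$R$ limit. Since each $\mu_R$ is invariant under $g_R^s$, invariance of every accumulation point $\mu$ under each one-parameter subgroup of $E(2)$, and hence under $E(2)$, follows by testing against bounded continuous functions of $\phi(f_1),\ldots,\phi(f_m)$ with compactly supported Schwartz $f_i$. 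For reflection positivity, each great circle $C\subset\bS_R$ defines a geodesic reflection under which $\nu_R$ is reflection positive (a consequence of the Markov property of Brownian motion on $\bS_R$ across $C$, equivalently reflection positivity of $(1-\Delta_R)^{-1}$) and under which the Wick density is manifestly invariant, so $\mu_R$ itself is reflection positive across $C$. Choosing $C$ so that $\jmath_R^{-1}(C)$ converges to a fixed line in $\bR^2$ and passing to the limit in the reflection-positivity inequality yields reflection positivity of $\mu$.

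The main obstacle is the first step: deriving the energy estimate with constants independent of~$R$, which requires both precise short-distance estimates on $G_R$ and its renormalized powers on $\bS_R$, and a coercivity analysis showing that $\lambda\int Y^n|Y|^{p-2}$ dominates all cross terms from $:\!P'(Y+Z)\!:$. Upgrading from polynomial to exponential integrability uniformly in $R$ is an additional delicate point. Once the $R$-uniform exponential bound is in place, the symmetry arguments, which rest on the coincidence $\dim SO(3)=\dim E(2)=3$, reduce to essentially formal compactness arguments.
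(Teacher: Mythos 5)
Your high-level architecture matches the paper's: stochastic quantization with a Da Prato--Debussche decomposition $\Phi=Z+\Psi$ and an energy estimate uniform in $R$ and the UV cutoff for tightness; rotations of $\bS_R$ about the polar axis and about an equatorial axis converging (in stereographic coordinates) to rotations and translations of $\bR^2$ for Euclidean invariance; and reflection across a great circle for reflection positivity. The Euclidean-invariance part of your sketch is essentially what the paper does in Sec.~\ref{sec:Euclidean}.

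The route to the exponential bound~\eqref{eq:main_thm_bound} is genuinely different and is the place where your plan is weakest. You propose testing the remainder equation against $|Y|^{p-2}Y$ and then ``exponentiating'' the resulting $L_p$ energy estimate, either via an exp-Lyapunov functional or via Bou\'e--Dupuis. Passing from polynomial moment bounds to uniform \emph{exponential} moments is exactly the step you flag as ``delicate,'' and it is not clear how to carry it out uniformly in $R$ and $N$ from an energy estimate alone. The paper sidesteps this entirely with the Hairer--Steele tilted-measure inequality (Lemma~\ref{lem:mu_f_bound}, quoted from~\cite{BarashkovSinh}): since $\int e^F\,\rd\mu \leq \exp(\int F\,\rd\mu^F)$ with $\mu^F\propto e^F\mu$, it suffices to prove a single $n$-th moment bound for the whole family of tilted measures $\mu^g_{R,N}$ defined in~\eqref{eq:regularized_measure_g}, with a constant uniform in $g$ in a Schwartz ball. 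This is obtained by adding the source term $(\phi(g))^{n-1}g$ to the stochastic quantization equation, which is a small perturbation of the original drift and is absorbed in the same $L_2$-tested energy estimate (Prop.~\ref{prop:energy_method}). In other words: the paper never exponentiates an energy estimate; it reduces exponential integrability to a \emph{polynomial} bound for a tilted family. You should also note the paper tests against $v_L\,\jmath_R^*\Psi$ in the weighted $L_2$ pairing on $\bR^2$, not against $|\Psi|^{p-2}\Psi$; the $L_2$ test combined with the coercivity of $\Psi^{n-1}\cdot\Psi$ already produces the needed $L_n$ control, and the weight $v_L$ is what makes the estimate uniform in $R$ after stereographic projection.

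For reflection positivity there is a concrete gap in your argument. You assert that $\mu_R$ is reflection positive ``because $\nu_R$ is RP and the Wick density is invariant.'' Two issues: first, RP of the interacting measure requires the interaction to \emph{factorize} as $e^{-V^+-V^-}$ with $V^\pm$ supported in the two hemispheres and $\Theta V^+=\overline{V^-}$ (invariance of the density alone is not sufficient); second, and more importantly, the measure $\mu_R$ is only defined as the $N\to\infty$ limit of the regularized measures $\mu_{R,N}$, and the spectral cutoff $K_{R,N}=(1-\Delta_R/N^2)^{-1}$ used in that construction \emph{breaks} reflection positivity (cf.~\cite{ABRSS18}). The paper has to introduce a second, RP-preserving cutoff $\hat K_{R,N}$ with a compactly supported kernel (Def.~\ref{def:hat-K}), check that the associated regularized measures converge to the same $\mu_R$ (Lemma~\ref{lem:UV_convergence_tilde}), restrict the interaction to $\bS^\pm_{R,N}$ so that it factorizes across the reflection hyperplane (the variables $\tilde Y^\pm_{R,N}$), and only then pass to the limit. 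If you try to make your sketch rigorous you will run into precisely this obstruction, so the step ``$\mu_R$ is reflection positive across $C$'' needs a separate construction, not a formal invariance argument.
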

\begin{rem}
Any accumulation point $\mu$ of $(\jmath_R^*\sharp\mu_R)_{R\in\bN_+}$ is called the $P(\Phi)_2$ measure on the plane.
\end{rem}
\begin{rem}
The bound~\eqref{eq:main_thm_bound} implies that $\mu$ is non-Gaussian as Gaussian measures do not integrate functions growing so fast. Moreover, the Osterwalder-Schrader regularity axiom~\cite{OS2} is an immediate consequence of this bound.
\end{rem}
\begin{rem}
By the above theorem $\mu$ satisfies all the Osterwalder-Schrader axioms~\cite{OS2} possibly with the exception of the cluster property (the  decay of correlation functions). It is known that the $P(\Phi)_2$ measure on the plane is unique provided $\lambda\in(0,\infty)$ is sufficiently small~\cite{GJS74}. In general uniqueness does not hold and the model exhibits phase transitions \cite[Ch.~16]{GJ85}. The cluster property is only expected to hold in pure phases. Our construction of the $P(\Phi)_2$ measure does not need any smallness assumption on $\lambda$. However, it does not give any information about the uniqueness of the infinite volume limit or the decay of correlation functions. In what follows we set $\lambda=1$.
\end{rem}

\begin{proof}
The existence of a weakly convergent subsequence of $(\jmath_R^*\sharp\mu_R)_{R\in\bN_+}$ follows from tightness and Prokhorov's theorem. The proof of tightness is presented in Sec.~\ref{sec:tightness} and uses parabolic stochastic quantization combined with a PDE energy estimate. The invariance of $\mu$ under the Euclidean symmetries is established in Sec.~\ref{sec:Euclidean} and is based on the fact that for all $R\in\bN_+$ the measure $\mu_R$ is invariant under the group of rotations of $\bS_R$. The proof that $\mu$ is reflection positive is given in Sec.~\ref{sec:reflection} and is based on the fact that for all $R\in\bN_+$ the measure $\mu_R$ is reflection positive. The bound~\eqref{eq:main_thm_bound} is proved in Sec.~\ref{sec:integrability} with the use of the Hairer-Steele argument~\cite{HairerSteele}. 
\end{proof}

The $P(\Phi)_2$ model has been extensively studied in the literature and is arguably the simplest example of an interacting QFT. The overview of various approaches used to construct this model can be found in the books~\cite{GlimmJaffe,Simon} and the review article~\cite{Summers}. Since the finite-volume $P(\Phi)_2$ measure is absolutely continuous with respect to the free field measure the construction of the model in finite volume is quite elementary and was given by Nelson in~\cite{Nelson,DiGenova}. We also mention constructions of the  $P(\Phi)_2$ models on  de Sitter spacetime, whose Euclidean counterpart is a sphere~\cite{FHN75, BJM23,JM18}. The construction of the infinite volume $P(\Phi)_2$ model directly in the Lorentzian signature including the verification of the Haag-Kastler axioms was carried out in the early 70's by Glimm and Jaffe~\cite{GJ85}. The construction was later revisited in the Euclidean setting.  For $\lambda>0$ sufficiently small a complete construction of the Euclidean $P(\Phi)_2$ model and the verification of all of the Osterwalder-Schrader axioms including exponential decay of correlations was given in \cite{GJS74} (see also~\cite{GlimmJaffe,GJ85}) using the cluster expansion technique. Let us also mention an alternative technique based on the correlation equalities that works for all $\lambda$ positive and polynomials $P(\tau) = Q(\tau) - h\,\tau $ such that $Q$ is an even polynomial and $h\in\bR$, which was originally developed in~\cite{GRS75} (see also~\cite{Simon,GlimmJaffe}). We stress that the method of our paper  works for all $\lambda$ positive and all polynomials $P$ bounded from below. In order to control the infinite volume limit we have to prove certain bounds for moments of the regularized measures uniform in both the ultraviolet and infrared cutoffs. To this end, we use the parabolic stochastic quantization and the energy method. Let us note that a similar approach has already been used, for example, in ~\cite{AlbeverioKusuoka,gubinelli2018pde,AlbeverioKusuoka2} to construct the $\Phi^4_3$ model. The analysis of the above-mentioned references can be trivially adapted to the case of the much simpler $P(\Phi)_2$ model. Let us point out that in~\cite{gubinelli2018pde} the infinite volume measure is constructed as a limit of a sequence of measures defined on tori of increasing size. The symmetry group of the torus consists of translations, reflections and rotations by a multiple of $\pi/2$ and it is easy to prove that the infinite volume measure also has these symmetries. However, it is not clear whether it is invariant under all rotations. In the construction of~\cite{AlbeverioKusuoka2} an infrared cutoff preserving the rotational invariance was used. The rotational invariance of the infinite volume limit is then obvious. However, the translational invariance is far from clear as it is explicitly broken by the infrared cutoff.
In~\cite{AlbeverioKusuoka} infinite volume limit was not investigated. 
In the present work we study $P(\Phi)_2$ measures defined on spheres of increasing radius. In order to show the invariance of the infinite volume $P(\Phi)_2$ measure under all Euclidean transformations we take advantage of the fact that the symmetry groups of the plane and the sphere have the same dimension. We remark that using the strategy of this paper it should also be possible to construct the infinite volume $\Phi^4_3$ measure and prove its invariance under all Euclidean transformations of $\bR^3$ by appropriately adapting the analysis of~\cite{gubinelli2018pde,AlbeverioKusuoka2}.

The plan of the paper is as follows. In Sec.~\ref{sec:uv_limit} we introduce the $P(\Phi)_2$ measure $\mu_{R,N}$ on the sphere $\bS_R$ with a certain UV cutoff $N\in\bN_+$ in the frequency space and prove the convergence of $\mu_{R,N}$ as $N\to\infty$ to the measure $\mu_R$ formally given by Eq.~\eqref{eq:measure_mu_R}. We also investigate a certain auxiliary measure $\mu_{R,N}^g$, which coincides with $\mu_{R,N}$ when $g=0$. The auxiliary measure $\mu_{R,N}^g$ is used in Sec.~\ref{sec:integrability} to prove the bound~\eqref{eq:main_thm_bound}. In Sec.~\ref{sec:stochastic_quantization} we study the stochastic quantization equation of the measure $\mu_{R,N}^g$. We also introduce a related stochastic PDE obtained with the use of the so-called Da Prato-Debussche trick that, in contrast to the former SPDE, is well defined in the limit $N\to\infty$. In Sec.~\ref{sec:a_priori_bound} we apply the energy technique to prove a certain a priori bound for the latter SPDE. The a priori bound is uniform in both the radius of the sphere $R$ as well as the UV cutoff $N$ and is the main ingredient in the proof of the existence of the infinite volume limit of the measures $\mu_R$, which is presented in Sec.~\ref{sec:tightness}. In order to make sense of the infinite volume limit we have to first identify the measure $\mu_R$ on $\sD'(\bS_R)$ with a certain measure on $\sS'(\bR^2)$. To this end, we use the stereographic projection of the sphere $\bS_R$ onto the plane $\bR^2$ whose definition is recalled in Sec.~\ref{sec:stereographic}. Sec.~\ref{sec:reflection} is devoted to the proof of the reflection positivity. In Sec.~\ref{sec:Euclidean} we show that an infinite volume $P(\Phi)_2$ measure is invariant under translations and rotations. The proof relies on the invariance of the finite volume measure $\mu_R$ on $\sD'(\bS_R)$ under all rotations of the sphere $\bS_R$ and some elementary properties of the stereographic projection. More specifically, we use the observation that if the radius $R$ of the sphere is very big, then the Euclidean transformations of the plane are well approximated by appropriately chosen rotations of the sphere. In Appendix~\ref{sec:spaces} we recall the definitions and collect useful facts about various function spaces used in the paper. Appendix~\ref{sec:preliminaries} contains some auxiliary results. In Appendix~\ref{sec:stochastic_estimates} we prove uniform bounds for moments of norms of Wick polynomials of regularized free fields. 

\section{Ultraviolet limit}\label{sec:uv_limit}

In this section we recall the construction of the $P(\Phi)_2$ measure on $\bS_R$ based on the Nelson hypercontractivity estimate~\cite{Nelson}. We first introduce the measures $(\mu_{R,N})_{N\in\bN_+}$ with the UV regularization and show that the limit $\lim_{N\to\infty}\mu_{R,N}=\mu_R$ exists in the sense of weak convergence of measures. For $R,N\in\bN_+$ we define the bounded operators $G_R,K_{R,N}\,:\,L_2(\mathbb{S}_R)\to L_2(\mathbb{S}_R)$,
\begin{equation}
 G_R:=(1-\Delta_R)^{-1},\qquad K_{R,N}:=(1-\Delta_R/N^2)^{-1} 
\end{equation}
and a probability measure on $\sD'(\bS_R)$
\begin{equation}\label{eq:regularized_measure}
 \mu_{R,N}(\rd\phi):=\frac{1}{\mathcal{Z}_{R,N}}\exp\left(-\int_{\bS_R} P(\phi(\rx),c_{R,N})\,\rho_R(\rd \rx)\right) \nu_{R,N}(\rd\phi),
\end{equation}
where $\nu_{R,N}$ is the Gaussian measure on $\sD'(\bS_R)$ with covariance $G_{R,N}:=K_{R,N}G_RK_{R,N}$,
\begin{equation}\label{eq:counterterm}
c_{R,N}:=\int_{\sD'(\bS_R)}\phi(\rx)^2\,\nu_{R,N}(\rd\phi) = \mathrm{Tr}(K_{R,N}G_RK_{R,N})/4\pi R^2
\end{equation}
is the so-called counterterm and
\begin{equation}
 P(\tau,c):= \sum_{m=0}^n a_m\,\sum_{k=0}^{\floor{m/2}} \frac{(-1)^k m!}{(m-2k)!k!2^k} c^k \tau^{m-2k}, 
 \qquad \tau,c\in\bR.
\end{equation}
Note that by Lemma~\ref{lem:trace} there exists $C\in(0,\infty)$ such that for all $N,R\in\bN_+$ it holds
\begin{equation}\label{eq:bound_counterterm}
 |c_{R,N}-1/2\pi\,\log N|\leq C.
\end{equation}
Observe also that $P(\phi(\rx),c_{R,N})$ is obtained by Wick-ordering $P(\phi(\rx))$ with respect to the regularized measure $\nu_{R,N}$.
Accordingly, the sum over $k$  in the definition of $P(\tau,c)$ amounts for  $c\geq 0$ to $\tau \mapsto c^{m/2}H_m(\tau/c^{1/2})$, where $H_m$, $m\in \bN_0$, are the Hermite polynomials, cf. Appendix~\ref{sec:stochastic_estimates}.   

Actually, in order to  establish the bound~\eqref{eq:main_thm_bound} we will study a more general class of probability measures
\begin{equation}\label{eq:regularized_measure_g}
 \mu_{R,N}^g(\rd\phi):=\frac{1}{\mathcal{Z}^g_{R,N}}\exp\left(\phi(g)^n/n\right) \mu_{R,N}(\rd\phi),
\end{equation}
with $g\in C^\infty(\bS_R)$ such that 
\begin{equation}\label{eq:g_conditions}
\|g\|^n_{L_{n/(n-1)}(\bS_R)}\leq 1/2,
\qquad
\|\Delta_R g\|^n_{L_{n/(n-1)}(\bS_R)}\leq 1/2.
\end{equation}
The usefulness of the measure $\mu_{R,N}^g$ comes from Lemma~\ref{lem:mu_f_bound}, which says that in order to show the bound~\eqref{eq:main_thm_bound} it is sufficient to prove a certain uniform bound for some finite moment of the measure $\mu_{R,N}^g$. In Lemma~\ref{lem:mu_measure_well_defined} we show that the measure $\mu_{R,N}^g$ is well defined. Proposition~\ref{prop:uv_limit} implies in particular that for every $R\in\bN_+$ the sequence of measures $(\mu_{R,N}^g)_{N\in\bN_+}$ converges weakly to a measure denoted by $\mu_R^g$. If $g=0$, then $\mu_R^g$ coincides with the $P(\Phi)_2$ measure on $\bS_R$, which is denoted by $\mu_R$. Moreover, the measures $\mu_R^g$ and $\mu_R$ are related by a formula analogous to~\eqref{eq:regularized_measure_g}.

\begin{lem}\label{lem:mu_measure_well_defined}
For all $R,N\in\bN_+$ and $g\in C^\infty(\bS_R)$ such that $\|g\|^n_{L_{n/(n-1)}(\bS_R)}\leq 1/2$ the measure $\mu_{R,N}^g$ is well-defined and both $\nu_{R,N}$ and $\mu_{R,N}^g$ are concentrated on $L^1_2(\bS_R)\subset\sD'(\bS_R)$.
\end{lem}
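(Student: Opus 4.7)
\emph{Plan.} I will check three things: (A) the Gaussian measure $\nu_{R,N}$ is concentrated on $L^1_2(\bS_R)$; (B) the candidate normalization $Z := \int\exp(\phi(g)^n/n - \int_{\bS_R} P(\phi(\rx),c_{R,N})\,\rho_R(\rd\rx))\,\nu_{R,N}(\rd\phi)$ is finite and strictly positive, so that $\mu_{R,N}^g$ is a bona fide probability measure; (C) $\mu_{R,N}^g$ inherits the $L^1_2$-concentration from $\nu_{R,N}$ by absolute continuity.

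For (A), I would diagonalize in the spherical-harmonic basis, in which $G_R$, $K_{R,N}$, and $1-\Delta_R$ act on the $\ell$-th eigenspace (of multiplicity $2\ell+1$) by $(1+\lambda_\ell)^{-1}$, $(1+\lambda_\ell/N^2)^{-1}$, and $1+\lambda_\ell$, where $\lambda_\ell = \ell(\ell+1)/R^2$. A straightforward computation then gives
\begin{equation*}
 \int \|\phi\|_{L^1_2(\bS_R)}^2\,\nu_{R,N}(\rd\phi) \;=\; \mathrm{Tr}((1-\Delta_R)\,G_{R,N}) \;=\; \sum_{\ell\geq 0} \frac{2\ell+1}{(1+\lambda_\ell/N^2)^2},
\end{equation*}
which is finite for every fixed $R,N\in\bN_+$ (of course not uniformly in $N$). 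Hence $\nu_{R,N}$-a.s.\ $\phi\in L^1_2(\bS_R)$, and by the borderline Sobolev embedding on a two-dimensional manifold, $\phi\in L_p(\bS_R)$ for every $p<\infty$.

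The crux is (B). Exploiting that $n$ is even so $\phi(\rx)^n\geq 0$ pointwise, and applying Hölder's inequality with conjugate exponents $n$ and $n/(n-1)$ together with the hypothesis on $g$, one obtains
\begin{equation*}
 \phi(g)^n/n
 \;\leq\; \|g\|_{L_{n/(n-1)}(\bS_R)}^n\,\|\phi\|_{L_n(\bS_R)}^n / n
 \;\leq\; \frac{1}{2n}\int_{\bS_R}\phi(\rx)^n\,\rho_R(\rd\rx).
\end{equation*}
Writing $P(\tau,c_{R,N}) = \tau^n/n + Q(\tau,c_{R,N})$ with $\deg_\tau Q\leq n-1$, Young's inequality yields a constant $C(R,N)<\infty$ with $|Q(\tau,c_{R,N})|\leq \tau^n/(4n) + C(R,N)$ for all $\tau\in\bR$ (admissible since $c_{R,N}$ is finite for each $R,N$). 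Adding the two estimates, the exponent in $Z$ is dominated by $C(R,N)\,\rho_R(\bS_R) - \frac{1}{4n}\int_{\bS_R}\phi^n\rho_R$, hence by a constant depending only on $R,N$, giving $Z<\infty$. Strict positivity is then immediate: the exponent is continuous on $L_n(\bS_R)$ and finite at $\phi=0$, so it is bounded below on some open neighborhood of $0$, which has positive $\nu_{R,N}$-mass (as $\nu_{R,N}$ has full support on $L^1_2(\bS_R) \hookrightarrow L_n(\bS_R)$).

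Finally (C) is automatic: the density of $\mu_{R,N}^g$ with respect to $\nu_{R,N}$ is strictly positive and $\nu_{R,N}$-a.s.\ finite, so the two measures are mutually absolutely continuous and $\mu_{R,N}^g(L^1_2(\bS_R))=1$ follows from (A). The only nontrivial step is (B), where the specific hypothesis $\|g\|_{L_{n/(n-1)}(\bS_R)}^n\leq 1/2$ and the parity of $n$ are essential; however, the constants produced here are far from uniform in $R$ and $N$, so this argument only establishes well-definedness and does not yet contribute to the main bound \eqref{eq:main_thm_bound} (which will be proved much later by the Hairer--Steele argument).
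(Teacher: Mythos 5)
Your proof is correct and follows essentially the same route as the paper: the Sobolev embedding $L^1_2(\bS_R)\hookrightarrow L_n(\bS_R)$, H\"older's inequality giving $\phi(g)^n/n\leq\|\phi\|^n_{L_n(\bS_R)}/2n$, and the boundedness from below of $\tau\mapsto P(\tau,c_{R,N})-\tau^n/2n$ together make the unnormalized density a bounded, continuous function on $L_n(\bS_R)$, while the concentration of $\nu_{R,N}$ on $L^1_2(\bS_R)$ is exactly the trace estimate you wrote out (cf.\ Lemma~\ref{lem:stochastic_X_not_uniform}). The one place you deviate is strict positivity of the normalization constant: the paper derives the sharper bound $\mathcal{Z}_{R,N}\mathcal{Z}^g_{R,N}\geq 1$ from Jensen's inequality together with the vanishing of the Wick-ordered moments (Lemma~\ref{Nualart-Lemma-1-1}), whereas you invoke full topological support of $\nu_{R,N}$. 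Your route is valid, but it is both heavier than necessary and yields less: once the exponent is known to be $\nu_{R,N}$-a.s.\ finite, the density is $\nu_{R,N}$-a.s.\ a strictly positive real number, so its integral against the probability measure $\nu_{R,N}$ is positive with no further input about the support.
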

\begin{rem}
We identify implicitly a function $\phi\in L_1(\bS_R)$ with a distribution $\varphi\in\sD'(\bS_R)$ defined by $\varphi(f)\equiv\langle\varphi,f\rangle:=\int_{\bS_R} \phi(\rx)f(\rx)\,\rho_R(\rd\rx)$.
\end{rem}

\begin{proof}
By Lemma~\ref{lem:stochastic_X_not_uniform} the measure $\nu_{R,N}$ is concentrated on $L^1_2(\bS_R)$. By the Sobolev embedding $L^1_2(\bS_R)\subset L_n(\bS_R)$ stated in Lemma~\ref{lem:embedding_sphere}, the bound~$\phi(g)^n/n\leq \|\phi\|^n_{L_n(\bS_R)}/2n$ and the boundedness from below of the polynomial $\tau\mapsto P(\tau,c_{R,N})-\tau^n/2n$ the function
\begin{equation}
 \mathcal{U}^g_{R,N}\,:\,
 L_n(\bS_R)\ni \phi \mapsto \exp\left(\frac{1}{n}\phi(g)^n-\int_{\bS_R} P(\phi(\rx),c_{R,N})\,\rho_R(\rd \rx)\right)  \in (0, \infty)
\end{equation}
is bounded and continuous. Moreover, $\mathcal{Z}_{R,N}\mathcal{Z}_{R,N}^g = \|\mathcal{U}^g_{R,N}\|_{L_1(\sD'(\bS_R),\nu_{R,N})}\geq 1$ by the Jensen inequality and Lemma~\ref{Nualart-Lemma-1-1}. This proves the claims about the measure $\mu_{R,N}^g$.
\end{proof}

\begin{dfn}\label{dfn:X_Y}
We define $X_R$ to be the Gaussian random variable valued in $\sD'(\bS_R)$ with mean zero and covariance $G_R$. We set $X_{R,N}:=K_{R,N}X_R$,
\begin{equation}
\begin{gathered}
 X_{R,N}^{:m:}(\rx):=\sum_{k=0}^{\floor{m/2}} \frac{(-1)^k m!}{(m-2k)!k!2^k} (c_{R,N})^k X_{R,N}^{m-2k}(\rx),
 \quad
 X_{R,N}^{:m:}(h):=\int_{\bS_R} X_{R,N}^{:m:}(\rx)\,h(\rx)\,\rho_R(\rd \rx),
 \\
 Y_{R,N}:=\sum_{m=0}^n a_m X_{R,N}^{:m:}(1_{\bS_R}),
 \qquad
 Y_{R,N}^g:=Y_{R,N}-X_{R,N}(g)^n/n,
\end{gathered} 
\end{equation}
where $h\in L_\infty(\bS_R)$ and $1_B$ denotes the characteristic function of the set $B\subset\bS_R$.
\end{dfn}

\begin{rem}
By Lemma~\ref{lem:stochastic_X_not_uniform} it holds $X_{R,N}\in L^1_2(\bS_R)\subset L_n(\bS_R)$ almost surely. 
In particular, $Y^g_{R,N}$ is well-defined. Moreover, for positive measurable $F$ we have
\begin{equation}
 \int F(\phi)\, \mu^g_{R,N}(\rd\phi)
 =
 \frac{\bE F(X_{R,N})\exp(-Y^g_{R,N})}{\bE\exp(-Y^g_{R,N})}.
\end{equation} 
\end{rem}
\begin{lem}\label{lem:E_F_X}
Let $X$ be a real-valued random variable such that $X\geq 0$. Suppose that the function $F\,:\,[0,\infty)\to[0,\infty)$ is continuously differentiable and such that $F(0)=0$ and $F'\geq 0$. Then it holds 
\begin{equation}
 \bE F(X)=\int_0^\infty \mathbb{P}(X>t)\, F'(t)\,\rd t.
\end{equation} 
\end{lem}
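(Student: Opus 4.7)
The statement is the classical layer-cake (tail integration) formula, so the natural route is Fubini--Tonelli applied after writing $F$ as an integral of its derivative. Since $F$ is continuously differentiable on $[0,\infty)$ with $F(0)=0$, the fundamental theorem of calculus gives, for every $x\geq 0$,
\begin{equation}
 F(x) = \int_0^x F'(t)\,\rd t = \int_0^\infty F'(t)\,\mathbf{1}_{\{t<x\}}\,\rd t.
\end{equation}
Substituting $x=X(\omega)$ yields $F(X) = \int_0^\infty F'(t)\,\mathbf{1}_{\{t<X\}}\,\rd t$ pointwise on the underlying probability space.

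Next, I would take expectations and interchange the integral with the expectation. All quantities involved are non-negative ($F'\geq 0$, and the indicator is non-negative), so Tonelli's theorem applies without any integrability assumption, giving
\begin{equation}
 \bE F(X) = \int_0^\infty F'(t)\,\bE\,\mathbf{1}_{\{t<X\}}\,\rd t = \int_0^\infty F'(t)\,\mathbb{P}(X>t)\,\rd t,
\end{equation}
which is the claimed identity.

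There is essentially no obstacle here; the only thing to check is that the hypotheses $F(0)=0$ and $F'\geq 0$ (together with $X\geq 0$) are exactly what is needed to avoid any sign or integrability subtlety and to legitimize the use of Tonelli. The hypothesis $X\geq 0$ ensures the event $\{t<X\}$ is empty for $t<0$, so the integral starts at $0$; if $X$ were allowed to be negative, a correction term would be required.
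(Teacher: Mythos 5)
Your argument is correct and is the standard layer-cake/Tonelli derivation. The paper does not actually provide a proof of this lemma (it is stated without one, being a well-known tail-integration identity), so there is nothing to compare against; the route you take is the natural one and is almost certainly what the authors had in mind.
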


\begin{lem}\label{lem:polynomial_bound}
There exists $A\in(0,\infty)$ depending only on the coefficients of the polynomial $\tau \mapsto P(\tau)$ such that for all $\tau\in\bR$ and $c\in(1,\infty)$ it holds $P(\tau,c)\geq \tau^n/2n-A \,c^{n/2}$.
\end{lem}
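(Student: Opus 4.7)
The plan is to view $P(\tau,c)$ as a polynomial in $\tau$ of degree $n$ with leading coefficient $1/n$ whose lower-order coefficients are polynomials in $c$ of controlled degree, and then to absorb every subleading power of $\tau$ into half of the leading term via Young's inequality, paying only a term of order $c^{n/2}$.

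First I would collect powers of $\tau$ in the definition of $P(\tau,c)$ and write $P(\tau,c) = \tau^n/n + \sum_{j=0}^{n-1} b_j(c)\,\tau^j$ with $b_j(c) = \sum_{k\geq 0,\, j+2k\leq n} a_{j+2k}\,(-1)^k(j+2k)!/(j!\,k!\,2^k)\,c^k$. Thus $b_j(c)$ is a polynomial in $c$ of degree at most $\floor{(n-j)/2}$, which gives, for $c\in(1,\infty)$, the crude bound $|b_j(c)|\leq C_1\, c^{(n-j)/2}$ with $C_1\in(0,\infty)$ depending only on $a_0,\ldots,a_n$. It is worth noting that the nonnegativity of the exponent of $c$ in every monomial is exactly what Wick ordering provides.

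Next, for each $1\leq j\leq n-1$ I would apply Young's inequality with conjugate exponents $p=n/j$ and $q=n/(n-j)$ to obtain $|b_j(c)\,\tau^j|\leq \varepsilon\,|\tau|^n + C_\varepsilon\,|b_j(c)|^{n/(n-j)}\leq \varepsilon\,|\tau|^n + C'_\varepsilon\, c^{n/2}$, where the last step uses the coefficient bound from the previous paragraph (note that $(n-j)/2 \cdot n/(n-j) = n/2$, so the power of $c$ comes out independently of $j$). For $j=0$ the estimate $|b_0(c)|\leq C_1\,c^{n/2}$ is immediate. Choosing $\varepsilon>0$ small enough that the sum of the contributions $\varepsilon|\tau|^n$ over $j\in\{1,\ldots,n-1\}$ does not exceed $|\tau|^n/(2n)$, summing over $j$, and using that $n$ is even so $\tau^n=|\tau|^n$, I arrive at $P(\tau,c) \geq \tau^n/n - |\tau|^n/(2n) - A\,c^{n/2} = \tau^n/(2n) - A\,c^{n/2}$ for some $A\in(0,\infty)$ depending only on the coefficients of $P$.

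No serious obstacle arises; the lemma is essentially polynomial bookkeeping. The only items that require a little care are the degree count for $b_j(c)$ and the observation that the exponent $n/(n-j)$ paired with $c^{(n-j)/2}$ produces precisely $c^{n/2}$ \emph{uniformly} in $j$, which is what makes $c^{n/2}$ the correct power on the right-hand side of the inequality.
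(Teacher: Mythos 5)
Your proof is correct and uses essentially the same argument as the paper: Young's inequality to absorb the subleading powers of $\tau$ into a small multiple of $\tau^n$, at the cost of a term of order $c^{n/2}$. The only cosmetic difference is that you first collect coefficients into $b_j(c)$ and then apply Young once per power of $\tau$, while the paper applies the same Young estimate directly to each monomial $a\,\tau^{m-2k}c^k$; both hinge on the identical arithmetic $(n-j)/2\cdot n/(n-j)=n/2$ making the power of $c$ come out uniform.
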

\begin{proof}
By the Young inequality for all $m \in \{0,1,\ldots, n-1\}$, $k\in\{0,1,\ldots, \lfloor\frac{m}{2}\rfloor$\}, $a\in \bR$ and $\delta \in (0,1)$ there exists $C\in (0, \infty)$ such that for all $c\in (1, \infty)$ and $\tau \in \bR$ it holds
\begin{equation}
a\,\tau^{m-2k}c^{k} \geq -\delta\,\tau^n - C\,c^{\frac{n}{2}}.
\end{equation}
To conclude we apply the above bound to all terms of the polynomial $P(\tau,c)$ but the term $\tau^n/n$ and choose $\delta\in (0,1)$ sufficiently small.
\end{proof}

\begin{prop}\label{prop:uv_limit}
Let $R\in\bN_+$ and $g\in C^\infty(\bS_R)$ satisfy the bounds~\eqref{eq:g_conditions}. There exist random variables $X_R\in \sD'(\bS_R)$, see Def. \ref{dfn:X_Y}, and $Y_R^g:=Y_R-X_R(g)^n/n\in\bR$ such that $\bE\exp(-Y_R^g)<\infty$ and for all bounded and continuous $F\,:\,\sD'(\bS_R)\to \bR$ and $p\in(0,\infty)$ it holds
\begin{equation}
 \lim_{N\to\infty} \bE F(X_{R,N}) \exp(-p\,Y_{R,N}^g) = \bE F(X_R) \exp(-p\,Y_R^g).
\end{equation}
\end{prop}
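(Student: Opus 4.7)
The overall strategy is to realize all $X_{R,N}$ on a common probability space via the coupling $X_{R,N}=K_{R,N}X_R$ of Def.~\ref{dfn:X_Y}, to identify $Y_R^g$ as the $L^p(\Omega)$-limit of $Y_{R,N}^g$, and to conclude via Vitali's convergence theorem. The main obstacle is the uniform-in-$N$ integrability bound $\sup_{N}\bE\exp(-qY_{R,N}^g)<\infty$ for every $q\in(0,\infty)$; this requires a Nelson-type hypercontractivity argument.

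First, since $K_{R,N}\to\mathrm{Id}$ strongly on $L_2(\bS_R)$, the coupling yields $X_{R,N}\to X_R$ a.s.\ in $\sD'(\bS_R)$. The stochastic estimates of Appendix~\ref{sec:stochastic_estimates} then deliver, for each $m\in\{0,\ldots,n\}$ and each $p\in[1,\infty)$, the convergence $X_{R,N}^{:m:}\to X_R^{:m:}$ in $L^p(\Omega;B^{-\varepsilon}_{q,q}(\bS_R))$ for suitable $\varepsilon>0$ and $q\in[1,\infty)$, together with $L^p(\Omega;L_q(\bS_R))$ moment bounds uniform in $N$. Integrating against $1_{\bS_R}$ and against $g$ and passing to the limit therefore yield $X_{R,N}^{:m:}(1_{\bS_R})\to X_R^{:m:}(1_{\bS_R})$ and $X_{R,N}(g)\to X_R(g)$ in every $L^p(\Omega)$. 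Setting $Y_R^g:=\sum_{m=0}^n a_m X_R^{:m:}(1_{\bS_R})-X_R(g)^n/n$, we obtain that $Y_R^g$ is well-defined and $Y_{R,N}^g\to Y_R^g$ in every $L^p(\Omega)$, hence in probability. Continuity and boundedness of $F$ give in addition $F(X_{R,N})\to F(X_R)$ in probability.

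The hard step is the uniform integrability of $\exp(-pY_{R,N}^g)$. A naive attempt, applying Lemma~\ref{lem:polynomial_bound} pointwise at finite $N$ and combining it with H\"older's inequality and the assumption $\|g\|_{L_{n/(n-1)}(\bS_R)}^n\leq 1/2$, gives only
\begin{equation}
Y_{R,N}^g \;\geq\; \tfrac{1}{2n}\|X_{R,N}\|_{L_n(\bS_R)}^n - A\,c_{R,N}^{n/2}|\bS_R| - \tfrac{1}{2n}\|X_{R,N}\|_{L_n(\bS_R)}^n \;=\; -A\,c_{R,N}^{n/2}|\bS_R|,
\end{equation}
whose right-hand side diverges with $N$ since $c_{R,N}\sim(\log N)/2\pi$. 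The pointwise bound alone is therefore insufficient, and one must exploit that $\int_{\bS_R}\!:\!X_{R,N}^n\!:(\rx)\,\rho_R(\rd\rx)$ is typically non-negative, the Wick counterterm being tailored precisely for this cancellation. Following Nelson~\cite{Nelson,Simon}, I would cover $\bS_R$ by cells of fixed area, bound the local integral of $P(X_{R,N},c_{R,N})$ from below by a positive multiple of the local $n$-th power minus manageable lower-order Wick contributions, and invoke the $L_q(\bS_R)$ moment bounds of Appendix~\ref{sec:stochastic_estimates} on the $X_{R,N}^{:m:}$ with $m<n$ to produce a Gaussian-chaos tail estimate
\begin{equation}
\mathbb{P}(Y_{R,N}^g<-t)\;\leq\; C\exp(-c\,t^{2/n}),\qquad t\geq 0,
\end{equation}
uniform in $N$. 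Lemma~\ref{lem:E_F_X} then converts this tail bound into $\sup_N\bE\exp(-qY_{R,N}^g)<\infty$ for every $q\in(0,\infty)$.

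Finally, Fatou's lemma combined with the $L^1(\Omega)$-convergence $Y_{R,N}^g\to Y_R^g$ gives $\bE\exp(-pY_R^g)<\infty$. Convergence in probability of $F(X_{R,N})\exp(-pY_{R,N}^g)$ to $F(X_R)\exp(-pY_R^g)$, together with uniform integrability (obtained from the bound with $q=p(1+\delta)$ for some $\delta>0$ and the boundedness of $F$), permits passage to the limit in expectation by Vitali's convergence theorem.
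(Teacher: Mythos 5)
Your overall framework matches the paper's: reduce via Vitali's theorem to convergence in probability plus uniform integrability, obtain convergence in probability from the $L^2(\Omega)$ estimates of Appendix~\ref{sec:stochastic_estimates}, and obtain uniform integrability through a Nelson-type hypercontractivity argument. You also correctly diagnose why the naive pointwise bound from Lemma~\ref{lem:polynomial_bound} alone is useless, since $c_{R,N}\sim\frac{1}{2\pi}\log N\to\infty$.

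The gap lies in the tail estimate you assert. You claim the Nelson argument produces
$\mathbb{P}(Y_{R,N}^g<-t)\leq C\exp(-c\,t^{2/n})$ and that this yields $\sup_N\bE\exp(-q\,Y_{R,N}^g)<\infty$. Neither holds. Since $n\geq 4$ gives $2/n\leq 1/2<1$, the stretched-exponential tail is too weak: for any $q>0$,
\begin{equation}
\int_0^\infty \exp\!\big(-c\,(t/q)^{2/n}\big)\,\re^{t}\,\rd t=\infty,
\end{equation}
so this bound cannot give finite exponential moments via Lemma~\ref{lem:E_F_X}, and the uniform integrability step collapses. A correctly executed Nelson argument in fact yields the much stronger doubly-exponential tail
$\mathbb{P}(-Y_{R,N}^g>t)\leq C\exp(-\exp(c\,t^{2/n}))$, which is precisely what is needed. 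The paper gets it without any cell decomposition by comparing two cutoff scales. For an auxiliary $M$, Lemma~\ref{lem:polynomial_bound} plus the H\"older estimate using \eqref{eq:g_conditions} gives the deterministic bound $Y_{R,M}^g\geq -A\,c_{R,M}^{n/2}$, so the event $\{-Y_{R,N}^g>2A\,c_{R,M}^{n/2}\}$ is contained in $\{|Y_{R,N}^g-Y_{R,M}^g|>A\,c_{R,M}^{n/2}\}$. Since $Y_{R,N}^g-Y_{R,M}^g$ lies in an inhomogeneous chaos of order $n$ and $\bE(Y_{R,N}^g-Y_{R,M}^g)^2\lesssim (N\wedge M)^{-1/n}$ by Lemmas~\ref{lem:stochastic_X_uniform} and \ref{lem:stochastic_Y}, the second bound of Lemma~\ref{lem:nelson} gives
$\bE\exp(c\,M^{1/n^2}|Y_{R,N}^g-Y_{R,M}^g|^{2/n})\leq C$ uniformly, and Markov's inequality yields $\mathbb{P}(-Y_{R,N}^g>2A\,c_{R,M}^{n/2})\leq C\exp(-M^{1/n^2})$. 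Using $c_{R,M}\sim\frac{1}{2\pi}\log M$ from Eq.~\eqref{eq:bound_counterterm} and solving $t\approx c_{R,M}^{n/2}$ for $M$ converts this into the double-exponential tail in $t$, which then integrates against $\re^{t}$ as required. Your cell-by-cell localization is a legitimate alternative route to the same double-exponential tail (it is closer to Nelson's original argument in \cite{Simon}), but as written it stops at a tail estimate that neither follows from the argument you sketch nor suffices for the conclusion.
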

\begin{proof}
By Vitali's theorem it suffices to establish that $(F(X_{R,N}) \exp(-p\,Y_{R,N}^g))_{N\in\bN_+}$ converges in probability to $F(X_R) \exp(-p\,Y_R^g)$ and is uniformly integrable. The convergence in probability follows from Lemmas~\ref{lem:stochastic_X_uniform} and~\ref{lem:stochastic_Y}. To show uniform integrability it is enough to demonstrate that $(\bE\exp(-p\,Y_{R,N}^g))_{N\in\bN_+}$ is bounded for all $p\in(0,\infty)$.
By Lemma~\ref{lem:E_F_X} we have 
\begin{equation}\label{eq:density_bound}
\begin{aligned}
 \bE\exp(-p\,Y_{R,N}^g) &\leq 1+\bE( \exp(-p\,(Y^g_{R,N}\wedge0))-1) 
 \\
 &=1+\int_0^\infty \mathbb{P}(-p\,(Y^g_{R,N}\wedge 0)>t)\,\exp(t) \,\rd t
 \\
 &= 1+\int_0^\infty \mathbb{P}(-p\,Y^g_{R,N}>t)\, \exp(t) \,\rd t.
\end{aligned}
\end{equation}
By Lemma~\ref{lem:polynomial_bound} for every $R\in\bN_+$ there exists $A\in(0,\infty)$ such that for all $M\in\bN_+$ it holds $Y_{R,M}^g\geq -A\,c_{R,M}^{n/2}$. Consequently, by adding the latter inequality to $-Y^g_{R,N}> 2A\,c_{R,M}^{n/2}$, for every $R\in\bN_+$ there exist $c,C\in(0,\infty)$ such that for all $N,M\in\bN_+$ it holds
\begin{multline}
 \mathbb{P}(-Y^g_{R,N}> 2A\,c_{R,M}^{n/2}) 
 \leq 
 \mathbb{P}(|Y^g_{R,N}-Y^g_{R,M}|> A\, c_{R,M}^{n/2} ) \\
 \leq
 \exp(-c\,c_{R,M}\,A^{2/n}\,M^{1/n^2}) \,\bE \exp(c\,M^{1/n^2}\,|Y^g_{R,N}-Y^g_{R,M}|^{2/n})
 \leq
 C\, \exp(-M^{1/n^2}),
\end{multline}
where the last bound follows from Lemmas~\ref{lem:stochastic_X_uniform} and~\ref{lem:stochastic_Y}, the Nelson hypercontractivity estimate stated in Lemma~\ref{lem:nelson} and the estimate~\eqref{eq:bound_counterterm} for the counterterm $c_{R,M}$. As a result, by the bound~\eqref{eq:bound_counterterm} for every $R\in\bN_+$ and $p\in(0,\infty)$ there exist $c,C\in(0,\infty)$ such that for all $N\in\bN_+$ and $t\in[0,\infty)$ it holds
\begin{equation}
 \mathbb{P}(-p\,Y^g_{R,N}> t) \leq C\,\exp(-\exp(c\,t^{2/n})). 
\end{equation}
The above bound together with Eq.~\eqref{eq:density_bound} imply that $(F(X_{R,N}) \exp(-p\,Y^g_{R,N}))_{N\in\bN_+}$ is uniformly integrable. This finishes the proof.
\end{proof}

\section{Stochastic quantization}\label{sec:stochastic_quantization}

In order to show the existence of the infinite volume limit of the $P(\Phi)_2$ model and prove the bound~\eqref{eq:main_thm_bound} we have to establish appropriate bounds for moments of the regularized measure $\mu^g_{R,N}$ uniform in $R,N\in\bN_+$. To this end, we shall use the so-called parabolic stochastic quantization technique. We study a certain stochastic process evolving in fictitious time whose stationary distribution coincides with the Euclidean QFT measure. The process satisfies a non-linear stochastic PDE that is called the stochastic quantization equation. More specifically, to prove desired uniform bounds we apply the energy method, which relies on testing the equation by the solution itself and estimating the terms that are not positive. Because of the UV problem the stochastic quantization equation of the measure $\mu^g_{R,N}$, that is Eq.~\eqref{eq:weak_interacting} below, becomes singular in the limit $N\to\infty$. For this reason we cannot apply the energy method directly to Eq.~\eqref{eq:weak_interacting}. We use the so-called Da Prato-Debussche trick~\cite{daprato2003} that is based on the observation that the most singular part of the solution $\varPhi^g_{R,N}$ of Eq.~\eqref{eq:weak_interacting} coincides with the solution $Z_{R,N}$ of the stochastic quantization equation of the Gaussian measure $\nu_{R,N}$, that is Eq.~\eqref{eq:weak_free}. It turns out that Eq.~\eqref{eq:varPsi_pde}, which is satisfied by the process $\varPsi^g_{R,N}:=\varPhi^g_{R,N} - Z_{R,N}$, is not singular in the limit $N\to\infty$. The application of the energy method to~Eq.~\eqref{eq:varPsi_pde} is the subject of Sec.~\ref{sec:a_priori_bound}.

\begin{dfn}
For $R\in[1,\infty)$ we define $(W_R(t,\Cdot))_{t\in[0,\infty)}$ to be the cylindrical Wiener process on $L_2(\bS_R)$, see 
\cite[p. 53]{DaPratoZabczyk}.
\end{dfn}

\begin{dfn}
For $R,N\in[1,\infty)$ we set $Q_{R,N}:=(1-\Delta_R)(1-\Delta_R/N^2)^2$.
\end{dfn}

We study the following stochastic ODEs, which coincide with the stochastic quantization equations of the measures $\nu_{R,N}$ and $\mu_{R,N}^g$, respectively:
\begin{equation}\label{eq:weak_free}
 \rd Z_{R,N}(t,\Cdot) 
 =
 \sqrt{2}\, \rd W_R(t,\Cdot)- Q_{R,N} Z_{R,N}(t,\Cdot)\,\rd t,
\end{equation}
\begin{multline}\label{eq:weak_interacting}
 \rd \varPhi_{R,N}^g(t,\Cdot)
 =
 \sqrt{2}\, \rd W_R(t,\Cdot)- Q_{R,N} \varPhi_{R,N}^g(t,\Cdot)\,\rd t 
 \\
 -P'(\varPhi_{R,N}^g(t,\Cdot),c_{R,N})\,\rd t + (\varPhi_{R,N}^g(t,g))^{n-1}g\,\rd t,
\end{multline}
where $P'(\tau,c):=\partial_\tau P(\tau,c)$. The unique  mild  solution $Z_{R,N}\in C([0,\infty),L^1_2(\bS_R))$ 
of Eq.~\eqref{eq:weak_free} with the initial condition
\begin{equation}
 Z_{R,N}(0,\Cdot)=z_{R,N}\in L^1_2(\bS_R)
\end{equation}
is given by
\begin{equation}\label{eq:mild_free}
 Z_{R,N}(t,\Cdot) = \re^{-t Q_{R,N}} z_{R,N}+\int_0^t\re^{-(t-s)Q_{R,N}}\, \sqrt{2}\,\rd W_R(s,\Cdot)
\end{equation}
for $t\in[0,\infty)$, see e.g. \cite[Sec.~5.2]{DaPratoZabczyk}. 
By definition the mild solution of Eq.~\eqref{eq:weak_interacting} with the initial condition
\begin{equation}
 \varPhi_{R,N}^{g}(0,\Cdot)=\phi_{R,N}^g\in L^1_2(\bS_R)
\end{equation}
is the stochastic process $\varPhi_{R,N}^{g}\in C([0,\infty),L^1_2(\bS_R))$ such that for all $t\in[0,\infty)$ it holds
\begin{multline}\label{eq:mild_interacting}
 \varPhi_{R,N}^g(t,\Cdot) = \re^{-t Q_{R,N}} \phi_{R,N}^g
 \\
 +\int_0^t\re^{-(t-s)Q_{R,N}}\, \big(\sqrt{2}\,\rd W_R(s,\Cdot)
 -P'(\varPhi_{R,N}^g(s,\Cdot),c_{R,N})\rd s + (\varPhi_{R,N}^g(s,g))^{n-1}g\,\rd s\big). 
\end{multline}
The mild solution $\varPhi^g_{R,N}$ exists and is unique, {cf.~\cite[Sect.~5.5]{DaPratoZabczyk}}.

\begin{dfn}
By definition the stochastic processes $Z_{R,N},\varPhi^g_{R,N}\in C([0,\infty), L^1_2(\bS_R))$ are the unique solutions of Eq.~\eqref{eq:mild_free} and Eq.~\eqref{eq:mild_interacting}, respectively, with random initial data $z_{R,N}$ and $\phi^g_{R,N}$, respectively, such that $z_{R,N}$ and $\phi^g_{R,N}$ are independent of $(W_R(t,\Cdot))_{t\in[0,\infty)}$  and $\mathrm{Law}(z_{R,N},\phi_{R,N}^g)=\nu_{R,N}\times\mu^g_{R,N}$. We also define the process
\begin{equation}
 \varPsi^g_{R,N}:=\varPhi^g_{R,N} - Z_{R,N}\in C([0,\infty), L^1_2(\bS_R)).
\end{equation}
\end{dfn}

\begin{rem}
The processes $Z_{R,N},\varPhi^g_{R,N},\varPsi^g_{R,N}\in C([0,\infty), L^1_2(\bS_R))$ are well-defined  because the measures $\nu_{R,N}$ and $\mu^g_{R,N}$ are concentrated on $L^1_2(\bS_R)$.
\end{rem}

The following lemma expresses the fact that the measures $\nu_{R,N}$ and $\mu^{g}_{R,N}$ are invariant for Eq.~\eqref{eq:mild_free} and Eq.~\eqref{eq:mild_interacting}, respectively. 
\begin{lem} \label{lem:Law-Z-phi}
For all $t\in[0,\infty)$ it holds
\begin{equation}
 \mathrm{Law}(Z_{R,N}(t,\Cdot))=\nu_{R,N},
 \qquad
 \mathrm{Law}(\varPhi^g_{R,N}(t,\Cdot))=\mu^g_{R,N}.
\end{equation}
\end{lem}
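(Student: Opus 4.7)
The plan is to verify each identity separately, reading the measure as the stationary distribution of the associated Markov semigroup. Since the UV cutoff $N$ is kept fixed throughout, both SPDEs are non-singular at the level $L^1_2(\bS_R)$ and the arguments reduce to classical ones.

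For the first identity I would rely on the explicit mild representation~\eqref{eq:mild_free}. Conditionally on the initial datum it presents $Z_{R,N}(t,\Cdot)$ as a Gaussian with mean $\re^{-tQ_{R,N}}z_{R,N}$ and stochastic-integral covariance $\int_0^t 2\,\re^{-2sQ_{R,N}}\,\rd s$. Since $Q_{R,N}$ and $G_{R,N}$ are commuting functions of $-\Delta_R$ satisfying $Q_{R,N}G_{R,N}=\mathrm{Id}$, this covariance equals $G_{R,N}(1-\re^{-2tQ_{R,N}})$. With $z_{R,N}\sim\nu_{R,N}$ taken independent of $W_R$ as prescribed in the definition of $Z_{R,N}$, the total covariance telescopes to
\begin{equation*}
 \re^{-2tQ_{R,N}}\,G_{R,N}+G_{R,N}(1-\re^{-2tQ_{R,N}})=G_{R,N},
\end{equation*}
so $\mathrm{Law}(Z_{R,N}(t,\Cdot))=\nu_{R,N}$ for every $t\in[0,\infty)$.

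For the second identity I would rewrite the drift of Eq.~\eqref{eq:weak_interacting} in gradient form. Setting
\begin{equation*}
 U^g_{R,N}(\phi):=\int_{\bS_R}P(\phi(\rx),c_{R,N})\,\rho_R(\rd\rx)-\phi(g)^n/n
\end{equation*}
and using $Q_{R,N}=G_{R,N}^{-1}$, the drift equals $-Q_{R,N}\phi-\nabla_{L_2}U^g_{R,N}(\phi)$ with $\nabla_{L_2}U^g_{R,N}(\phi)=P'(\phi,c_{R,N})-(\phi(g))^{n-1}g$, so~\eqref{eq:weak_interacting} is formally the Langevin equation of the Gibbs measure with density proportional to $\exp(-U^g_{R,N})\,\nu_{R,N}$, which is precisely $\mu^g_{R,N}$ by~\eqref{eq:regularized_measure_g}. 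Invariance then follows from the standard Langevin/Gibbs principle, most conveniently via a Galerkin truncation onto the eigenspaces of $-\Delta_R$ corresponding to the first $M$ spherical harmonics: at level $M$ the projected system is a finite-dimensional gradient SDE for which the invariance of its explicit Gibbs measure is the Fokker-Planck identity, and one passes $M\to\infty$ using strong convergence of the Galerkin mild solutions to $\varPhi^g_{R,N}$, available at fixed $N$ because $\re^{-tQ_{R,N}}$ is smoothing and the polynomial drift is locally Lipschitz on $L_n(\bS_R)$ (cf.~\cite[Ch.~8]{DaPratoZabczyk}).

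The main technical point is ensuring that the formal Langevin computation is meaningful on the support of $\mu^g_{R,N}$: the gradient $\nabla_{L_2}U^g_{R,N}(\phi)$ must be well-defined almost surely, which is guaranteed by the concentration of $\mu^g_{R,N}$ on $L^1_2(\bS_R)\subset L_n(\bS_R)$ established in Lemma~\ref{lem:mu_measure_well_defined}. The same concentration supplies the uniform integrability needed to pass to the Galerkin limit, after which the two identities are immediate from the construction of $Z_{R,N}$ and $\varPhi^g_{R,N}$ in the preceding definition.
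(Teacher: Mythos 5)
Your proposal is correct and is, in substance, the argument behind the paper's one-line citation to \cite[Sec.~8.6]{DaPratoZabczyk}: the Gaussian invariance by the covariance telescope $\re^{-2tQ_{R,N}}G_{R,N}+G_{R,N}(1-\re^{-2tQ_{R,N}})=G_{R,N}$ (using $Q_{R,N}G_{R,N}=\mathrm{Id}$ and commutativity), and the interacting invariance by recognizing a gradient system with Gibbs density $\exp(-U^g_{R,N})$ relative to $\nu_{R,N}$ and proving invariance via Galerkin truncation. Since the paper gives no details beyond the reference, your write-up is simply an expansion of that citation along its intended lines.
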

\begin{proof}
See \cite[Sec.~8.6]{DaPratoZabczyk}. 
\end{proof}

\begin{lem}
It holds 
\begin{equation}
 \varPsi^g_{R,N}\in 
 C([0,\infty), L^1_2(\bS_R))\cap 
 C((0,\infty),L^3_2(\bS_R))\cap 
 C^1((0,\infty),L^{-3}_2(\bS_R))
\end{equation}
and the following equality
\begin{equation}\label{eq:varPsi_pde}
  \partial_t\varPsi^g_{R,N}  = -Q_{R,N}\varPsi^g_{R,N}- P'(\varPsi^g_{R,N}+Z_{R,N},c_{R,N}) +((\varPsi^g_{R,N}+Z_{R,N})(t,g))^{n-1}g
\end{equation}
is satisfied in $C((0,\infty),L^{-3}_2(\bS_R))$. 
\end{lem}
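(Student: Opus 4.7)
The plan is to obtain a purely deterministic mild integral equation for $\varPsi^g_{R,N}$ by subtracting the mild formula \eqref{eq:mild_free} for $Z_{R,N}$ from the mild formula \eqref{eq:mild_interacting} for $\varPhi^g_{R,N}$. Because the cylindrical Wiener process enters both equations with the same coefficient $\sqrt{2}$ and the same semigroup, the stochastic integrals cancel exactly, leaving the pathwise identity
\begin{equation}
 \varPsi^g_{R,N}(t,\Cdot) = \re^{-tQ_{R,N}}(\phi_{R,N}^g - z_{R,N}) + \int_0^t \re^{-(t-s)Q_{R,N}} F_{R,N}(s,\Cdot)\,\rd s,
\end{equation}
where $F_{R,N}(s,\Cdot):=-P'(\varPhi^g_{R,N}(s,\Cdot),c_{R,N}) + (\varPhi^g_{R,N}(s,g))^{n-1}\,g$. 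Since by the previous lemma $\varPhi^g_{R,N},Z_{R,N}\in C([0,\infty),L^1_2(\bS_R))$ almost surely, fixing a path reduces the problem to the analysis of a deterministic Duhamel convolution.

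Next I would analyse the spatial regularity of $F_{R,N}$. By the Sobolev embedding $L^1_2(\bS_R)\hookrightarrow L_n(\bS_R)$ recorded in Lemma~\ref{lem:embedding_sphere}, the composition $P'(\varPhi^g_{R,N}(s,\Cdot),c_{R,N})$ lies in $L_{n/(n-1)}(\bS_R)$, hence in $L^{-s}_2(\bS_R)$ for some fixed $s\in(0,3)$. The term $(\varPhi^g_{R,N}(s,g))^{n-1}g$ is a scalar times the smooth function $g$ and therefore belongs to $C^\infty(\bS_R)$. Both depend continuously on time in these spaces, so $F_{R,N}\in C([0,\infty),L^{-s}_2(\bS_R))$ almost surely. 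Because $Q_{R,N}$ is a strictly elliptic differential operator of order $6$ with smooth coefficients on the compact manifold $\bS_R$, the semigroup $(\re^{-tQ_{R,N}})_{t\geq 0}$ is $C^\infty$-smoothing for $t>0$ and satisfies heat-type bounds $\|\re^{-tQ_{R,N}}\|_{L^{-s}_2\to L^3_2}\lesssim t^{-(3+s)/6}$ on compact subsets of $(0,\infty)$. The exponent $(3+s)/6<1$ makes the Duhamel integral absolutely convergent in $L^3_2(\bS_R)$ for each $t>0$, with continuous dependence on $t$; together with $\re^{-tQ_{R,N}}(\phi^g_{R,N}-z_{R,N})\in C([0,\infty),L^1_2)\cap C((0,\infty),L^3_2)$ this gives the asserted membership in $C([0,\infty),L^1_2(\bS_R))\cap C((0,\infty),L^3_2(\bS_R))$.

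For the differential equation and the $C^1$-in-time regularity, I would differentiate the mild formula at $t>0$, using the standard semigroup identity
\begin{equation}
 \frac{\rd}{\rd t}\int_0^t \re^{-(t-s)Q_{R,N}} F_{R,N}(s,\Cdot)\,\rd s = F_{R,N}(t,\Cdot) - Q_{R,N}\!\int_0^t \re^{-(t-s)Q_{R,N}} F_{R,N}(s,\Cdot)\,\rd s,
\end{equation}
valid in $L^{-3}_2(\bS_R)$, together with $\partial_t\re^{-tQ_{R,N}}=-Q_{R,N}\re^{-tQ_{R,N}}$ on the initial-data term. Adding up and using $\varPsi^g_{R,N}(t,\Cdot)\in L^3_2(\bS_R)$ with the bounded extension $Q_{R,N}\colon L^3_2(\bS_R)\to L^{-3}_2(\bS_R)$ yields
\begin{equation}
 \partial_t \varPsi^g_{R,N}(t,\Cdot) = -Q_{R,N}\varPsi^g_{R,N}(t,\Cdot) + F_{R,N}(t,\Cdot)
\end{equation}
in $C((0,\infty),L^{-3}_2(\bS_R))$, which, after substituting $\varPhi^g_{R,N}=\varPsi^g_{R,N}+Z_{R,N}$ in $F_{R,N}$, is exactly \eqref{eq:varPsi_pde}.

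The main technical obstacle is the differentiation of the Duhamel integral at the diagonal $s=t$, where the smoothing bound for $\re^{-(t-s)Q_{R,N}}$ is singular and one cannot differentiate under the integral in a strong space. The remedy is to allow for a loss of regularity and perform the differentiation in the weaker topology of $L^{-3}_2(\bS_R)$: there the boundary contribution $F_{R,N}(t,\Cdot)$ is well defined, and the loss of six derivatives incurred by pulling $Q_{R,N}$ out of the convolution is exactly compensated by working with the $L^3_2$-valued convolution on the right-hand side, so that both terms live in $L^{-3}_2(\bS_R)$ and depend continuously on $t$.
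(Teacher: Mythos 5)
Your argument is essentially the same as the paper's: subtract the mild formulas to obtain a deterministic Duhamel representation for $\varPsi^g_{R,N}$, use the Sobolev embedding of Lemma~\ref{lem:embedding_sphere} to control the nonlinear forcing, and invoke the smoothing of the sixth-order semigroup $(\re^{-tQ_{R,N}})_{t\geq0}$ to get the stated regularity and the equation in $L^{-3}_2(\bS_R)$. One small inefficiency: you route the regularity of $P'(\varPhi^g_{R,N},c_{R,N})$ through $L^1_2\hookrightarrow L_n$, obtaining only $L_{n/(n-1)}\hookrightarrow L^{-s}_2$; the paper instead uses the full two-dimensional embedding $L^1_2(\bS_R)\hookrightarrow L_{2(n-1)}(\bS_R)$ (still a consequence of Lemma~\ref{lem:embedding_sphere}) to conclude directly that $P'(\varPhi^g_{R,N},c_{R,N})\in C([0,\infty),L_2(\bS_R))$, which makes the subsequent smoothing exponent simply $t^{-1/2}$. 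Your weaker intermediate is nevertheless sufficient since $s<3$, so the proof is correct.
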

\begin{proof}
We first note that
\begin{multline}
 \varPsi^g_{R,N}(t,\Cdot) = \re^{-t Q_{R,N}}(\phi_{R,N}^g-z_{R,N})
 \\
 -
 \int_0^t \re^{-(t-s)Q_{R,N}} \big(P'(\varPhi^g_{R,N}(s,\Cdot),c_{R,N})-(\varPhi^g_{R,N}(s,g))^{n-1}g(\Cdot)\big)\,\rd s.
\end{multline}
It holds $\phi_{R,N}^g-z_{R,N}\in L_2^1(\bS_R)$ and $P'(\varPhi^g_{R,N},c_{R,N})\in C([0,\infty),L_2(\bS_R))$  almost surely by Lemma~\ref{lem:embedding_sphere}. The statement follows from the regularizing properties of the semigroup $(\re^{-t Q_{R,N}})_{t\in[0,\infty)}$.
\end{proof}

\begin{dfn}
For $l\in\{0,\ldots,n-2\}$ and $m\in\{l,\ldots,n-1\}$ we define
\begin{equation}
 a_{m,l}:=-a_{m+1}\,(m+1)!/(m-l)!l!,
\end{equation}
where $(a_m)_{m\in\{1,\ldots,n\}}$ are the coefficients of the polynomial $P(\tau)$. 
\end{dfn}

\begin{dfn}
By definition $Z_{R,N}^{:0:}:=1$ and for $m\in\{1,\ldots,n-1\}$ 
\begin{equation}
 Z_{R,N}^{:m:}:=\sum_{k=0}^{\floor{m/2}} \frac{(-1)^k m!}{(m-2k)!k!2^k} (c_{R,N})^k Z_{R,N}^{m-2k}.
\end{equation} 
\end{dfn}

Note that it holds
\begin{equation}
 P'(\varPsi^g_{R,N}+Z_{R,N},c_{R,N}) 
 = 
 (\varPsi^g_{R,N})^{n-1}
 -
 \sum_{l=0}^{n-2}\sum_{m=l}^{n-1}  a_{m,l}\, Z_{R,N}^{:m-l:} 
 \,(\varPsi^g_{R,N})^l,
\end{equation}
where $P'(\tau,c):=\partial_\tau P(\tau,c)$.  Consequently, Eq.~\eqref{eq:varPsi_pde} can be rewritten in the form
\begin{multline}\label{eq:varPsi_pde2}
 (\partial_t + Q_{R,N})\varPsi^g_{R,N} 
 + (\varPsi^g_{R,N})^{n-1}
 \\
 =
 \sum_{l=0}^{n-2}\sum_{m=l}^{n-1} a_{m,l}\, Z_{R,N}^{:m-l:} 
 \,(\varPsi^g_{R,N})^l
 +
 ((\varPsi^g_{R,N}+Z_{R,N})(\Cdot,g))^{n-1}g.
\end{multline}

\section{Stereographic projection}\label{sec:stereographic}

\begin{dfn}
For $R\in[1,\infty)$ we define 
\begin{equation}
 \bS_R:=\{\rx=(\rx_1,\rx_2,\rx_3)\in\bR^3\,|\, \rx_1^2+\rx_2^2+\rx_3^2=R^2\}.
\end{equation}
For $R\in[1,\infty)$ and $\rx,\ry\in\bS_R$ we denote by $d_R(\rx,\ry)$ the length of the shortest curve in $\bS_R\subset\bR^3$ connecting $\rx$ and $\ry$. For $R\in[1,\infty)$ we denote by $\rho_R$ the rotationally invariant measure on $\bS_R$ normalized such that $\rho_R(\bS_R)=4\pi R^2$. For $R\in[1,\infty)$ we denote by $\Delta_R$ the Laplace-Beltrami operator on $\bS_R$. We denote by $\Delta$ the Laplace operator on $\bR^2$.
\end{dfn}

\begin{dfn}\label{dfn:stereographic}
For $R\in[1,\infty)$ the map $\jmath_R\,:\bR^2\to\bS_R\setminus\{(0,0,R)\}\subset \bR^3$ is defined by (cf. Fig.~\ref{fig})
\begin{equation}
 \jmath_R(x_1,x_2):=\rx\equiv (\rx_1,\rx_2,\rx_3) = \frac{R\,(4R x_1,4R x_2,x_1^2+x_2^2-4R^2)}{4R^2+x_1^2+x_2^2}.
\end{equation} 
We call $(x_1,x_2)=x\in\bR^2$ the stereographic coordinates of $\rx\in\bS_R\setminus\{(0,0,R)\}$. 
We denote by $\jmath_R^*$ the pullback by $\jmath_R$.
We also set $w_R(x) := 16R^4/(4R^2+x_1^2+x_2^2)^2$.
\end{dfn}

\begin{figure}\label{fig}
\begin{center}
  \includegraphics[width=0.7\linewidth]{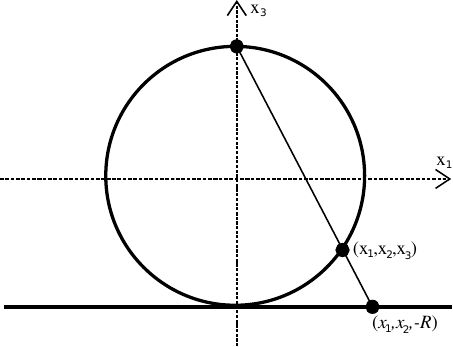}
  \caption{Stereographic projection} 
\end{center}
\end{figure}

\begin{rem}\label{rem:pulback_dist}
If $f\in C(\bS_R)$, then $\jmath_R^*f=f\circ\jmath_R\in C(\bR^2)$. Note that for $f\in C^\infty_\rc(\bR^2)$ the function $f\circ \jmath_R^{-1}\in C_\rc(\bS_R\setminus\{(0,0,R)\})$ has unique smooth extension to $\bS_R$. If $\phi\in\sD'(\bS_R)$, then $\jmath_R^*\phi\in\sS'(\bR^2)$ is defined by $\langle \jmath_R^*\phi,f\rangle := \langle \phi,(w_R^{-1}f)\circ \jmath_R^{-1}\rangle$ for all $f\in C^\infty_\rc(\bR^2)$. 
\end{rem}

\begin{rem}
In what follows, the function $w_R$ will play a prominent role. Note that the measure $\rho_R$ and the Laplace-Beltrami operator $\Delta_R$ on $\bS_R$ written in the stereographic coordinates take the following forms $w_R(x)\,\rd x$ and $w_R^{-1}(x)\Delta$. More precisely, the following identities
\begin{equation}
 \int_{\bS_R} f(\rx)\,\rho_R(\rd\rx) = \int_{\bR^2} \jmath_R^*f(x)\,w_R(x)\,\rd x,
 \qquad
 \jmath_R^*\Delta_R = w_R^{-1} \Delta\jmath_R^*\,
\end{equation}
are true.
\end{rem}

\section{A priori bound}\label{sec:a_priori_bound}

Note that Eq.~\eqref{eq:varPsi_pde2} takes the following form in the stereographic coordinates 
\begin{multline}\label{eq:spde_psi_stereographic}
 (\partial_t + (1- w_R^{-1}(x)\Delta)(1-w_R^{-1}(x)\Delta/N^2)^2)\jmath_R^*\varPsi^g_{R,N}(t,x) + (\jmath_R^*\varPsi^g_{R,N}(t,x))^{n-1}
 \\
 = 
 \sum_{l=0}^{n-2}\sum_{m=l}^{n-1} a_{m,l}\, \jmath_R^* Z_{R,N}^{:m-l:}(t,x)
 \,(\jmath_R^*\varPsi^g_{R,N}(t,x))^l
 \\
 +
 (\jmath_R^*(\varPsi^g_{R,N}+Z_{R,N})(t,w_R\jmath_R^* g))^{n-1}\jmath_R^*g(x)\,.
\end{multline}
In this section we prove an a priori bound by multiplying both sides of the above equation by $v_L\jmath_R^*\varPsi^g_{R,N}$, where $v_L$ is a suitable weight, and integrating over $\bR^2$. The bound is stated in the proposition below and is used in the next section to prove the existence of the infinite volume limit.

\begin{dfn}
Let $v_L:= \frac{1}{4\pi L^2} w_L^8$, where $L\in[1,\infty)$ is fixed as in Lemma~\ref{lem:weights_derivatives}.
\end{dfn}

\begin{rem}
The precise choice of the weight $v_L$ is not of much importance. It is convenient to use a weight that decays polynomially and express it in terms of the function $w_R$ introduced in Def.~\ref{dfn:stereographic}. The prefactor $\frac{1}{4\pi L^2}$ guarantees that the $L_1(\bR^2)$ norm of the weight is bounded by $1$ and the decay rate is chosen so that the estimate stated in Remark~\ref{rem:L_two_interms_of_L_n} is true.
\end{rem}

\begin{prop}\label{prop:energy_method}
There exist $\kappa\in(0,\infty)$, $C\in(0,\infty)$, $p\in[1,\infty)$ and a ball $B\subset\sS(\bR^2)$ with respect to some Schwartz semi-norm centered at the origin such that for all $t\in(0,\infty)$ and $R,N\in\mathbb{N}_+$, $R\geq L$, as well as all $g\in C^\infty(\bS_R)$, $w_R\jmath_R^* g\in B$, it holds
\begin{equation}
8\,\partial_t\|\jmath_R^* \varPsi^g_{R,N}(t,\Cdot)\|^2_{L_2(\bR^2,v_L^{1/2})}
 +
 \|\jmath_R^* \varPsi^g_{R,N}(t,\Cdot)\|^n_{L_n(\bR^2,v_L^{1/n})}
 \\
\leq C \sum_{k=0}^{n-1}\,
\|\jmath_R^*Z_{R,N}^{:k:}(t,\Cdot)\|^p_{L_p^{-\kappa}(\bR^2, v_L^{1/p})}\,.
\end{equation}
\end{prop}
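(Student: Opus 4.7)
The plan is to test Eq.~\eqref{eq:spde_psi_stereographic} against $v_L \jmath_R^*\varPsi^g_{R,N}$ in the flat $L_2(\bR^2,\rd x)$ pairing and integrate. Abbreviate $\Psi := \jmath_R^*\varPsi^g_{R,N}(t,\Cdot)$ and $\mathsf{Z}_k := \jmath_R^* Z_{R,N}^{:k:}(t,\Cdot)$. The time derivative then contributes $\tfrac12\partial_t\|\Psi\|^2_{L_2(v_L^{1/2})}$, and the purely dissipative term $\Psi^{n-1}$ on the LHS of Eq.~\eqref{eq:spde_psi_stereographic} contributes the coercive quantity $\|\Psi\|^n_{L_n(v_L^{1/n})}$. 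It remains to treat the operator $(1-w_R^{-1}\Delta)(1-w_R^{-1}\Delta/N^2)^2$ and the stochastic right-hand side.

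For the operator term I would integrate by parts each Laplacian against $v_L \Psi$. The leading contribution from $-w_R^{-1}\Delta$ becomes $\int v_L w_R^{-1}|\nabla\Psi|^2\,\rd x$ plus a commutator of $\nabla$ with the weight $v_L w_R^{-1}$, while the iterated factor $(1-w_R^{-1}\Delta/N^2)^2$ produces analogous strictly positive higher-order Dirichlet quantities with commutator remainders. Lemma~\ref{lem:weights_derivatives} controls the derivatives of $v_L/w_R$ by $v_L/w_R$ itself (up to polynomial factors), so each commutator can be absorbed into a small fraction of the corresponding positive Dirichlet energy, leaving a nonnegative remainder which I simply discard. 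Uniformity in $R \geq L$ is ensured by the fast polynomial decay of $v_L = \tfrac{1}{4\pi L^2} w_L^8$ relative to $w_R$, which is exactly why $L$ was fixed via Lemma~\ref{lem:weights_derivatives}.

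For the Wick-polynomial terms $\sum a_{m,l}\,\mathsf{Z}_{m-l}\,\Psi^l$ on the right-hand side, after multiplying by $v_L\Psi$ I estimate in duality with negative-regularity spaces: for each $l \in \{0,\ldots,n-2\}$ and $m \geq l$,
\begin{equation}
\Bigl|\int_{\bR^2} v_L\, \mathsf{Z}_{m-l}\, \Psi^{l+1} \,\rd x\Bigr|
\leq \|\mathsf{Z}_{m-l}\|_{L_p^{-\kappa}(\bR^2, v_L^{1/p})}\, \bigl\|v_L^{(p-1)/p}\Psi^{l+1}\bigr\|_{L_{p'}^{\kappa}(\bR^2)}
\end{equation}
with $p$ large and $\kappa$ small. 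A fractional Leibniz rule distributes the $\kappa$ derivatives over the $l+1 \leq n-1$ copies of $\Psi$, and a weighted Sobolev embedding on $\bR^2$ interpolates between $\|\Psi\|_{L_2(v_L^{1/2})}$ and $\|\Psi\|_{L_n(v_L^{1/n})}$. Young's inequality finally returns a small multiple of $\|\Psi\|^n_{L_n(v_L^{1/n})}$, absorbed into the LHS, plus a remainder $C\,\|\mathsf{Z}_{m-l}\|^p_{L_p^{-\kappa}(v_L^{1/p})}$, exactly as stated in the proposition. The $g$-term $\bigl(\jmath_R^*(\varPsi^g_{R,N}+Z_{R,N})(t,w_R\jmath_R^*g)\bigr)^{n-1}\,\int v_L\Psi\,\jmath_R^*g\,\rd x$ is handled by H\"older on both factors; the Schwartz-ball smallness of $w_R\jmath_R^*g \in B$ makes every resulting contribution small enough to be reabsorbed into the LHS or into the $\|\mathsf{Z}_k\|$ terms.

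The main obstacle I foresee is the simultaneous choice of $(p,\kappa)$ uniformly in $(R,N)$. One needs $\kappa$ large enough that the negative-regularity norm of $\mathsf{Z}_{m-l}$ is finite and bounded uniformly in $N$ (overcoming the logarithmic growth of the counterterm $c_{R,N}$ inherent in Wick ordering), yet small enough that the positive-regularity dual norm of $\Psi^{l+1}$ is genuinely controlled by the coercive norms on the LHS through fractional Leibniz and a weighted Sobolev embedding on $\bR^2$. The commutator bookkeeping between $v_L$, $w_R$, and the stereographic pullback must also remain $R$-uniform. Once these compatibilities are arranged, the remainder of the argument is a weighted energy estimate of a fairly standard shape.
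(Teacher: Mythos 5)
Your overall strategy coincides with the paper's: test Eq.~\eqref{eq:spde_psi_stereographic} against $v_L\jmath_R^*\varPsi^g_{R,N}$, absorb the Wick polynomial terms via a duality/fractional-Leibniz/interpolation/Young estimate (exactly the content of Lemma~\ref{lem:Z_Psi_n}), and handle the $g$-term by H\"older together with the Schwartz-ball smallness. Those parts of your sketch are accurate.

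There is, however, a genuine misstep in your treatment of the operator $(1-w_R^{-1}\Delta)(1-w_R^{-1}\Delta/N^2)^2$. You claim that after integrating by parts, the commutator terms involving $\nabla(v_L w_R^{-1})$ can be absorbed into a small fraction of the Dirichlet energy, ``leaving a nonnegative remainder which I simply discard.'' That is not what happens. After absorption you are left with a \emph{negative} remainder of the form $-C\|\Psi\|^2_{L_2(\bR^2,\,w_R^{-p/2}v_L^{1/2})}$ for $p\in\{1,2,3\}$ (this is exactly what Lemma~\ref{lem:weights_derivatives} records; you have also misdescribed that lemma, which states coercivity inequalities, not derivative bounds on $v_L/w_R$). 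Because $w_R\le 1$, these weighted $L_2$ norms are \emph{larger} than $\|\Psi\|^2_{L_2(\bR^2,v_L^{1/2})}$, so the remainder cannot be absorbed into the zeroth-order ``$1$'' piece of $Q_{R,N}$, and it certainly cannot be dropped. The paper's mechanism is Remark~\ref{rem:L_two_interms_of_L_n}: the specific choice $v_L=\tfrac{1}{4\pi L^2}w_L^8$ with $R\ge L$ guarantees $\|w_R^{-p/2}v_L^{(n-2)/2n}\|_{L_{2n/(n-2)}(\bR^2)}\le 1$, so
\begin{equation}
\|\Psi\|_{L_2(\bR^2,\,w_R^{-p/2}v_L^{1/2})}\le\|\Psi\|_{L_n(\bR^2,\,v_L^{1/n})},
\end{equation}
and then Young's inequality trades $\|\Psi\|^2_{L_n(v_L^{1/n})}$ for a small multiple of $\|\Psi\|^n_{L_n(v_L^{1/n})}$ plus a constant, the latter being absorbed into the $k=0$ term of the right-hand sum. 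Without this step the energy estimate does not close, so the point is not cosmetic: the ``fast polynomial decay of $v_L$ relative to $w_R$'' that you invoke is actually put to work here, in Remark~\ref{rem:L_two_interms_of_L_n}, rather than in the derivative bounds underlying Lemma~\ref{lem:weights_derivatives}.
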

\begin{proof}
After multiplying both sides of Eq.~\eqref{eq:spde_psi_stereographic} by $v_L\jmath_R^*\varPsi^g_{R,N}$, integrating over space and applying Lemma~\ref{lem:weights_derivatives} and Remark~\ref{rem:L_two_interms_of_L_n} we obtain
\begin{multline}
1/2~\partial_t\|\jmath_R^* \varPsi^g_{R,N}(t,\Cdot)\|^2_{L_2(\bR^2,v_L^{1/2})}
 +
  1/2~\|\vec\nabla\jmath_R^* \varPsi^g_{R,N}(t,\Cdot)\|^2_{L_2(\bR^2,v_L^{1/2})}   
\\
+
1/8~\|\jmath_R^* \varPsi^g_{R,N}(t,\Cdot)\|^n_{L_n(\bR^2,v_L^{1/n})}
\leq
R^{(1)}_{R,N}(t) +R^{(2)}_{R,N}(t),
\end{multline}
where
\begin{equation}
\begin{aligned}
R^{(1)}_{R,N}(t) &= \sum_{l=0}^{n-2}\sum_{m=l}^{n-1} \int_{\bR^2} a_{m,l}\,v_L(x)\,\jmath_R^*Z_{R,N}^{:m-l:}(t,x)
 \,(\jmath_R^*\varPsi^g_{R,N}(t,x))^{l+1} \,\rd x,
\\
R^{(2)}_{R, N}(t) &= 
(\jmath_R^*(\varPsi^g_{R,N}+Z_{R,N})(t,w_R\jmath_R^* g))^{n-1}\,
(\jmath_R^*\varPsi^g_{R,N})(t,v_L\jmath_R^*g).
\end{aligned}
\end{equation}
By Lemma~\ref{lem:Z_Psi_n} for every $\delta_1\in(0,1)$ there exists $C_1\in(0,\infty)$ such that
\begin{multline}
|R_{R,N}^{(1)}(t)|\leq C_1 \sum_{k=0}^{n-1}\,\|\jmath_R^*Z_{R,N}^{:k:}(t,\Cdot)\|^{p}_{L_p^{-\kappa}(\bR^2,v_L^{1/p})}
\\
+\delta_1\,\|\nabla\jmath_R^*\varPsi^g_{R,N}(t,\Cdot)\|^2_{L_2(\bR^2, v_L^{1/2})}
+ \delta_1\,\|\jmath_R^*\varPsi^g_{R,N}(t,\Cdot) \|^n_{L_n(\bR^2, v_L^{1/n})},\,
\end{multline}
where $k=0$ term of the sum above is a constant. Furthermore, by H{\"o}lder's inequality and elementary estimates there exists $C_2\in(0,\infty)$ such that for all $\delta_2\in(0,1)$ it holds
\begin{equation}
|R^{(2)}_{R,N}(t)| \leq C_2\,\delta_2^n\,
\|\jmath_R^*\varPsi^g_{R,N}(t,\Cdot)\|^n_{L_n(\bR^2, v_L^{1/n})}
+
C_2\,\delta_2^n\,\|\jmath_R^*Z_{R,N}\|^n_{L_n^{-\kappa}(\bR^2, v_L^{1/n})}
\end{equation}
provided
\begin{equation}
 \|v_L^{(n-1)/n}\jmath_R^*g\|_{L_{n/(n-1)}(\bR^2)}
 \vee
 \|v_L^{-1/n}w_R\jmath_R^*g\|_{L_{n/(n-1)}(\bR^2)}
 \vee
 \|v_L^{-1/n}w_R\jmath_R^*g\|_{L^\kappa_{n/(n-1)}(\bR^2)}
 \leq \delta_2.
\end{equation}
We choose $\delta_1,\delta_2$ such that $\delta_1\leq1/2$ and $\delta_1+C_2\,\delta_2^n\leq1/16$. This finishes the proof.
\end{proof}

\begin{lem}\label{lem:weights_derivatives}
There exists $L\in[1,\infty)$ such that for all $R\in[L,\infty)$ it holds 
\begin{enumerate}
\item[(A)] $\langle\Psi, v_L (-w^{-1}_R \Delta)\Psi\rangle_{L_2(\bR^2)} 
\geq
1/2\,\|\vec\nabla\Psi\|_{L_2(\bR^2,w^{-1/2}_Rv_L^{1/2})}^2
-1/8\,\|\Psi\|^2_{L_2(\bR^2,w_R^{-1/2} v_L^{1/2})}$,

\item[(B)] $\langle\Psi, v_L (-w^{-1}_R\Delta)^2 \Psi\rangle_{L_2(\bR^2)}
\geq 
1/2\,\|\Delta\Psi\|_{L_2(\bR^2,w^{-1}_Rv_L^{1/2})}^2
-1/8\,\|\Psi\|^2_{L_2(\bR^2,w_R^{-1} v_L^{1/2})}$,

\item[(C)] $\langle\Psi, v_L (-w^{-1}_R\Delta)^3 \Psi\rangle_{L_2(\bR^2)}
\geq 
1/2\,\|\vec\nabla\Delta\Psi\|_{L_2(\bR^2,w^{-3/2}_Rv_L^{1/2})}^2
-1/8\,\|\Psi\|^2_{L_2(\bR^2,w_R^{-3/2} v_L^{1/2})}$.
\end{enumerate}
\end{lem}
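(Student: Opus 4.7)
The plan is to prove all three statements by integration by parts, exploiting the fact that the weights $V_k := v_L\,w_R^{-k}$ ($k=1,2,3$) have logarithmic derivatives that shrink uniformly as $L\to\infty$. First I would establish the following pointwise weight bound: since $v_L = \frac{1}{4\pi L^2}\,w_L^8$ and $w_R(x) = 16R^4/(4R^2+|x|^2)^2$, the function $\log V_k$ is an explicit linear combination of $\log(4L^2+|x|^2)$ and $\log(4R^2+|x|^2)$ whose $j$-th derivatives are sums of terms of the form $|x|^{j'}/(a^2+|x|^2)^{j''}$ with $a\in\{2L,2R\}$. Elementary estimates then yield $|\vec\nabla^j\log V_k|(x)\leq C_{j,k}/L^{j}$ for all $j\geq 1$, uniformly in $R\geq L$ and $x\in\bR^2$; in particular $|\vec\nabla V_k|\leq (C/L)V_k$, $|\Delta V_k|\leq (C/L^2)V_k$, and similarly for higher derivatives.

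For (A), a single integration by parts gives
\begin{equation*}
 \int v_L\,\Psi\,(-w_R^{-1}\Delta\Psi)\,\rd x
 = \int V_1\,|\vec\nabla\Psi|^2\,\rd x + \tfrac{1}{2}\int\vec\nabla V_1\cdot\vec\nabla(\Psi^2)\,\rd x
 = \|\vec\nabla\Psi\|^2_{L_2(\bR^2,w_R^{-1/2}v_L^{1/2})} - \tfrac{1}{2}\int\Psi^2\,\Delta V_1\,\rd x.
\end{equation*}
The last term is bounded by $(C/2L^2)\|\Psi\|^2_{L_2(\bR^2,w_R^{-1/2}v_L^{1/2})}$ using $|\Delta V_1|\leq (C/L^2)V_1$, and is $\leq \tfrac{1}{8}\|\Psi\|^2_{L_2(\bR^2,w_R^{-1/2}v_L^{1/2})}$ for $L$ large enough, giving (A) with the even better constant $1$ in front of the gradient term.

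Next I would attack (B) and (C) by iterating the integration by parts twice and three times, respectively. Self-adjointness of $T:=w_R^{-1}\Delta$ on $L_2(\bR^2,w_R\,\rd x)$, which reflects $w_R^{-1}\Delta=\jmath_R^*\Delta_R$, allows the rewriting $\langle\Psi,v_L T^m\Psi\rangle_{L_2(\bR^2)} = \langle T^a(v_L w_R^{-1}\Psi),T^b\Psi\rangle_{L_2(\bR^2,w_R\,\rd x)}$ for any $a+b=m$, which I would use to organize the computation. After all derivatives are transferred onto $\Psi$, the leading term is $\int V_2|\Delta\Psi|^2\,\rd x$ in (B) and $\int V_3|\vec\nabla\Delta\Psi|^2\,\rd x$ in (C), matching the right-hand sides. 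The remaining cross terms are products of derivatives of $\Psi$ of order $<m$ against derivatives of $V_k$ and $w_R$, each carrying a factor of order $1/L^j$ for some $j\geq 1$. Intermediate derivative quantities not present on the right-hand side would be eliminated via the identity
\begin{equation*}
 \int W|\vec\nabla\Psi|^2\,\rd x = -\int W\,\Psi\,\Delta\Psi\,\rd x + \tfrac{1}{2}\int\Psi^2\,\Delta W\,\rd x
\end{equation*}
and its higher-order analogue, followed by Young's inequality: in (B) the intermediate $\int W|\vec\nabla\Psi|^2$ terms would be reduced to multiples of $\int V_2|\Delta\Psi|^2$ and $\int V_2\Psi^2$, and in (C) the intermediate $\int W|\Delta\Psi|^2$ terms to $\int V_3|\vec\nabla\Delta\Psi|^2$ and $\int V_3\Psi^2$. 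Since every reduction keeps its $O(1/L^j)$ prefactor, for $L$ large enough the combined error is below the prescribed $\tfrac{1}{2}$ and $\tfrac{1}{8}$ thresholds.

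The main technical obstacle is the bookkeeping in (C), where three-fold integration by parts produces a sizable collection of cross terms involving $|\Delta\Psi|^2$, $|\vec\nabla\Psi|^2$ and $|\Psi|^2$ with many distinct weights built from derivatives of $V_1,V_2,V_3$ and $w_R$. Verifying that the recursive reduction closes without destroying the $1/L^j$ smallness is tedious but routine; it reduces to the pointwise weight bounds above together with elementary chain- and Leibniz-rule estimates.
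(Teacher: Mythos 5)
Your proposal is correct and follows essentially the same route as the paper: pointwise weight-derivative bounds of the form $|\vec\nabla w_R^{-1/2}|\lesssim w_R^{-1/2}/L$, $|\vec\nabla v_L^{1/2}|\lesssim v_L^{1/2}/L$ (you record all orders, the paper records order one and leaves the rest implicit), integration by parts combined with the Leibniz rule and Young's inequality to isolate the leading term, and Poincar\'e-type reductions to eliminate the intermediate derivative norms in (B) and (C) — your identity $\int W|\vec\nabla\Psi|^2 = -\int W\Psi\Delta\Psi + \tfrac12\int\Psi^2\Delta W$ and its iterate are precisely the content of the paper's two displayed auxiliary inequalities. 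The only cosmetic difference is the direction: you integrate by parts starting from the pairing $\langle\Psi,v_L(-w_R^{-1}\Delta)^m\Psi\rangle$, whereas the paper starts from the gradient norm on the right-hand side, which buys you a factor $1$ rather than $1/2$ in front of the top-order term in (A).
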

\begin{proof}
There exists $C\in(0,\infty)$ such that for all $L\in[1,\infty)$, $R\in[L,\infty)$ it holds
\begin{equation}
|\vec\nabla w^{-1/2}_R|\leq C\,w_R^{-1/2}/L,
\qquad
|\vec\nabla v^{1/2}_L|\leq C\,v_L^{1/2}/L.
\end{equation}
This gives readily (A) by integrating by parts in  $\|\vec\nabla\Psi\|_{L_2(\bR^2,w^{-1/2}_Rv_L^{1/2})}^2$  applying the
Leibniz rule and the Young inequality. Estimates (B) and (C) are obtained analogously, with the help of the following auxiliary inequalities
\begin{equation}
 \|\vec\nabla\Psi\|_{L_2(\bR^2,w^{-1}_Rv_L^{1/2})}^2
 \leq 
 2\|\Delta\Psi\|_{L_2(\bR^2,w^{-1}_Rv_L^{1/2})}^2
 +
 2\|\Psi\|^2_{L_2(\bR^2,w_R^{-1} v_L^{1/2})}\,,
\end{equation}
\begin{equation}
 \|\vec\nabla\Psi\|_{L_2(\bR^2,w^{-3/2}_Rv_L^{1/2})}^2
 +
 \|\Delta\Psi\|_{L_2(\bR^2,w^{-3/2}_Rv_L^{1/2})}^2
 \leq 
 4\|\vec\nabla\Delta\Psi\|_{L_2(\bR^2,w^{-3/2}_Rv_L^{1/2})}^2
 +
 4\|\Psi\|^2_{L_2(\bR^2,w_R^{-3/2} v_L^{1/2})}
\end{equation}
valid for sufficiently big $L\in[1,\infty)$ and all $R\in[L,\infty)$. The latter inequalities are proven by the same token as (A).
\end{proof}

\begin{rem}\label{rem:L_two_interms_of_L_n}
For all $L\in[1,\infty)$, $R\in[L,\infty)$ and $p\in\{1,2,3\}$ it holds
\begin{equation}
 \|\Psi\|_{L_2(\bR^2,w_R^{-p/2} v_L^{1/2})}
 \leq
 \|w_R^{-p/2} v_L^{(n-2)/2n}\|_{L_{2n/(n-2)}(\bR^2) }
 \, 
 \|\Psi\|_{L_n(\bR^2,v_L^{1/n})}
 \leq
 \|\Psi\|_{L_n(\bR^2,v_L^{1/n})}.
\end{equation} 
\begin{equation}
\begin{aligned}
 \|\Psi\|_{L_2(\bR^2,w_R^{-p/2} v_L^{1/2})}
 &\leq
 \|w_R^{-p/2} v_L^{(n-2)/2n}\|_{L_{2n/(n-2)}(\bR^2) }
 \,
 \|\Psi\|_{L_n(\bR^2,v_L^{1/n})}
 \\
 &\leq
 \|\Psi\|_{L_n(\bR^2,v_L^{1/n})}.
\end{aligned}
\end{equation}
\end{rem}

\section{Tightness}\label{sec:tightness}

\begin{prop}\label{prop:tightness}
Let $\kappa\in(0,\infty)$. There exists a ball $B\subset\sS(\bR^2)$ with respect to some Schwartz semi-norm centered at the origin and a constant $C\in(0,\infty)$ such that for all $R\in\bN_+$, $R\geq L$,  $N\in \bN_+$ and all $g\in C^\infty(\bS_R)$, $w_R\jmath_R^* g\in B$, it holds
\begin{equation}
 \int\|\jmath^*_R\phi\|^n_{L_n^{-\kappa}(\bR^2,v_L^{1/n})}\,\mu^g_{R,N}(\rd\phi)\leq C.
\end{equation}
\end{prop}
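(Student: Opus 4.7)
The plan is to upgrade the differential inequality of Proposition~\ref{prop:energy_method} into the desired moment bound on the invariant measure $\mu^g_{R,N}$ via a standard time-averaging argument, combining the stationarity of $Z_{R,N}$ and $\varPhi^g_{R,N}$ from Lemma~\ref{lem:Law-Z-phi} with uniform Wiener chaos estimates on the Wick powers of the free field (supplied by Appendix~\ref{sec:stochastic_estimates}).

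Concretely, I would first integrate the inequality of Proposition~\ref{prop:energy_method} over $t\in[0,T]$, discard the nonnegative endpoint contribution $8\|\jmath_R^*\varPsi^g_{R,N}(T,\cdot)\|^2_{L_2(\bR^2,v_L^{1/2})}$, take expectation and divide by $T$:
\[
\frac{1}{T}\int_0^T\bE\|\jmath_R^*\varPsi^g_{R,N}(t,\cdot)\|^n_{L_n(\bR^2,v_L^{1/n})}\,\rd t
\leq \frac{8\,\bE\|\jmath_R^*\varPsi^g_{R,N}(0,\cdot)\|^2_{L_2(\bR^2,v_L^{1/2})}}{T}
+\frac{C}{T}\int_0^T\sum_{k=0}^{n-1}\bE\|\jmath_R^*Z^{:k:}_{R,N}(t,\cdot)\|^p_{L_p^{-\kappa}(\bR^2,v_L^{1/p})}\,\rd t.
\]
By Lemma~\ref{lem:Law-Z-phi}, $\mathrm{Law}(Z_{R,N}(t,\cdot))=\nu_{R,N}$ for every $t$, so the last summand is actually $t$-independent, and by the hypercontractivity estimates of Appendix~\ref{sec:stochastic_estimates} it is bounded by a constant independent of $R$ and $N$. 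The initial-data contribution is finite for each fixed $R,N$: $\nu_{R,N}$ has all polynomial moments, and Cauchy--Schwarz together with the uniform exponential integrability $\bE_{\nu_{R,N}}\exp(-p\,Y_{R,N}^g)<\infty$ from Proposition~\ref{prop:uv_limit} yields the same for $\mu^g_{R,N}$. Letting $T\to\infty$ kills the initial-data term, giving
\[
\limsup_{T\to\infty}\frac{1}{T}\int_0^T\bE\|\jmath_R^*\varPsi^g_{R,N}(t,\cdot)\|^n_{L_n(\bR^2,v_L^{1/n})}\,\rd t\leq C'
\]
uniformly in $R$ and $N$.

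To conclude, I would convert the above $\varPsi$-bound into the desired $\varPhi$-bound. Writing $\varPhi^g_{R,N}=\varPsi^g_{R,N}+Z_{R,N}$ and using the continuous embedding $L_n(\bR^2,v_L^{1/n})\hookrightarrow L_n^{-\kappa}(\bR^2,v_L^{1/n})$, the triangle inequality gives
\[
\|\jmath_R^*\varPhi^g_{R,N}(t,\cdot)\|^n_{L_n^{-\kappa}(\bR^2,v_L^{1/n})}
\lesssim \|\jmath_R^*\varPsi^g_{R,N}(t,\cdot)\|^n_{L_n(\bR^2,v_L^{1/n})}
+\|\jmath_R^*Z_{R,N}(t,\cdot)\|^n_{L_n^{-\kappa}(\bR^2,v_L^{1/n})}.
\]
Taking expectation, averaging over $t\in[0,T]$, and invoking $\mathrm{Law}(\varPhi^g_{R,N}(t,\cdot))=\mu^g_{R,N}$ (Lemma~\ref{lem:Law-Z-phi}), the left-hand side equals the integral $\int\|\jmath_R^*\phi\|^n_{L_n^{-\kappa}(\bR^2,v_L^{1/n})}\,\mu^g_{R,N}(\rd\phi)$ independently of $T$, while the right-hand side in the limit $T\to\infty$ is controlled by the previous step and an additional uniform bound on $\bE\|\jmath_R^*Z_{R,N}(t,\cdot)\|^n_{L_n^{-\kappa}(v_L^{1/n})}$ from Appendix~\ref{sec:stochastic_estimates}.

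The principal obstacle is securing the uniform-in-$R,N$ Wiener chaos bounds on the weighted negative-Sobolev norms of the Wick powers $Z^{:k:}_{R,N}$ pulled back to the entire plane; the polynomial weight $v_L$ was introduced in the previous section precisely to enforce enough decay at infinity for such bounds to hold, and their verification is the role of Appendix~\ref{sec:stochastic_estimates}. Once granted, the remaining ingredients---the energy inequality, stationarity of $Z_{R,N}$ and $\varPhi^g_{R,N}$, triangle inequality and time-averaging---are routine.
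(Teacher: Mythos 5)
Your proposal is correct and follows essentially the same route as the paper: take expectations in the energy inequality of Proposition~\ref{prop:energy_method}, exploit stationarity from Lemma~\ref{lem:Law-Z-phi}, time-average, control the Wick-power contributions via Lemma~\ref{lem:stochastic_bound_infinite_volume}, and pass from the $\varPsi$-bound to the $\varPhi$-bound by the triangle inequality and the embedding $L_n\hookrightarrow L_n^{-\kappa}$. The only (cosmetic) difference is at the very end: you let $T\to\infty$ after checking that the initial-data constant $C_{R,N}=8\,\bE\|\jmath_R^*\varPsi^g_{R,N}(0,\cdot)\|^2_{L_2(\bR^2,v_L^{1/2})}$ is finite for each fixed $R,N$, whereas the paper sidesteps any uniform control on $C_{R,N}$ by simply choosing $T=C_{R,N}$ in the $T$-independent inequality, which yields the same uniform bound.
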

\begin{rem}\label{rem:tightness}
By Proposition~\ref{prop:uv_limit} with $F=1$ and Lemma~\ref{lem:stochastic_bound_infinite_volume} we obtain
\begin{equation}
 \int\|\jmath_R^*\phi\|^n_{L_n^{-\kappa}(\bR^2,v_L^{1/n})}\,\mu_R^g(\rd\phi)
 =
 \lim_{N\to\infty}\int\|\jmath_R^*\phi\|^n_{L_n^{-\kappa}(\bR^2,v_L^{1/n})}\,\mu_{R,N}^g(\rd\phi)
 \leq C\,.
\end{equation} 
By Theorem~\ref{thm:embedding}~(C) the embedding ${L^{-\kappa}_n(\bR^2, v_L^{1/n})}\to L^{-2\kappa}_n(\bR^2, v_L^{2/n})$ is compact. As a result, by Lemma~\ref{lem:tightness_criterion} the sequence of measures $(\jmath_R^*\sharp\mu_R)_{R\in\bN_+}$ on $L^{-2\kappa}_n(\bR^2, v_L^{2/n})$ is tight and by Prokhorov's theorem it has a weakly convergent subsequence.
\end{rem}

\begin{proof}
Recall from Lemma~\ref{lem:Law-Z-phi} that $\mathrm{Law}(\varPhi^g_{R,N}(t,\Cdot))=\mu_{R,N}$ for all $t\in[0,\infty)$. Hence,
\begin{equation}
\int\|\jmath^*_R\phi\|^n_{L_n^{-\kappa}(\bR^2,v_L^{1/n})}\,\mu^g_{R,N}(\rd\phi)
=
\bE\|\jmath_R^*\varPhi^g_{R,N}(t,\Cdot)\|^n_{L_n^{-\kappa}(\bR^2,v_L^{1/n})}\,.
\end{equation}
Since $\mathrm{Law}(X_{R, N})=\mathrm{Law}(Z_{R,N}(t,\Cdot))$ for all $t\in[0,\infty)$ by Lemma~\ref{lem:stochastic_bound_infinite_volume} and
Proposition~\ref{prop:energy_method} we have
\begin{equation}\label{eq: first priori bound}
8\,\partial_t\bE\|\jmath_R^* \varPsi^g_{R,N}(t,\Cdot)\|^2_{L_2(\bR^2,v_L^{1/2})}
 +
\bE \|\jmath_R^* \varPsi^g_{R,N}(t,\Cdot)\|^n_{L_n(\bR^2,v_L^{1/n})}
\leq C_1 \,
\end{equation}
for some constant $C_1\in(0,\infty)$ independent of $g$, $R,N$ and $t$. The above inequality implies that for all $T\in(0,\infty)$ it holds
\begin{multline}\label{eq:tightness_time_average}
\frac{1}{T}\int_0^T\bE \|\jmath_R^* \varPsi^g_{R,N}(t,\Cdot)\|^n_{L_n(\bR^2,v_L^{1/n})}\rd t
\\
\leq C_1-\frac{8}{T}\bE\|\jmath_R^*\varPsi^g_{R,N}(T,\Cdot)\|^2_{L_2(\bR^2, v_L^{1/2})}
+\frac{8}{T}\bE\|\jmath_R^*\varPsi^g_{R,N}(0,\Cdot)\|^2_{L_2(\bR^2, v_L^{1/2})}
\leq C_1 + \frac{C_{R,N}}{T}\,,
\end{multline}
where
\begin{equation}
 C_{R,N}:=8\,\bE\|\jmath_R^*\varPsi^g_{R,N}(0,\Cdot)\|^2_{L_2(\bR^2, v_L^{1/2})}\leq 8\,\bE\|\varPsi^g_{R,N}(0,\Cdot)\|^2_{L_2(\bS_R)} < \infty
\end{equation}
for every $R,N\in\bN_+$ and $R\geq L$. Using the fact that $\jmath_R^*\varPhi^g_{R,N}$ and $\jmath_R^*Z_{R,N}$ are stationary in time one deduces that 
\begin{multline}\label{eq: bound for expectation phi}
\bE\|\jmath_R^*\varPhi^g_{R,N}(0, \Cdot)\|^n_{L^{-\kappa}_n(\bR^2, v_L^{1/n})}
=
\frac{1}{T}\int_0^T \bE\|\jmath_R^*\varPhi^g_{R,N}(t, \Cdot)\|^n_{L^{-\kappa}_n(\bR^2, v_L^{1/n})}\rd t
\\
\leq 
{c}\,\bE\|\jmath_R^*Z_{R,N}(0,\Cdot)\|^n_{L^{-\kappa}_n(\bR^2, v_L^{1/n})}+ \frac{c}{T}\int_0^T\bE\|\jmath_R^*\varPsi^g_{R,N}(t, \Cdot)\|^n_{L^{-\kappa}_n(\bR^2, v_L^{1/n})}\rd t\,,
\end{multline}
where $c=2^{n-1}$. By Lemma~\ref{lem:stochastic_bound_infinite_volume} there exists $C_2\in(0,\infty)$ such that for all $R,N\in\bN_+$ it holds
\begin{equation}
 \bE\|\jmath_R^*Z_{R,N}(0,\Cdot)\|^n_{L^{-\kappa}_n(\bR^2, v_L^{1/n})}\leq C_2.
\end{equation}
Combining the bounds proved above we obtain
\begin{equation}
 \bE\|\jmath_R^*\varPhi^g_{R,N}(0, \Cdot)\|^n_{L^{-\kappa}_n(\bR^2, v_L^{1/n})}\leq c\,C_1+c\,C_2 + \frac{c\,C_{R,N}}{T}
\end{equation}
for all $T\in(0,\infty)$. Choosing $T=C_{R,N}$ concludes the proof.
\end{proof}

\section{Integrability}\label{sec:integrability}

\begin{prop}\label{prop:integrability}
There exists a ball $B\subset\sS(\bR^2)$ with respect to some Schwartz semi-norm centered at the origin such that for all $f\in B$ the bound~\eqref{eq:main_thm_bound} holds true.
\end{prop}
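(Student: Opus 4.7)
The plan is to combine the Hairer--Steele trick with the uniform moment bound of Proposition~\ref{prop:tightness} to show that the normalizing constants
\begin{equation*}
 \mathcal{Z}_R^{g_R}\;:=\;\int\exp\bigl(\phi(g_R)^n/n\bigr)\,\mu_R(\rd\phi)
\end{equation*}
of a well-chosen family of tilted measures $\mu_R^{g_R}$ are bounded by $2$ uniformly in $R\geq L$, and then to transfer this bound to the accumulation point $\mu$ via Fatou's lemma.

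\emph{Setup on the sphere.} Given $f\in\sS(\bR^2)$, I would set $g_R:=n^{1/n}\,(w_R^{-1}f)\circ\jmath_R^{-1}\in C^\infty(\bS_R)$. By Remark~\ref{rem:pulback_dist} one then has $w_R\,\jmath_R^*g_R=n^{1/n}f$ and $\phi(g_R)=n^{1/n}(\jmath_R^*\phi)(f)$; since $n$ is even this yields the key identity $\phi(g_R)^n/n=(\jmath_R^*\phi)(f)^n$. Using the pointwise bound $w_R^{-1}(x)\leq(1+|x|^2/4)^2$ valid for all $R\geq 1$, elementary estimates show that there is a Schwartz semi-norm $\|\cdot\|_\star$ such that, provided $\|f\|_\star$ is smaller than a universal constant, both~\eqref{eq:g_conditions} for $g_R$ and the ball condition on $w_R\,\jmath_R^*g_R=n^{1/n}f$ required in Proposition~\ref{prop:tightness} and Remark~\ref{rem:tightness} hold uniformly in $R\geq L$. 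In particular $\mu_R^{g_R}$ is well-defined and $\mathcal{Z}_R^{g_R}$ is finite by Proposition~\ref{prop:uv_limit}.

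\emph{Hairer--Steele estimate.} Writing $\rd\mu_R=\mathcal{Z}_R^{g_R}\exp(-\phi(g_R)^n/n)\,\rd\mu_R^{g_R}$ and applying the layer-cake formula of Lemma~\ref{lem:E_F_X},
\begin{equation*}
 \mathcal{Z}_R^{g_R}
 \;=\;1+\int_0^\infty\!\! e^t\,\mathbb{P}_{\mu_R}\!\bigl[\phi(g_R)^n/n>t\bigr]\,\rd t
 \;\leq\; 1+\mathcal{Z}_R^{g_R}\int_0^\infty\!\!\mathbb{P}_{\mu_R^{g_R}}\!\bigl[\phi(g_R)^n/n>t\bigr]\,\rd t,
\end{equation*}
and since $n$ is even the right-hand side equals $1+\mathcal{Z}_R^{g_R}\,\bE_{\mu_R^{g_R}}[(\jmath_R^*\phi)(f)^n]$. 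Rearranging,
\begin{equation*}
 \mathcal{Z}_R^{g_R}\;\leq\;\bigl(1-\bE_{\mu_R^{g_R}}[(\jmath_R^*\phi)(f)^n]\bigr)^{-1}
\end{equation*}
whenever the bracket is positive, so it remains to ensure that $\bE_{\mu_R^{g_R}}[(\jmath_R^*\phi)(f)^n]\leq 1/2$ after possibly shrinking the admissible ball of $f$.

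\emph{Moment bound and passage to the plane.} By the duality between the weighted Bessel-potential spaces $L_n^{-\kappa}(\bR^2,v_L^{1/n})$ and $L_{n/(n-1)}^{\kappa}(\bR^2,v_L^{-1/n})$ together with the polynomial decay of $v_L^{-1/n}$,
\begin{equation*}
 |(\jmath_R^*\phi)(f)|^n\;\leq\;\|\jmath_R^*\phi\|^n_{L_n^{-\kappa}(\bR^2,v_L^{1/n})}\,\|f\|^n_\star
\end{equation*}
for some Schwartz semi-norm $\|\cdot\|_\star$ independent of $R$. Integrating against $\mu_R^{g_R}$ and invoking Remark~\ref{rem:tightness}---which applies precisely because $w_R\,\jmath_R^*g_R=n^{1/n}f$ lies in the ball of Proposition~\ref{prop:tightness} by our choice of $f$---yields $\bE_{\mu_R^{g_R}}[(\jmath_R^*\phi)(f)^n]\leq C\,\|f\|^n_\star$ with $C$ independent of $R$; further shrinking the ball so that $C\,\|f\|^n_\star\leq 1/2$ gives $\mathcal{Z}_R^{g_R}\leq 2$, that is $\int\exp(\varphi(f)^n)\,\rd(\jmath_R^*\sharp\mu_R)(\varphi)\leq 2$ for all $R\geq L$. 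Since $\varphi\mapsto\exp(\varphi(f)^n)$ is continuous and nonnegative on $\sS'(\bR^2)$, Fatou's lemma for weakly convergent probability measures applied along the subsequence of Remark~\ref{rem:tightness} delivers $\int\exp(\varphi(f)^n)\,\mu(\rd\varphi)\leq 2$, which is~\eqref{eq:main_thm_bound}. The main technical obstacle is the duality estimate above: one has to identify the relevant dual of the weighted Bessel-potential space and verify that its norm on $f$ is dominated by a Schwartz semi-norm uniformly in $R$, which is routine but needs some care because both the regularity index $\kappa$ and the weight $v_L$ enter.
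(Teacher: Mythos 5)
Your proof is correct and arrives at the same conclusion, but it replaces the paper's central ingredient — the Jensen-type inequality of Lemma~\ref{lem:mu_f_bound} (quoted from Barashkov--De Vecchi), which gives $\mathcal{Z}_R^{g_R}\leq\exp\bigl(\int\phi(g_R)^n/n\,\mu_R^{g_R}(\rd\phi)\bigr)$ in one line — with a hands-on layer-cake computation: you expand $\mathcal{Z}_R^{g_R}$ by Lemma~\ref{lem:E_F_X}, dominate $\mathbb{P}_{\mu_R}[\phi(g_R)^n/n>t]$ by $\mathcal{Z}_R^{g_R}\,e^{-t}\,\mathbb{P}_{\mu_R^{g_R}}[\cdot]$, integrate, and solve the resulting self-referential inequality for $\mathcal{Z}_R^{g_R}$. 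This is a genuinely more elementary route to the same quantitative conclusion ($\mathcal{Z}_R^{g_R}\leq 2$ once the relevant first moment is $\leq 1/2$); its cost is that it only works when that moment is strictly below $1$, whereas the paper's bound $\mathcal{Z}_R^{g_R}\leq e^{1/2}$ is unconditional in the same regime and comes for free from Lemma~\ref{lem:mu_f_bound}. The remaining ingredients — identifying $\phi(g_R)$ with $(\jmath_R^*\phi)(f)$ via the stereographic pushforward, the duality estimate against the weighted Bessel-potential norm, and the input from Proposition~\ref{prop:tightness}/Remark~\ref{rem:tightness} — are the same as in the paper. Two points where your write-up is in fact cleaner than the paper's: you rescale $g_R$ so that $\phi(g_R)^n/n$ equals exactly $(\jmath_R^*\phi)(f)^n$, whereas the paper proves the bound for $\exp((\jmath_R^*\phi)(f)^n/n)$ and leaves the rescaling implicit; and you make explicit the passage $R\to\infty$ via Fatou/Portmanteau for the lower-semicontinuous nonnegative integrand $\exp(\varphi(f)^n)$, which the paper leaves to the reader. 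One item to flesh out in a full write-up: you should actually verify that the conditions~\eqref{eq:g_conditions} on $g_R$ (needed for Lemma~\ref{lem:mu_measure_well_defined} and Proposition~\ref{prop:uv_limit}) hold uniformly in $R\geq L$ once $f$ lies in a small enough Schwartz ball; the second condition involves $\Delta_R g_R$, i.e.\ $w_R^{-1}\Delta(w_R^{-1}f)$, which requires controlling derivatives of $w_R^{-1}$ uniformly in $R$ — this is indeed elementary from the explicit form of $w_R$, but the paper also glosses over it, so it is worth spelling out.
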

\begin{proof}
It follows from properties of the stereographic coordinates that for all $f\in\sS(\bR^2)$ and $R\in\bN_+$ there exists $g_R\in C^\infty(\bS_R)$ such that $w_R\jmath_R^* g_R=f$. Let $B$ be contained in the ball from the statement of Proposition~\ref{prop:tightness} and suppose that $f\in B$. Note that for arbitrary $\phi\in\sD'(\bS_R)$ it holds
\begin{equation}\label{eq:integrability_g_identity}
 \phi(g_R)=(\jmath^*_R\phi)(w_R\jmath^*_Rg_R) = (\jmath^*_R\phi)(f).
\end{equation}  
Then  by Lemma~\ref{lem:mu_f_bound} it holds 
\begin{equation}
 \int \exp\left(\phi(g_R)^n/n\right)\,\mu_{R,N}(\rd\phi) 
 \\
 \leq 
 \exp\left(\frac{1}{n}\int \phi(g_R)^n\,\mu^{g_R}_{R,N}(\rd\phi)\right)\,.
\end{equation}
Note that the expression on the LHS is integrable by Lemma~\ref{lem:mu_measure_well_defined}. The identity~\eqref{eq:integrability_g_identity}, H{\"o}lder's inequality and Proposition~\ref{prop:tightness} yield
\begin{multline}
\int \phi(g_R)^n\, \mu_{R,N}^{g_R}(\rd\phi) 
\leq  \hat C\,
\|v_L^{-1/n}w_R\jmath^*_Rg_R\|^n_{L^{\kappa}_{n/(n-1)}(\bR^2)}
\,
\int\|\jmath_R^*\phi\|^n_{L_n^{-\kappa}(\bR^2, v_L^{1/n})}\,
\mu_R^{g_R}(\rd\phi)
\\
\leq\,C\,\, \|v_L^{-1/n}w_R\jmath^*_Rg_R\|^n_{L^{\kappa}_{n/(n-1)}(\bR^2)}
\end{multline}
for some constants $\hat C,C\in(0,\infty)$ independent of $R,N$ and $g_R$. Choosing the ball $B$ so that $\|v_L^{-1/n}f\|^n_{L^{\kappa}_{n/(n-1)}(\bR^2)}\leq n/2C$ for all $f\in B$ by the above inequalities and Proposition~\ref{prop:uv_limit} we obtain
\begin{equation}
 \int \exp\left((\jmath^*_R\phi)(f)^n/n\right)\,\mu_{R}(\rd\phi)
 =
 \lim_{N\to\infty}\int \exp\left(\phi(g_R)^n/n\right)\,\mu_{R,N}(\rd\phi)
 \leq 2\,.
\end{equation}
This concludes the proof.
\end{proof}

\begin{lem}\cite[Lemma~A.7]{BarashkovSinh}\label{lem:mu_f_bound}
Let $(\Omega,\cF,\mu)$ be a probability space, $F:\, \Omega \to \bR$ be a measurable function such that $\exp(F) \in L_1(\Omega,\mu)$ and
\begin{equation}
\mu^F(\rd\phi) := \frac{\exp(F(\phi))\,\mu(\rd\phi)}{\int \exp(F(\phi)) \,\mu(\rd\phi)}\,.
\end{equation}
It holds
\begin{equation}
 \int \exp(F(\phi))\,\mu(d\phi) \leq \exp\left(\int F(\phi)\,\mu^F(d\phi)\right)\,.
\end{equation}
\end{lem}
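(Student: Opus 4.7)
The plan is to recognize the claim as a one-line consequence of Jensen's inequality applied to the convex function $\varphi\!:\,[0,\infty)\to\bR$ given by $\varphi(x):=x\log x$, with the convention $\varphi(0):=0$. First I would set $Z:=\int \exp(F(\phi))\,\mu(\rd\phi)\in(0,\infty)$ and $h:=\exp(F)$, so that by definition $\mu^F(\rd\phi)=Z^{-1}h(\phi)\,\mu(\rd\phi)$ and $F=\log h$ on $\{h>0\}$. Taking logarithms, the inequality to prove is
\begin{equation}
 \log Z \;\leq\; \int F(\phi)\,\mu^F(\rd\phi) \;=\; \frac{1}{Z}\int h(\phi)\,\log h(\phi)\,\mu(\rd\phi),
\end{equation}
or equivalently $Z\log Z\leq \int h\log h\,\rd\mu$.

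Next I would observe that this is precisely Jensen's inequality: since $\mu$ is a probability measure and $\int h\,\rd\mu=Z$, convexity of $\varphi$ gives
\begin{equation}
 \varphi\!\left(\int h\,\rd\mu\right)\;\leq\;\int \varphi(h)\,\rd\mu,
\end{equation}
which is exactly the reduced inequality. No further input from the rest of the paper is needed.

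The only point that requires any care — and what I would flag as the \textbf{main (mild) obstacle} — is the bookkeeping of integrability, since $F$ is merely measurable and $\mu$ need not assign measure zero to $\{F=-\infty\}$. Here I would use the uniform lower bound $\varphi(x)\geq -1/\re$ on $[0,\infty)$, which implies $h\log h\geq -1/\re$ pointwise, so that $\int h\log h\,\rd\mu\in[-1/\re,\infty]$ is always well-defined (with $0\log 0:=0$ on $\{h=0\}$). If this integral equals $+\infty$ then the inequality to prove is trivial; otherwise $h\log h\in L_1(\mu)$ and Jensen's inequality applies directly. Finally, the set $\{h=0\}$ contributes nothing to $\int F\,\rd\mu^F$ because $\mu^F\ll h\,\mu$, so interpreting $F$ on that set is immaterial. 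This closes the proof.
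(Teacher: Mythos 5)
The paper does not present its own proof of this lemma; it simply cites \cite[Lemma~A.7]{BarashkovSinh}. Your argument is correct and is the standard one: after taking logarithms the inequality becomes $Z\log Z \leq \int h\log h\,\rd\mu$ with $h=\exp(F)$ and $Z=\int h\,\rd\mu$, which is exactly Jensen applied to the convex function $\varphi(x)=x\log x$ under the probability measure $\mu$. Equivalently, this is the nonnegativity of the relative entropy $D(\mu^F\|\mu)=\int\log(\rd\mu^F/\rd\mu)\,\rd\mu^F\geq 0$, which is how such bounds are often phrased in the constructive QFT and stochastic-quantization literature; the two formulations are the same computation.

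Your handling of integrability via the pointwise bound $h\log h\geq -1/\re$ is the right move and makes the statement rigorous even when $\int h\log h\,\rd\mu=+\infty$. One small inaccuracy: since the hypothesis is $F\colon\Omega\to\bR$ (real-valued, not extended-real-valued), one has $h=\exp(F)>0$ everywhere, so the set $\{h=0\}$ is empty and the convention $0\log 0:=0$ is never invoked. That remark is harmless but unnecessary under the stated hypotheses; the genuine issue you correctly flag is the possible failure of $h\log h$ to be $\mu$-integrable from above, which your bound disposes of.
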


\section{Reflection positivity}\label{sec:reflection}

In this section, in Proposition~\ref{prop:reflection_positivity},  we establish the reflection positivity of every accumulation point of the sequence $(\jmath_R^*\sharp\mu_R)_{R\in\bN_+}$ of measures on $\sS'(\bR^2)$. To this end, we leverage the fact that the finite volume measure $\mu_R$ on $\sD'(\bS_R)$ is reflection positive.

\begin{dfn}
For all $R,N\in[1,\infty)$ we set
\begin{equation}
\bS_{R,N}^\pm:=\{(\rx_1,\rx_2,\rx_3)\in\bS_R\,|\,\pm\rx_1>1/N\},
\quad
\bS_{R,N} := \bS_{R,N}^+\cup \bS_{R,N}^-,
\quad
\bS_R^\pm:=\cup_{N\in[1,\infty)} \bS_{R,N}^\pm.
\end{equation}
\end{dfn}

\begin{dfn}\label{dfn:cylindircal_functionals}
Let $R\in[1,\infty)$. A functional $F\,:\sD'(\bS_R) \to \bC$ is called cylindrical iff there exists $k\in\bN_+$, $G\in C^\infty_{\mathrm{b}}(\bR^k)$ and $f_l\in  C^\infty_\rc(\bS_R):= C^\infty(\bS_R)$, $l\in\{1,\ldots,k\}$, such that
\begin{equation}\label{eq:cylindircal}
 F(\phi) = G(\phi(f_1),\ldots,\phi(f_k)).
\end{equation}
The algebra of cylindrical functions is denoted by $\cF_R$. The subalgebras of $\cF_R$ consisting of functionals of the form~\eqref{eq:cylindircal} with $\supp\,f_l\subset \bS_R^\pm$, $l\in\{1,\ldots,k\}$, or such that $\supp\,f_l\subset \bS_{R,N}^\pm$, $l\in\{1,\ldots,k\}$, are denoted by $\cF^\pm_R$ and $\cF^\pm_{R,N}$, respectively. The definitions of $\cF$ and $\cF^\pm$ are analogous to the definitions of $\cF_R$ and $\cF_R^\pm$ with $\bS_R$ and $\bS_R^\pm$ replaced by $\bR^2$ and the half-plane $\{(x_1,x_2)\in\bR^2|\pm x_1>0\}$, respectively. 
\end{dfn}

\begin{dfn}
Let $R\in[1,\infty)$. For $f\in C^\infty(\bS_R)$ we define $\Theta_R f\in C^\infty(\bS_R)$ by the formula $(\Theta_R f)(\rx_1,\rx_2,\rx_3):=f(-\rx_1,\rx_2,\rx_3)$. For $\phi\in\sD'(\bS_R)$ we define $\Theta_R \phi\in \sD'(\bS_R)$ by the formula $\langle\Theta_R\phi,f\rangle:=\langle\phi,\Theta_Rf\rangle$ for all $f\in C^\infty(\bS_R)$. For $f\in C^\infty(\bR^2)$ we define $\Theta f\in C^\infty(\bR^2)$ by the formula $(\Theta f)(x_1,x_2):=f(-x_1,x_2)$. For $\phi\in\sS'(\bR^2)$ we define $\Theta \phi\in \sS'(\bR^2)$ by the formula $\langle\Theta\phi,f\rangle:=\langle\phi,\Theta f\rangle$ for all $f\in\sS(\bR^2)$.
\end{dfn}

\begin{rem}
Note that $\jmath_R^* \Theta_R \phi = \Theta \jmath_R^* \phi$ for all $\phi\in\sD'(\bS_R)$. 
\end{rem}

\begin{prop}\label{prop:reflection_positivity}
Let $\mu$ be a weak limit of a subsequence of the sequence of measures $(\jmath_R^*\sharp\mu_R)_{R\in\bN_+}$ on $\sS'(\bR^2)$. For all $F\in\cF^+$ it holds
$
\int \overline{F(\Theta\phi)} F(\phi)\,\mu(\rd\phi) \geq 0
$.
\end{prop}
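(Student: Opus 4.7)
The plan is to transfer the known reflection positivity of each finite-volume measure $\mu_R$ on $\sD'(\bS_R)$ to its stereographic pushforward $\jmath_R^*\sharp\mu_R$, and then pass to the limit using the weak convergence of the subsequence from Remark~\ref{rem:tightness}. The key geometric input is that the stereographic projection $\jmath_R$ maps the open half-plane $\{x_1>0\}$ diffeomorphically onto $\bS_R^+$, since $\rx_1 = 4R^2 x_1/(4R^2 + x_1^2 + x_2^2)$. This makes pullback by $\jmath_R^*$ send $\cF^+$ into $\cF_R^+$.

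First I would verify this last inclusion explicitly. Given $F \in \cF^+$ of the form $F(\phi) = G(\phi(f_1),\ldots,\phi(f_k))$ with $G \in C^\infty_{\mathrm{b}}(\bR^k)$ and each $f_l \in C^\infty_\rc(\bR^2)$ supported in $\{x_1>0\}$, the pullback convention from the remark following Def.~\ref{dfn:stereographic} gives $F\circ\jmath_R^* = G(\langle\phi_R,\tilde f_1\rangle,\ldots,\langle\phi_R,\tilde f_k\rangle)$ with $\tilde f_l := (w_R^{-1} f_l)\circ\jmath_R^{-1}$. Each $\tilde f_l$ extends smoothly to $\bS_R$ (it vanishes near $(0,0,R)$ because $f_l$ has compact support) and has support contained in $\bS_R^+$ by the half-plane/half-sphere correspondence, so $F\circ\jmath_R^*\in\cF_R^+$.

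Combining this with the intertwining identity $\jmath_R^*\Theta_R = \Theta\jmath_R^*$ recorded just before the proposition, a change of variables for pushforward measures yields
\begin{equation}
\int \overline{F(\Theta\phi)}\,F(\phi)\,(\jmath_R^*\sharp\mu_R)(\rd\phi) \;=\; \int \overline{(F\circ\jmath_R^*)(\Theta_R\phi_R)}\,(F\circ\jmath_R^*)(\phi_R)\,\mu_R(\rd\phi_R) \;\geq\; 0,
\end{equation}
where the nonnegativity is the reflection positivity of $\mu_R$ applied to $F\circ\jmath_R^*\in\cF_R^+$. Since $G$ is bounded and $f_l,\Theta f_l \in \sS(\bR^2)$, the functional $\phi\mapsto\overline{F(\Theta\phi)}\,F(\phi)$ is bounded on $\sS'(\bR^2)$ and continuous in the weak topology, so passing to the limit along the weakly convergent subsequence $\jmath_R^*\sharp\mu_R \to \mu$ preserves nonnegativity and gives the claim.

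The main subtlety is ensuring that the boundedness and continuity of the test functional hold in the precise topology in which tightness and convergence were actually established, namely on the weighted Besov-type space $L^{-2\kappa}_n(\bR^2,v_L^{2/n})$ of Remark~\ref{rem:tightness}. This is resolved by noting that for each $f\in\sS(\bR^2)$ the pairing $\phi\mapsto\phi(f)$ is continuous on that space by duality, that $\Theta$ acts continuously on it, and that post-composition with the bounded smooth function $G$ preserves bounded continuity. Once this is in place, the standard Portmanteau-type statement for bounded continuous functionals closes the argument.
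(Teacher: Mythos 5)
Your proof takes essentially the same route as the paper's: change variables to pull the integral back to $\mu_R$ on $\sD'(\bS_R)$, use the intertwining $\jmath_R^*\Theta_R = \Theta\jmath_R^*$ and the inclusion $F\circ\jmath_R^*\in\cF_R^+$ (which the paper attributes to Def.~\ref{dfn:cylindircal_functionals} and Remark~\ref{rem:pulback_dist}), invoke the reflection positivity of $\mu_R$ from Lemma~\ref{lem:reflection_positivity}~(D), and pass to the limit along the weakly convergent subsequence. You supply more explicit detail than the paper on the half-plane/half-sphere support correspondence and on the bounded continuity of the cylindrical test functional, but the structure and key ingredients coincide.
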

\begin{proof}
It is enough to prove that 
\begin{equation}
 \int \overline{F(\Theta\phi)} F(\phi)\,(\jmath_R^*\sharp\mu_R)(\rd\phi) 
 = 
 \int \overline{F(\Theta \jmath_R^*\phi)} F(\jmath_R^*\phi)\,\mu_R(\rd\phi)
 = 
 \int \overline{F(\jmath_R^*\Theta_R\phi)} F(\jmath_R^*\phi)\,\mu_R(\rd\phi)
 \geq 0
\end{equation} 
for all $R\in\bN_+$ and $F\in\cF^+$. By Def.~\ref{dfn:cylindircal_functionals} and Remark~\ref{rem:pulback_dist} for every $F\in\cF^+$ it holds $F\circ\jmath_R^*\in\cF^+_R$. Hence, the last bound above follows from the reflection positivity of the measure $\mu_R$.
\end{proof}

For completeness, we prove below the reflection positivity of the measure~$\mu_R$ on $\sD'(\bS_R)$, which is stated in Lemma~\ref{lem:reflection_positivity}~(D). Note that the UV cutoff in the definition of the measures $\mu_{R,N}$, introduced in Sec.~\ref{sec:uv_limit}, breaks the reflection positivity, cf. \cite{ABRSS18}. For this reason, in this section we work with a different UV cutoff. We introduce a free field $\hat X_{R,N}$ with a UV cutoff that preserves the reflection positivity, see Lemma~\ref{lem:reflection_positivity}~(B), and show that the measure $\mu_R$ can be approximated, see Lemma~\ref{lem:UV_convergence_tilde} and Eq.~\eqref{eq:positivity_mu_R_approximation}, by measures with a UV cutoff that are reflection positive, see Lemma~\ref{lem:reflection_positivity}~(C). 

\begin{dfn}\label{def:hat-K}
Fix $h\in C^\infty(\bR)$ such that $\supp\,h\subset(-1,1)$, $h=1$ on $[-1/2,1/2]$ and $2\pi\int_0^\infty h(\theta)\,\theta\,\rd\theta =1$. For $R,N\in\bN_+$ the operator $\hat K_{R,N}:\,L_2(\mathbb{S}_R)\to L_2(\mathbb{S}_R)$ is defined by its integral kernel
\begin{equation}
 \hat K_{R,N}(\rx,\ry):=N^2 h(N\,d_R(\rx,\ry)).
\end{equation}
\end{dfn}

\begin{rem}
Note that formally for $R=\infty$ we have $\bS_R=\bR^2$ and $d_R(\rx,\ry)=|\rx-\ry|$ as well as $\int_{\bR^2}\hat K_{\infty,N}(x,y)\,\rd y=2\pi\int_0^\infty h(\theta)\,\theta\,\rd\theta =1$.
\end{rem}

\begin{dfn}\label{dfn:hat_X_hat_Y}
Let $\hat c_{R,N}:=\mathrm{Tr}(\hat K_{R,N}G_R\hat K_{R,N})/4\pi R^2$. By definition $\hat X_{R,N}:=\hat K_{R,N}X_R$,
\begin{equation}
\begin{gathered}
 \hat X_{R,N}^{:m:}:=\sum_{k=0}^{\floor{m/2}} \frac{(-1)^k m!}{(m-2k)!k!2^k} (\hat c_{R,N})^k \hat X_{R,N}^{m-2k},
 \qquad
 \hat X_{R,N}^{:m:}(h):=\int_{\bS_R} \hat X_{R,N}^{:m:}(\rx)\,h(\rx)\,\rho_R(\rd \rx),
 \\
 \hat Y_{R,N}:=\sum_{m=0}^n a_m \hat X_{R,N}^{:m:}(1_{\bS_R}),
 \qquad
 \tilde Y_{R,N}^\pm:=\sum_{m=0}^n a_m \hat X_{R,N}^{:m:}(1_{\bS^\pm_{R,N}}),
 \qquad
 \tilde Y_{R,N} := \tilde Y^+_{R,N} + \tilde Y^-_{R,N},
\end{gathered} 
\end{equation}
where $h\in L_\infty(\bS_R)$ and $1_B$ denotes the characteristic function of the set $B$.
\end{dfn}

\begin{rem}
Note that $\hat X_{R,N}$ introduced above and $X_{R,N}$ introduced in Def.~\ref{dfn:X_Y} are free fields on $\bS_R$ with different UV cutoffs. We use the same symbol $N\in\bN_+$ to denote both cutoffs.
\end{rem}

\begin{rem}
By Lemma~\ref{lem:stochastic_X_not_uniform} it holds $\hat X_{R,N}\in L^1_2(\bS_R)\subset L_n(\bS_R)$ almost surely. In particular, $\hat Y_{R,N}$, $\tilde Y_{R,N}$ are well-defined. Moreover, there exists $C\in(0,\infty)$ such that for all $N,R\in\bN_+$ it holds $|\hat c_{R,N}-1/2\pi\,\log N|\leq C$ by the bound~\eqref{eq:bound_counterterm} and Remark~\ref{rem:hat_c}.
\end{rem}

\begin{lem}\label{lem:UV_convergence_tilde}
For all $R\in\bN_+$ and all bounded and continuous $F\,:\,\sD'(\bS_R)\to \bR$ it holds
\begin{equation}
 \lim_{N\to\infty} \bE F(\hat X_{R,N}) \exp(-\tilde Y_{R,N}) = \lim_{N\to\infty} \bE F(X_{R,N}) \exp(-Y_{R,N}). 
\end{equation}
\end{lem}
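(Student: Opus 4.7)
The strategy is to parallel the proof of Proposition~\ref{prop:uv_limit}, but with the compactly-supported mollifier $\hat K_{R,N}$ in place of the resolvent mollifier $K_{R,N}$, and then absorb the small discrepancy caused by replacing $\hat Y_{R,N}$ (integral over all of $\bS_R$) by $\tilde Y_{R,N}$ (integral over $\bS_{R,N} = \bS_R\setminus\{|\rx_1|\leq 1/N\}$).

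First I would establish the sharp-cutoff analog of Proposition~\ref{prop:uv_limit}, namely
\begin{equation}
\lim_{N\to\infty}\bE F(\hat X_{R,N})\exp(-p\,\hat Y_{R,N}) \;=\; \bE F(X_R)\exp(-p\,Y_R)
\end{equation}
for any $p\in(0,\infty)$, which identifies the limit on the RHS of the lemma as the common limit. The ingredients transfer verbatim: $\hat K_{R,N}$ is an approximate identity on $L_2(\bS_R)$, so the natural analogs of Lemmas~\ref{lem:stochastic_X_uniform} and \ref{lem:stochastic_Y} give convergence in probability $\hat X_{R,N}\to X_R$ and $\hat Y_{R,N}\to Y_R$; Lemma~\ref{lem:polynomial_bound} yields the pointwise bound $\hat Y_{R,N}\geq -A\,\hat c_{R,N}^{n/2}\rho_R(\bS_R)$; and the Nelson hypercontractivity estimate combined with $|\hat c_{R,N}-\log N/2\pi|\leq C$ provides the doubly-exponential tail bound on $\mathbb{P}(-p\,\hat Y_{R,N}>t)$, hence uniform integrability.

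Second, I would show that the correction
\begin{equation}
\Delta_{R,N} := \hat Y_{R,N}-\tilde Y_{R,N} = \sum_{m=0}^n a_m\,\hat X_{R,N}^{:m:}(1_{\bS_R\setminus\bS_{R,N}})
\end{equation}
converges to $0$ in $L_p(\Omega)$ for every $p\in[1,\infty)$. The key computation, using that $\hat X_{R,N}^{:m:}$ is a single Wiener chaos, is
\begin{equation}
\bE\bigl[\hat X_{R,N}^{:m:}(1_A)\bigr]^2 = m!\int_{A\times A}\hat G_{R,N}(\rx,\ry)^m\,\rho_R(\rd\rx)\rho_R(\rd\ry), \qquad A:=\bS_R\setminus\bS_{R,N},
\end{equation}
with $\hat G_{R,N}=\hat K_{R,N}G_R\hat K_{R,N}$. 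Since $A$ is an equatorial band of area $\mathcal{O}(R/N)$ and $\hat G_{R,N}$ has only a logarithmic diagonal singularity (integrable in any power), splitting the domain into the near-diagonal set $\{d_R(\rx,\ry)\lesssim 1/N\}$, where the integrand is $\mathcal{O}((\log N)^m)$ and the volume is $\mathcal{O}((R/N)\cdot N^{-2})$, and its complement, where $\hat G_{R,N}$ is bounded by $-\frac{1}{2\pi}\log d_R(\rx,\ry)+\mathcal{O}(1)$, shows that the double integral tends to $0$. Hypercontractivity (Lemma~\ref{lem:nelson}) upgrades $L_2$ convergence of $\Delta_{R,N}$ to $L_p$ convergence for any $p$.

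Third, I would combine the two steps. Writing
\begin{equation}
F(\hat X_{R,N})\exp(-\tilde Y_{R,N}) = F(\hat X_{R,N})\exp(-\hat Y_{R,N})\cdot\exp(\Delta_{R,N}),
\end{equation}
the factor $\exp(\Delta_{R,N})$ tends to $1$ in $L_q(\Omega)$ for any $q\in[1,\infty)$ (since $\Delta_{R,N}$ is in a fixed chaos and $\|\Delta_{R,N}\|_{L_2}\to 0$, hypercontractivity controls its exponential moments), while $F(\hat X_{R,N})\exp(-\hat Y_{R,N})$ is bounded in $L_{q'}(\Omega)$ for $q'$ conjugate to $q$ by Step~1. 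Hölder's inequality then gives the desired convergence. I anticipate the main obstacle to be the careful verification of Step~2: one must balance the logarithmic blowup of $\hat G_{R,N}$ on the diagonal against the vanishing area of $A$, simultaneously for all $m\leq n$, and confirm that the constants produced by Lemma~\ref{lem:trace} and its analog for $\hat K_{R,N}$ are compatible with the counterterm $\hat c_{R,N}$ used to Wick-order.
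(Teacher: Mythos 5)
Your Steps 1 and 2 are sound and closely track the paper's ingredients: Step 1 is the Vitali argument of Proposition~\ref{prop:uv_limit} transplanted to the sharp cutoff (using the $\hat{K}_{R,N}$ analogs stated in Lemmas~\ref{lem:stochastic_X_uniform}(C) and~\ref{lem:stochastic_Y}(C)), and Step 2 reproduces the content of Lemma~\ref{lem:stochastic_Y}(D) with a near/far-diagonal split in place of the paper's cleaner rotational-invariance argument. The gap is in Step 3.

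The Hölder factorization does not work because $\exp(\Delta_{R,N})$ is not in $L_q(\Omega)$ for \emph{any} $q\geq 1$, even though $\|\Delta_{R,N}\|_{L_2}\to 0$. The leading term of $\Delta_{R,N}=\int_{\bS_R\setminus\bS_{R,N}}:P(\hat X_{R,N}):\rho_R(\rd\rx)$ is a genuine $n$-th Wiener chaos element, and hypercontractivity yields only stretched-exponential tails $\mathbb{P}(\Delta_{R,N}>t)\lesssim\exp(-c_N t^{2/n})$. Since $2/n<1$ (here $n\geq 4$), the integral $\int_0^\infty e^{qt}\,e^{-c_N t^{2/n}}\,\rd t$ diverges for every $q>0$ and every finite $N$, no matter how large $c_N$ is. Hence $\bE\exp(q\Delta_{R,N})=\infty$. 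The finiteness of $\bE\exp(-\tilde Y_{R,N})=\bE\,\exp(-\hat Y_{R,N})\exp(\Delta_{R,N})$ comes from a precise cancellation -- $\tilde Y_{R,N}$ is bounded below by Lemma~\ref{lem:polynomial_bound} -- which a generic Hölder bound is too crude to see. The paper avoids this by not factorizing: it establishes convergence in probability of the full product $F(\hat X_{R,N})\exp(-\tilde Y_{R,N})$ to $F(X_R)\exp(-Y_R)$ (from Lemmas~\ref{lem:stochastic_X_uniform}(C), \ref{lem:stochastic_Y}(C),(D)), and then proves uniform integrability of that sequence by the same Nelson-hypercontractivity argument as in Proposition~\ref{prop:uv_limit}, leaning on the lower bound $\tilde Y_{R,N}\geq -A\,\hat c_{R,N}^{n/2}\,\rho_R(\bS_{R,N})$. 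Your Step 3 should therefore be replaced by this Vitali argument for the difference $\bE F(\hat X_{R,N})\big[\exp(-\tilde Y_{R,N})-\exp(-\hat Y_{R,N})\big]$: the integrand converges to $0$ in probability by Step 2, and both $\exp(-\tilde Y_{R,N})$ and $\exp(-\hat Y_{R,N})$ are uniformly integrable.
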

\begin{proof}
The proof follows the strategy of the proof of Proposition~\ref{prop:uv_limit}. By Lemmas~\ref{lem:stochastic_X_uniform} and~\ref{lem:stochastic_Y} the sequences $(F(\hat X_{R,N}) \exp(-\tilde Y_{R,N}))_{N\in\bN_+}$ and $(F(X_{R,N}) \exp(-Y_{R,N}))_{N\in\bN_+}$ converge in probability to $F(X_R) \exp(-Y_R)$. To conclude we show that the above-mentioned sequences are uniformly integrable by repeating verbatim the argument from the proof of Proposition~\ref{prop:uv_limit}.
\end{proof}

\begin{lem}\label{lem:reflection_positivity}
The following statements hold true for all $R,N\in\bN_+$:
\begin{enumerate}
 \item[(A)] If $F\in\cF_R^+$, then $\bE \overline{F(\Theta_R X_R)} F(X_R)\geq 0$.
 \item[(B)] If $F\in\cF_{R,N}^+$, then $\bE \overline{F(\Theta_R \hat X_{R,N})} F(\hat X_{R,N})\geq 0$.
 \item[(C)] If $F\in\cF_{R,N}^+$, then $\bE \overline{F(\Theta_R \hat X_{R,N})} F(\hat X_{R,N})\exp(-\tilde Y_{R,N})\geq 0$.
 \item[(D)] For all $F\in\cF^+_R$ it holds
$
\int \overline{F(\Theta_{R}\phi)} F(\phi)\,\mu_R(\rd\phi) \geq 0
$.
\end{enumerate}
\end{lem}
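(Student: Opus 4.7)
The plan is to establish parts (A)--(D) in order: (B) follows from (A) after pushing the UV mollifier through $\Theta_R$; (C) reduces to (A) by a rearrangement exploiting the reflection symmetry of the splitting $\tilde Y_{R,N}=\tilde Y_{R,N}^++\tilde Y_{R,N}^-$; and (D) follows from (C) via Lemma~\ref{lem:UV_convergence_tilde}. For (A), the cleanest route is the Markov property of the Nelson free field with covariance $G_R=(1-\Delta_R)^{-1}$. Let $\Sigma:=\{\rx\in\bS_R:\rx_1=0\}$ denote the equator, which $\Theta_R$ fixes pointwise. The Markov property gives that $X_R|_{\bS_R^+}$ and $X_R|_{\bS_R^-}$ are conditionally independent given $X_R|_\Sigma$, while the $\Theta_R$-invariance of $\nu_R$ combined with $\Theta_R X_R|_\Sigma=X_R|_\Sigma$ yields $\bE[F(\Theta_R X_R)\mid X_R|_\Sigma]=\bE[F(X_R)\mid X_R|_\Sigma]=:h$. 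For $F\in\cF_R^+$, $F(X_R)$ is measurable with respect to $X_R|_{\bS_R^+}$ while $F(\Theta_R X_R)$ is measurable with respect to $X_R|_{\bS_R^-}$, so the tower property gives $\bE\overline{F(\Theta_R X_R)}F(X_R)=\bE|h|^2\geq 0$.

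For (B), $d_R$ is $\Theta_R$-invariant, so $\hat K_{R,N}$ commutes with $\Theta_R$; and the chord estimate $|\rx-\ry|\leq d_R(\rx,\ry)<1/N$ together with $\rx_1>1/N$ forces $\ry_1>0$, so $\hat K_{R,N}f$ is supported in $\bS_R^+$ whenever $\supp f\subset\bS_{R,N}^+$. Hence $F(\hat X_{R,N})=\tilde F(X_R)$ for some $\tilde F\in\cF_R^+$ and (A) applies. For (C), regard $\tilde Y_{R,N}^\pm$ as functionals of the field argument. Then $\tilde Y^-_{R,N}[\phi]=\tilde Y^+_{R,N}[\Theta_R\phi]$, because $\Theta_R$ is a measure-preserving isometry exchanging $\bS_{R,N}^\pm$ and Wick powers with constant counterterm commute pointwise with isometries. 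Combined with $\Theta_R\hat X_{R,N}=\hat K_{R,N}\Theta_R X_R$, this gives
\[
\bE\overline{F(\Theta_R\hat X_{R,N})}\,F(\hat X_{R,N})\,\exp(-\tilde Y_{R,N})=\bE\overline{\tilde G(\Theta_R X_R)}\,\tilde G(X_R),
\]
with $\tilde G(X_R):=F(\hat K_{R,N}X_R)\exp\bigl(-\tilde Y^+_{R,N}[\hat K_{R,N}X_R]\bigr)$ depending on $X_R$ only through pairings with functions supported in $\bS_R^+$. By Lemma~\ref{lem:polynomial_bound} the exponent obeys the deterministic lower bound $\tilde Y^+_{R,N}\geq -A\hat c_{R,N}^{n/2}\rho_R(\bS_{R,N}^+)$, so $\tilde G$ is almost surely bounded. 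Approximating the spatial integral in $\tilde Y^+_{R,N}$ by Riemann sums over partitions of $\bS_{R,N}^+$ produces uniformly bounded cylindrical functionals $\tilde G_m\in\cF_R^+$, converging to $\tilde G$ almost surely (since $\hat X_{R,N}$ is almost-surely continuous on $\bS_R$ by smoothness of $\hat K_{R,N}$) and hence in $L_2(\nu_R)$ by dominated convergence. Applying (A) to each $\tilde G_m$ and passing to the limit concludes (C).

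For (D), writing $F=F_1+\ri F_2$ in terms of real bounded continuous parts and applying Lemma~\ref{lem:UV_convergence_tilde} to each product $F_j(\Theta_R\phi)F_k(\phi)$ and to $F\equiv 1$ yields
\[
\int\overline{F(\Theta_R\phi)}\,F(\phi)\,\mu_R(\rd\phi)=\lim_{N\to\infty}\frac{\bE\overline{F(\Theta_R\hat X_{R,N})}\,F(\hat X_{R,N})\,\exp(-\tilde Y_{R,N})}{\bE\exp(-\tilde Y_{R,N})},
\]
in which both numerator and denominator are nonnegative by (C) (with $F\equiv 1$ in the denominator), so the limit is nonnegative. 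The main obstacle will be the approximation step in (C): one must ensure both that the discretized functionals $\tilde G_m$ genuinely lie in $\cF_R^+$ --- which requires the composite map $(y_1,\ldots,y_M)\mapsto G(y_1,\ldots,y_k)\exp\bigl(-\sum_j w_j P(y_{k+j},\hat c_{R,N})\bigr)$ to be in $C^\infty_\mathrm{b}(\bR^M)$, a consequence of the positivity of the leading coefficient of $P(\tau,c)$ in $\tau$ --- and that the $\tilde G_m$ converge to $\tilde G$ in $L_2(\nu_R)$, which in turn hinges on the deterministic lower bound from Lemma~\ref{lem:polynomial_bound} and the almost-sure smoothness of $\hat X_{R,N}$.
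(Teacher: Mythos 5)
Your proposal follows the same skeleton as the paper's proof: (B) from (A) by pushing the convolution kernel through $\Theta_R$ and using its support property; (C) from the reflection identity for $\tilde Y^\pm_{R,N}$, writing the integrand as $\overline{H(\Theta_R\,\Cdot)}H(\Cdot)$ and approximating $H$ by cylindrical functionals; (D) from (C) and Lemma~\ref{lem:UV_convergence_tilde}. Two remarks on where you deviate or add content. For (A), you give a direct Markov-property argument (conditional independence of $X_R|_{\bS_R^+}$ and $X_R|_{\bS_R^-}$ given the trace on the equator, combined with $\Theta_R$-invariance of $\nu_R$), whereas the paper simply cites \cite{dimock2004}; your sketch is correct in spirit, and the nontrivial content — the global Markov property for the sphere free field and the existence of the equatorial trace — is exactly what that reference supplies. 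For (C), the paper asserts that $H=F\exp(-\tilde Y^+_{R,N})$ "can be approximated by a similar expression with $H$ replaced by some functional belonging to $\cF_{R,N}^+$" and stops; you actually carry out the approximation by Riemann sums, and correctly identify the two points that make it work: the discretized functionals lie in $\cF^+_R$ because $\hat K_{R,N}(\rx_j,\Cdot)$ is supported in $\bS^+_R$ for $\rx_j\in\bS^+_{R,N}$ and because $\tau\mapsto\exp(-wP(\tau,\hat c_{R,N}))$ is in $C^\infty_{\mathrm b}(\bR)$ since $n$ is even and $a_n>0$; and $L_2(\nu_R)$-convergence follows from the deterministic lower bound of Lemma~\ref{lem:polynomial_bound} (giving a uniform bound on $\tilde G_m$) together with a.s.\ smoothness of $\hat X_{R,N}$. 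This is a genuine filling-in of a step the paper leaves to the reader, and the identity $\tilde Y^-_{R,N}[\phi]=\tilde Y^+_{R,N}[\Theta_R\phi]$, which you make explicit, is precisely what the paper's manipulation uses implicitly. For (D), you should also record, as the paper does, that any $F\in\cF^+_R$ lies in $\cF^+_{R,M}$ for some $M$ and hence in $\cF^+_{R,N}$ for all $N\geq M$, so that (C) applies to the numerator for all large $N$; this is used but not stated in your write-up.
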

\begin{proof}
For the proof of Item~(A) see~\cite[Theorem~2]{dimock2004}. To prove Item~(B), observe that
\begin{equation}
\bE \overline{F(\Theta_R \hat X_{R,N})} F(\hat X_{R,N}) = \bE \overline{F( \hat K_{R,N} (\Theta_R X_R))} F(\hat K_{R,N} X_R)\,,
\end{equation}
where we have used the fact that $\Theta_R \hat K_{R,N} X_R = \hat K_{R,N} \Theta_R X_R$. By the support property of the kernel $\hat K_{R,N}(\rx,\ry)$ if $F\in\cF^+_{R,N}$, then the functional $\phi\mapsto F(\hat K_{R,N}\phi)$ belongs to $\mathcal{F}_R^+$. Consequently, the statement follows from Item~(A). To prove Item~(C), note that
\begin{equation}
\bE \overline{F(\Theta_R \hat X_{R,N})} F(\hat X_{R,N})\exp(-\tilde Y_{R,N})
=
\bE \overline{F(\Theta_R \hat X_{R,N})\exp(-\tilde Y^-_{R,N})}F(\hat X_{R,N})\exp(-\tilde Y^+_{R,N})\,.
\end{equation}
Denote $H(\hat X_{R,N}) := F(\hat X_{R,N}) \exp(-\tilde Y_{R,N}^+)$. It holds
\begin{equation}
\bE \overline{F(\Theta_R \hat X_{R,N})} F(\hat X_{R,N})\exp(-\tilde Y_{R,N})=\bE\overline{ H(\Theta_R\hat X_{R,N}) }  H(\hat X_{R,N})\,.
\end{equation}
The RHS of the above equality can be approximated by a similar expression with $H$ replaced by some functional belonging to $\cF_{R,N}^+$. As a result, Item~(C) follows from Item~(B). Let us turn to the proof of Item~(D). First note that for any $F\in\cF^+_R$ there exists $M\in\bN_+$ such that $F\in\cF^+_{R,M}$. Hence, it suffices to show that $\int \overline{F(\Theta_R\phi)} F(\phi)\,\mu_R(\rd\phi) \geq 0$ for all $R,M\in\bN_+$ and $F\in\cF^+_{R,M}$. To establish this claim we note that by Lemma~\ref{lem:UV_convergence_tilde}
\begin{equation}\label{eq:positivity_mu_R_approximation}
\int \overline{F(\Theta_R\phi)} F(\phi)\,\mu_R(\rd\phi)
=
\lim_{N\to\infty} \frac{\bE\overline{F(\Theta_R \hat X_{R,N})} F(\hat X_{R,N})\exp(-\tilde Y_{R,N})}{\bE \exp(-\tilde Y_{R,N}) }
\end{equation}
and use Item~(C) together with the fact that $\cF^+_{R,M}\subset\cF^+_{R,N}$ for all $N\geq M$.
\end{proof}

\section{Euclidean invariance}\label{sec:Euclidean}

In this section we establish the invariance under the Euclidean transformations of the plane of every accumulation point $\mu$ of the sequence $(\jmath_R^*\sharp\mu_R)_{R\in\bN_+}$ of measures on $\sS'(\bR^2)$. We use the fact that for every $R\in\bN_+$ the measure $\mu_R$ is invariant under the rotations of the sphere $\bS_R$. The proof of the rotational invariance of $\mu$ is straight-forward as the rotations $\cR_{R,\alpha}$ of the sphere $\bS_R$ around the $\rx_3$ axis are mapped under the stereographic projection to the rotations $\cR_{\alpha}$ of the plane $\bR^2$ around the origin. Hence, for every $R\in\bR_+$ the measure $\jmath_R^*\sharp\mu_R$ on $\sS'(\bR^2)$ is invariant under the rotations around the origin and the same is true for every accumulation point $\mu$. The proof of the translational invariance of $\mu$ is more complicated. There is no rotation of the sphere $\bS_R$ that is mapped under the stereographic projection to the translation $\cT_{\alpha}$ of the plane $\bR^2$ in the $x_1$ direction. In particular, for every $R\in\bR_+$ the measure $\jmath_R^*\sharp\mu_R$ on $\sS'(\bR^2)$ is not invariant under the translations. In order to establish the translational invariance of an accumulation point $\mu$ we first prove that the rotations $\cT_{R,\alpha}$ of the sphere $\bS_R$ around the $\rx_2$ axis are mapped under the stereographic projection to certain transformations $\cS_{R,\alpha}$ of the plane $\bR^2$ and subsequently show that the transformations $\cS_{R,\alpha}$ converge to the translations $\cT_{\alpha}$ of the plane $\bR^2$ in the $x_1$ direction as $R\to\infty$.

\begin{dfn}
For $\alpha\in\bR$ the maps $\cR_{\alpha},\cT_{\alpha}\,:\,\bR^2\to\bR^2$ are defined by
\begin{equation}
 \cR_\alpha(x_1,x_2) 
 := (x_1\cos\alpha+x_2\sin\alpha,x_1\sin\alpha-x_2\cos\alpha),
 \qquad
 \cT_\alpha(x_1,x_2) := (x_1+\alpha,x_2),
\end{equation}
For $R\in\bN_+$, $\alpha\in\bR$ the maps $\cR_{R,\alpha},\cT_{R,\alpha}\,:\,\bS_R\to\bS_R$ are defined by
\begin{equation}
\begin{gathered}
 \cR_{R,\alpha}(\rx) 
 = (\rx_1\cos\alpha+\rx_2\sin\alpha,\rx_1\sin\alpha-\rx_2\cos\alpha,\rx_3),
 \\
 \cT_{R,\alpha}(\rx) 
 = (\rx_1\cos(\alpha/R)-\rx_3\sin(\alpha/R),\rx_2,\rx_1\sin(\alpha/R)+\rx_3\cos(\alpha/R)),
\end{gathered} 
\end{equation}
where $\rx=(\rx_1,\rx_2,\rx_3)\in\bS_R$. For $R\in\bN_+$, $\alpha\in (-R,R)$ the map $\cS_{R,\alpha}\,:\,B_R\to \bR^2$ is defined by
\begin{equation}
 \cS_{R,\alpha}(x_1,x_2):= 
 \frac{2(R\sin (\alpha/R) (1-(x_1^2+x_2^2)/4R^2)+x_1 \cos(\alpha/R),x_2)}{1+\cos(\alpha/R)+ (1-\cos(\alpha/R))(x_1^2+x_2^2)/4R^2 -x_1/R\sin(\alpha/R)},
\end{equation}
where $B_R:=\{x\in\bR^2\,|\,|x|< R\}$.
\end{dfn}
\begin{rem}\label{rem:rot_trans}
For all $R\in\bN_+$ and $\alpha\in\bR$ it holds $\cR_{R,\alpha} \jmath_R = \jmath_R \cR_\alpha$. For all $R\in\bN_+$ and $\alpha\in (-R,R)$ it holds $\cT_{R,\alpha} \jmath_R = \jmath_R \cS_{R,\alpha}$ on $B_R$. 
\end{rem}
\begin{dfn}
Let $\alpha\in\bR$ and $R\in\bN_+$. For $f\in C^\infty_\rc(\bR^2)$ we set
\begin{equation}
 \cR_\alpha^*f:=f\circ\cR_\alpha\in C^\infty_\rc(\bR^2),\qquad 
 \cT_\alpha^*f:=f\circ\cT_\alpha\in C^\infty_\rc(\bR^2)
\end{equation}
and for $\phi\in\sD'(\bR^2)$ we set
\begin{equation}
 \cR_\alpha^*\phi:=\phi\circ\cR_{-\alpha}^*\in\sD'(\bR^2),
 \qquad
 \cT_\alpha^*\phi:=\phi\circ\cT_{-\alpha}^*\in\sD'(\bR^2).
\end{equation}
For $f\in C^\infty(\bS_R)$, $\phi\in\sD'(\bS_R)$ we define $\cR_{R,\alpha}^*f,\cT_{R,\alpha}^*f\in C^\infty(\bS_R)$ and $\cR_{R,\alpha}^*\phi,\cT_{R,\alpha}^*\phi\in\sD'(\bS_R)$ by analogous formulas.
\end{dfn}

\begin{dfn}
Let $R\in\bN_+$, $\alpha\in (-R,R)$. For $f\in C^\infty_\rc(\bR^2)$ we set $$\cS_{R,\alpha}^* f:= f\circ \cS_{R,\alpha}\in C^\infty(B_R).$$ For $\phi\in\sD'(\bR^2)$, $\supp\,\phi\subset B_R$, we define $\cS_{R,\alpha}^*\phi\in\sD'(\bR^2)$ by
\begin{equation}
 \langle\cS_{R,\alpha}^*\phi,f\rangle:=
 \langle\phi,\det(\mathrm T\cS_{R,-\alpha})\cS_{R,-\alpha}^*f\rangle 
\end{equation}
for all $f\in C^\infty_\rc(\bR^2)$, where $\det(\mathrm T\cS_{R,-\alpha})$ denotes the Jacobian, i.e. the determinant of the tangent map of $\cS_{R,-\alpha}$.
\end{dfn}

\begin{rem}\label{rem:translation}
For all $\alpha\in\bR$, $a\in\bN_0^2$ and $M\in(0,\infty)$ there exists $C\in(0,\infty)$ such that for all $x\in B_M$ and $R\in(|\alpha|\vee M,\infty)$ it holds 
\begin{equation}
 \cT_{R,\alpha} \jmath_R(x) = \jmath_R \cS_{R,\alpha}(x)
 \qquad
 \textrm{and}
 \qquad
 |\partial^a\cS_{R,\alpha}(x)-\partial^a\cT_\alpha(x)| \leq C/R.
\end{equation}
Noting that $\mathrm T\cT_{-\alpha}=1$ we conclude that for all $\alpha\in\bR$ and $f\in C^\infty_\rc(\bR^2)$ there exists $C$ such that for all sufficiently large $R\in\bN_+$ it holds
\begin{equation}
 \|\det(\mathrm T\cS_{R,-\alpha})\cS_{R,-\alpha}^*f-\cT_{-\alpha}^*f\|_{L^1_2(\bR^2,v_L^{-1/2})}\leq C\,/R.
\end{equation}
\end{rem}
\begin{rem}\label{rem:cylindrical}
Let us note that the algebra of cylindrical functionals $\cF$ separates points in $L^{-1}_2(\bR^2,v_L^{1/2})\subset \sD'(\bR^2)$. Hence, if $\mu_j$, $j=1,2$, are Borel probability measures on $L^{-1}_2(\bR^2,v_L^{1/2})$ such that $\mu_1(F)=\mu_2(F)$ for all $F\in\cF$, then $\mu_1=\mu_2$ by~\cite[Theorem~4.5(a), Ch. 3]{ethier2005}. 
\end{rem}

\begin{prop}\label{prop:euclidean_inv_plane}
Let $\mu$ be a weak limit of a subsequence of the sequence of measures $(\jmath_R^*\sharp\mu_R)_{R\in\bN_+}$ on $\sS'(\bR^2)$. It holds $\mu(F) = \mu(F\circ\cR_\alpha^*)$ and $\mu(F) = \mu(F\circ\cT_\alpha^*)$ for all bounded and measurable $F\,:\,\sS'(\bR^2)\to\bR$ and all $\alpha\in\bR$.
\end{prop}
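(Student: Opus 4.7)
The plan is to reduce the statement to invariance on the algebra of cylindrical functionals $\cF$ and then exploit the exact rotational symmetry of the finite-volume measures $\mu_R$ together with the asymptotic approximation of translations by sphere-rotations provided by Remark~\ref{rem:translation}. By Remark~\ref{rem:cylindrical} it suffices to prove $\int F\,\rd\mu = \int F\circ\cR_\alpha^*\,\rd\mu$ and $\int F\,\rd\mu = \int F\circ\cT_\alpha^*\,\rd\mu$ for cylindrical $F$ of the form $F(\phi)=G(\phi(f_1),\ldots,\phi(f_k))$ with $G\in C^\infty_{\mathrm{b}}(\bR^k)$ and $f_l\in C^\infty_\rc(\bR^2)$, since these equalities are exactly the statements $\mu=\cR_\alpha^*\sharp\mu$ and $\mu=\cT_\alpha^*\sharp\mu$ tested against such functionals.

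For rotational invariance I would first verify the commutation identity $\cR_\alpha^*\circ\jmath_R^* = \jmath_R^*\circ\cR_{R,\alpha}^*$ on $\sD'(\bS_R)$, which is immediate from $\cR_{R,\alpha}\circ\jmath_R = \jmath_R\circ\cR_\alpha$ (Remark~\ref{rem:rot_trans}) together with the rotational invariance $w_R\circ\cR_\alpha = w_R$ of the conformal factor. Since $\mu_R$ on $\sD'(\bS_R)$ is invariant under the sphere rotations $\cR_{R,\alpha}$ by construction, the pushforward $\jmath_R^*\sharp\mu_R$ is exactly invariant under $\cR_\alpha^*$ for every $R\in\bN_+$. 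For cylindrical $F$ as above, both $F$ and $F\circ\cR_\alpha^*$ are bounded and continuous on the weighted negative Sobolev space carrying the tightness of Remark~\ref{rem:tightness}, so passing to the weak limit along the convergent subsequence delivers the rotational invariance of $\mu$.

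Translational invariance is more delicate because no rotation of $\bS_R$ projects stereographically to the translation $\cT_\alpha$. Instead I use the rotation $\cT_{R,\alpha}$ about the $\rx_2$-axis: Remarks~\ref{rem:rot_trans} and~\ref{rem:translation} give $\cT_{R,\alpha}\circ\jmath_R = \jmath_R\circ\cS_{R,\alpha}$ on $B_R$, from which one deduces the distributional commutation $\jmath_R^*\circ\cT_{R,\alpha}^* = \cS_{R,\alpha}^*\circ\jmath_R^*$ on distributions whose pullback-supports lie in $B_R$. For $R$ sufficiently large that $\bigcup_l\supp f_l\subset B_R$, invariance of $\mu_R$ under $\cT_{R,\alpha}$ then yields
\begin{equation*}
\int F(\phi)\,(\jmath_R^*\sharp\mu_R)(\rd\phi) = \int F(\cS_{R,\alpha}^*\phi)\,(\jmath_R^*\sharp\mu_R)(\rd\phi).
\end{equation*}
I would then split the difference $\int F\,\rd\mu - \int F\circ\cT_\alpha^*\,\rd\mu$ into a weak-convergence piece for the bounded continuous cylindrical functional $F\circ\cT_\alpha^*$, and an approximation piece
\begin{equation*}
\int \bigl[F(\cS_{R,\alpha}^*\phi)-F(\cT_\alpha^*\phi)\bigr]\,(\jmath_R^*\sharp\mu_R)(\rd\phi).
\end{equation*}
Using $\|G'\|_\infty<\infty$, the bound of Remark~\ref{rem:translation}, and the duality pairing between $L^1_2(\bR^2,v_L^{-1/2})$ and $L^{-1}_2(\bR^2,v_L^{1/2})$, this is controlled by $C/R$ times a uniform-in-$R$ moment of $\|\phi\|_{L^{-1}_2(\bR^2,v_L^{1/2})}$ under $\jmath_R^*\sharp\mu_R$, which then vanishes as $R\to\infty$ along the subsequence.

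The main obstacle is thus ensuring this uniform-in-$R$ moment bound in the $L^{-1}_2(\bR^2,v_L^{1/2})$-norm starting from the $L^{-\kappa}_n(\bR^2,v_L^{1/n})$-moment control supplied by Proposition~\ref{prop:tightness}; I expect this to follow from a standard weighted Sobolev embedding exploiting $n\geq 4$ together with the polynomial decay of the weight $v_L$. Modulo this technical point, the two commutation identities together with the deterministic approximation $\cS_{R,\alpha}\to\cT_\alpha$ from Remark~\ref{rem:translation} and the weak convergence of $(\jmath_R^*\sharp\mu_R)$ along the subsequence conclude the proof.
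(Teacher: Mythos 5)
Your proposal is correct and follows essentially the same route as the paper: reduction to cylindrical functionals via Remark~\ref{rem:cylindrical}, exact rotational invariance of $\jmath_R^*\sharp\mu_R$ from $\cR^*_\alpha\circ\jmath_R^*=\jmath_R^*\circ\cR^*_{R,\alpha}$, and translations approximated by the $\rx_2$-axis rotations through $\cS_{R,\alpha}\to\cT_\alpha$ with the $C/R$ error controlled by a uniform moment of $\|\jmath_R^*\phi\|_{L^{-1}_2(\bR^2,v_L^{1/2})}$. The one point you flag as an obstacle is resolved in the paper exactly as you anticipate, namely by H\"older's inequality with the integrable weight $v_L$ applied to the moment bound of Proposition~\ref{prop:tightness}.
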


\begin{proof}
Suppose that the sequence of measures $\jmath_R^*\sharp\mu_R$ on $\sS'(\bR^2)$ converges to $\mu$ along the subsequence $(R_M)_{M\in\bN_+}$. By Remark~\ref{rem:tightness} the measure $\mu$ is concentrated on $L^{-1}_2(\bR^2,v_L^{1/2})$. Hence, by Remark~\ref{rem:cylindrical}, without loss of generality, we can assume that $F\in\cF$ is a cylindrical functional. Note that by the rotational invariance of the measure $\mu_R$ it holds
\begin{equation}
 \mu_R(F_R) = \mu_R(F_R\circ\cR_{R,\alpha}^*),
 \qquad
 \mu_R(F_R) = \mu_R(F_R\circ\cT_{R,\alpha}^*)
\end{equation}
for every $F_R\in\cF_R$ and $\alpha\in\bR$. By Remark~\ref{rem:rot_trans} we have $\cR^*_\alpha\circ\jmath_R^*=\jmath_R^*\circ\cR^*_{R,\alpha}$. Hence, for every $F\in\cF$ we obtain 
\begin{multline}
 \mu(F\circ \cR^*_\alpha-F) =
 \lim_{M\to\infty} 
 \jmath_{R_M}^*\sharp\mu_{R_M}(F\circ \cR^*_\alpha-F) =
 \lim_{M\to\infty} 
 \mu_{R_M}(F\circ \cR^*_\alpha\circ \jmath_{R_M}^* -F\circ\jmath_{R_M}^*)
 \\
 =
 \lim_{M\to\infty} 
 \mu_{R_M}(F\circ\jmath_{R_M}^*\circ \cR_{R_M,\alpha}^*-F\circ\jmath_{R_M}^*) = 
 \lim_{M\to\infty} 
 (\jmath_{R_M}^*\sharp\mu_{R_M}-\mu)(F)
 =
 0.
\end{multline}
This finishes the proof of the rotational invariance.

Let us turn to the proof of the translational invariance. Note that by Remark~\ref{rem:rot_trans} for every $F\in\cF$ and all sufficiently large $R\in\bN_+$ it holds 
\begin{equation}
 \jmath_R^*\sharp\mu_R(F) 
 =
 \mu_R(F\circ\jmath_R^*) 
 =
 \mu_R(F\circ\jmath_R^*\circ\cT_{R,\alpha}^*)
 =
 \mu_R(F\circ\cS_{R,\alpha}^*\circ\jmath_R^*)
 =
 \jmath_R^*\sharp\mu_R(F\circ\cS_{R,\alpha}^*).
\end{equation}
Remark~\ref{rem:translation} implies that for every $\alpha\in\bR$ and $F\in\cF$ there exists $C\in(0,\infty)$ such that for all $\psi\in L^{-1}_2(\bR^2,v_L^{1/2})$ and all sufficiently large $R\in\bN_+$ it holds
\begin{equation}
 |F(\cS_{R,\alpha}^*\psi)-F(\cT_\alpha^*\psi)| \leq (C/R)~\|\psi\|_{L^{-1}_2(\bR^2,v_L^{1/2})}.
\end{equation}
By Proposition~\ref{prop:tightness} and H{\"o}lder's inequality we obtain that
\begin{equation}
 \int_{\sD'(\bS_R)} \|\jmath_R^*\phi\|_{L^{-1}_2(\bR^2,v_L^{1/2})}\,\mu_R(\rd\phi)
\end{equation}
is uniformly bounded in $R\in\bN_+$. Hence, for all $\alpha\in\bR$ and $F\in\cF$ it holds
\begin{equation}
 \lim_{R\to\infty} \jmath_R^*\sharp\mu_R(F\circ\cS_{R,\alpha}^*-F\circ\cT_\alpha^*) =0.
\end{equation}
Consequently,
\begin{multline}
 \mu(F\circ\cT^*_\alpha-F) 
 =
 \lim_{M\to\infty} 
 \jmath_{R_M}^*\sharp\mu_{R_M}(F\circ\cT^*_\alpha-F)
 \\
 =
 \lim_{M\to\infty} 
 \jmath_{R_M}^*\sharp\mu_{R_M}(F\circ\cS_{R_M,\alpha}^*-F)
 =
 \lim_{M\to\infty} 
 (\jmath_{R_M}^*\sharp\mu_{R_M}-\mu)(F)
 =
 0.
\end{multline}
This finishes the proof.
\end{proof}

\appendix

\section{Function spaces}\label{sec:spaces}

\begin{dfn}\label{dfn:weights}
We say that $w\in C^\infty(\bR^d)$ is an admissible weight iff there exist $b\in[0,\infty)$ and $c\in(0,\infty)$ such that $0<w(x)\leq c\,w(y)\,(1+|x-y|)^b$ for all $x,y\in\bR^d$ and for every $a\in\mathbb{N}_0^d$ there exists $c_a\in(0,\infty)$ such that $|\partial^a w (x)|\leq c_a\, w(x)$ for all $x\in\bR^d$.
\end{dfn}

\begin{dfn}\label{dfn:sobolev}
Let $w$ be an admissible weight, $p\in[1,\infty]$ and $\alpha\in\bR$, $n\in\bN_0$. By definition $L_p(\bR^d,w)$ is the Banach space with the norm 
\begin{equation}
 \|f\|_{L_p(\bR^d,w)}:=\|w f\|_{L_p(\bR^d)}.
\end{equation}
The weighted Bessel potential space $L^\alpha_p(\bR^d,w)$ is the Banach space with the norm 
\begin{equation}
 \|f\|_{L^\alpha_p(\bR^d,w)} := \|(1-\Delta)^{\alpha/2}f\|_{L_p(\bR^d,w)}.
\end{equation}
We also set $L^\alpha_p(\bR^d)=L^\alpha_p(\bR^d,1)$. The weighted Sobolev space $W^n_p(\bR^d,w)$ is the Banach space with the norm 
\begin{equation}
 \|f\|_{W^n_p(\bR^d,w)} = \textstyle\sum_{\substack{a\in\bN^d,|a|\leq n}} \|\partial^a f\|_{L_p(\bR^d,w)}.
\end{equation}
For $R\in(0,\infty)$ the Bessel potential space $L^\alpha_p(\bS_R)$ on the round sphere $\bS_R\subset\bR^d$  of radius $R$ is the Banach space with the norm 
\begin{equation}
 \|f\|_{L^\alpha_p(\bS_R)} := \|(1-\Delta_R)^{\alpha/2}f\|_{L_p(\bS_R)},
\end{equation}
where $L_p(\bS_R)$ is the $L_p$ space on $\bS_R$ with respect to the canonical measure $\rho_R$ on $\bS_R$.
\end{dfn}

\begin{rem}\label{rem:sobolev}
The following facts are standard: %
Let $w$ be an admissible weight, $p\in[1,\infty)$, $\alpha\in\bR$ and $n\in\bN_0$. The norms $\|\Cdot\|_{L^\alpha_p(\bR^d,w)}$ and $\|w\Cdot\|_{L^\alpha_p(\bR^d)}$ are equivalent. The Sobolev space $W^n_p(\bR^d,w)$ coincides with the Bessel potential space $L^n_p(\bR^d,w)$ with equivalent norms. The Bessel potential space $L^\alpha_p(\bR^d,w)$ coincides with the Triebel-Lizorkin space $F^\alpha_{p,2}(\bR^d,w)$ with equivalent norms. Furthermore, the Bessel potential space $L_p^{\alpha}(\bR^d,w)$ is continuously embedded in the Besov space $B_{p,\infty}^{\alpha}(\bR^d,w)$ and the Besov space $B_{\infty,1}^{\alpha}(\bR^d,w)$ is continuously embedded in the Bessel potential space $L_\infty^{\alpha}(\bR^d,w)$. These facts can be obtained e.g. from ~\cite[Theorem~6.5, Theorem~6.9]{Triebel3} and \cite[Sec.~2.5.7]{Triebel1}. We note that   \cite[Theorem 6.5 (iii)]{Triebel3} is useful to pass from $\alpha=0$ to $\alpha\in \bR$.
\end{rem}

\begin{rem} \label{Hoelder-remark}
For $\alpha\in\bR$ and $p,q\in[1,\infty]$ we have the following generalized H{\"o}lder inequality
\begin{equation}
 |\langle f,g\rangle_{L_2(\bR^d,w^{1/2})}| \leq C\, \|f\|_{L^\alpha_p(\bR^d,w^{1/p})}\,\|g\|_{L^{-\alpha}_q(\bR^d,w^{1/q})},
 \qquad
 \frac{1}{p}+\frac{1}{q}=1,
\end{equation}
where $\langle \Cdot,\Cdot\rangle_{L_2(\bR^d,w^{1/2})}$ is the scalar product in $L_2(\bR^d,w^{1/2})$ and the constant $C\in(0,\infty)$ depends only on the weight $w$. 
\end{rem}

\begin{thm}\label{thm:embedding}
Let $w,v$ be admissible weights and 
\begin{equation}
 -\infty<\alpha_2\leq \alpha_1 <\infty,
 \qquad
 1\leq p_1\leq p_2\leq \infty.
\end{equation}
(A) The embedding $L^{\alpha_1}_{p_1}(\bR^d,w)\to L^{\alpha_2}_{p_2}(\bR^d,v)$ is continuous if
\begin{equation}
 p_2<\infty,
 \qquad
 \alpha_1-d/p_1\geq \alpha_2-d/p_2 
 \quad\textrm{and}\quad
 \sup_{x\in\bR^d} v(x)/w(x) < \infty.
\end{equation}
(B) The embedding $L^{\alpha_1}_{p_1}(\bR^d,w)\to L^{\alpha_2}_{\infty}(\bR^d,v)$ is continuous if
\begin{equation}
 \alpha_1-d/p_1> \alpha_2
 \quad\textrm{and}\quad
 \sup_{x\in\bR^d} v(x)/w(x) < \infty.
\end{equation}
(C) The embedding $L^{\alpha_1}_{p_1}(\bR^d,w)\to L^{\alpha_2}_{p_2}(\bR^d,v)$ is compact if  
\begin{equation}
p_2<\infty,
\qquad
\alpha_1-d/p_1>\alpha_2-d/p_2
\quad\textrm{and}\quad
\lim_{|x|\to\infty} v(x)/w(x) =0.
\end{equation}
\end{thm}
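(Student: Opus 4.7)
The strategy is to reduce every assertion to a classical unweighted embedding theorem by factoring out the weight. The key tool is the norm equivalence recorded in Remark~\ref{rem:sobolev}, which implies that for any admissible weight $u$ the map $f\mapsto u f$ is an isomorphism $L^\alpha_p(\bR^d,u)\to L^\alpha_p(\bR^d)$. Applying this to $w$ and to $v$, the question of whether the embedding $L^{\alpha_1}_{p_1}(\bR^d,w)\to L^{\alpha_2}_{p_2}(\bR^d,v)$ is continuous (resp.\ compact) is equivalent to whether the map $g\mapsto h\,g$, with $h:=v/w$, is continuous (resp.\ compact) from $L^{\alpha_1}_{p_1}(\bR^d)$ into $L^{\alpha_2}_{p_2}(\bR^d)$.

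\textbf{Steps for (A) and (B).} I factor the desired map as the composition of the classical Sobolev embedding $L^{\alpha_1}_{p_1}(\bR^d)\hookrightarrow L^{\alpha_2}_{p_2}(\bR^d)$ (which holds under the stated conditions and is standard, see Remark~\ref{rem:sobolev} together with \cite[Sec.~2.5.7]{Triebel1}) and pointwise multiplication by $h$ on $L^{\alpha_2}_{p_2}(\bR^d)$ (with $p_2=\infty$ in case (B)). For the multiplier step I first verify that $h$ is smooth with bounded derivatives of every order: admissibility of $w,v$ plus the hypothesis $\sup_{\bR^d} v/w<\infty$ give, by Leibniz and the inequality $|\partial^a w|\leq c_a w$, bounds $|\partial^a h|\leq C_a$. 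A smooth function with bounded derivatives of every order is a pointwise multiplier of each Bessel potential space $L^\alpha_p(\bR^d)$ by the Mihlin-type or paramultiplication estimates underlying the identification $L^\alpha_p=F^\alpha_{p,2}$ mentioned in Remark~\ref{rem:sobolev}. Composing the two maps yields (A); (B) is identical up to replacing the endpoint Sobolev embedding by the one into $L^{\alpha_2}_\infty$.

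\textbf{Steps for (C).} Starting from a bounded sequence $(f_n)\subset L^{\alpha_1}_{p_1}(\bR^d,w)$, I pass to $g_n:=w f_n$, which is bounded in $L^{\alpha_1}_{p_1}(\bR^d)$. Fix a smooth cutoff $\chi_R$ equal to $1$ on $B_R$ and supported in $B_{2R}$. On the compact region $B_{2R}$ the classical Rellich-Kondrachov theorem under the strict condition $\alpha_1-d/p_1>\alpha_2-d/p_2$ gives a compact embedding $L^{\alpha_1}_{p_1}(B_{2R})\hookrightarrow L^{\alpha_2}_{p_2}(B_{2R})$, so $\chi_R g_n$ has a subsequence convergent in $L^{\alpha_2}_{p_2}(\bR^d)$. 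Multiplying the limit by $h$ and using boundedness of $h$ on $B_{2R}$ gives convergence of the corresponding piece of $v f_n=h g_n$ in $L^{\alpha_2}_{p_2}(\bR^d)$. For the outer piece $(1-\chi_R)g_n$, the hypothesis $\lim_{|x|\to\infty}v/w=0$ combined with the multiplier estimates above ensures that its $L^{\alpha_2}_{p_2}(\bR^d)$ norm, after multiplication by $h$, is bounded by $\varepsilon(R)\sup_n\|g_n\|_{L^{\alpha_1}_{p_1}}$ with $\varepsilon(R)\to 0$. A standard diagonal extraction over $R\to\infty$ then produces a subsequence of $vf_n$ Cauchy in $L^{\alpha_2}_{p_2}(\bR^d)$, equivalently a subsequence of $f_n$ Cauchy in $L^{\alpha_2}_{p_2}(\bR^d,v)$.

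\textbf{Main obstacle.} The delicate point is the multiplier estimate for $h=v/w$ on $L^{\alpha_2}_{p_2}(\bR^d)$ when $\alpha_2\neq 0$: boundedness of $h$ alone is insufficient, and one needs uniform bounds on all derivatives, which must be extracted from the admissibility of $w,v$ and the boundedness (or decay) of the ratio. In the compactness argument, one also has to ensure that the cutoff $\chi_R$ does not destroy the localization estimate at the level of the fractional norm $\|\chi_R g_n\|_{L^{\alpha_1}_{p_1}}$, which again rests on the fact that multiplication by a smooth compactly supported function is a bounded operator on $L^{\alpha_1}_{p_1}$.
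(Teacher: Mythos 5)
Your argument takes a genuinely different route from the paper's. The paper appeals directly to a ready-made embedding theorem for weighted Triebel--Lizorkin spaces, namely \cite[Sec.~4.2.3, Theorem and Remark]{EdmundsTriebel}, together with the identification $L^\alpha_p(\bR^d,w)=F^\alpha_{p,2}(\bR^d,w)$ from Remark~\ref{rem:sobolev}. You instead strip off the weights via the isomorphism $f\mapsto wf$ and split the question into an unweighted Sobolev embedding followed by pointwise multiplication by $h:=v/w$. What this buys is transparency: the role of the ratio $v/w$ becomes explicit, and the compactness part reduces to the classical local Rellich--Kondrachov theorem plus an exhaustion by cutoffs. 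What it costs is that you now need the non-trivial fact that any $m\in C^\infty_{\mathrm{b}}(\bR^d)$ multiplies $L^{\alpha}_p(\bR^d)$ boundedly for every $\alpha\in\bR$, with operator norm controlled by finitely many sup-norms of derivatives of $m$; this is a standard paramultiplication result, but it is not contained in Remark~\ref{rem:sobolev} and should be cited precisely.

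Two details need tightening. First, when you check that $h$ is smooth with bounded derivatives you should record the pointwise inequality $|\partial^a h|\le C_a\,h$ --- which is what your Leibniz computation actually produces via $|\partial^b v|\le c_b v$ and $|\partial^c(w^{-1})|\le c_c' w^{-1}$ --- rather than only the weaker uniform bound $|\partial^a h|\le C_a$. In part~(C) this is essential: to make the multiplier norm of $h(1-\chi_R)$ on the fractional space $L^{\alpha_2}_{p_2}(\bR^d)$ tend to zero as $R\to\infty$ you need not just $\sup_{|x|\ge R}h\to 0$ but $\sup_{|x|\ge R}|\partial^a h|\to 0$ for all $a$ up to the relevant order, and this follows only from the pointwise bound $|\partial^a h|\le C_a h$ combined with $\lim_{|x|\to\infty}h=0$. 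Without this observation the estimate $\|h(1-\chi_R)g_n\|_{L^{\alpha_2}_{p_2}}\le\varepsilon(R)\,\|g_n\|_{L^{\alpha_1}_{p_1}}$ is not justified for $\alpha_2\neq 0$, since $(1-\Delta)^{\alpha_2/2}$ is non-local and smallness of $h$ alone at infinity is not enough. Second, the norm equivalence $\|f\|_{L^\alpha_p(\bR^d,u)}\sim\|uf\|_{L^\alpha_p(\bR^d)}$ that underpins your reduction is stated in Remark~\ref{rem:sobolev} for $p<\infty$; in part~(B) you invoke it on the target side with $p_2=\infty$, so that endpoint case needs a separate justification or a reformulation of the argument that avoids using the isomorphism on the target.
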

\begin{proof} Parts (A) and (C) follow from  \cite[Sec. 4.2.3, Theorem]{EdmundsTriebel} and the equivalence between $L^\alpha_p(\bR^d,w)$ and $F^\alpha_{p,2}(\bR^d,w)$ mentioned in Remark~\ref{rem:sobolev} above. Part (B) is covered by   \cite[Sec. 4.2.3, Remark]{EdmundsTriebel} and the embeddings stated 
in Remark~\ref{rem:sobolev}.
\end{proof}

\begin{thm}\label{thm:sobolev_multiplication}
Let $w$ be an admissible weight, $\alpha\in[0,\infty)$ and $p,p_1,p_2\in[1,\infty)$ be such that $1/p=1/p_1+1/p_2$. Then there exists $C\in(0,\infty)$ such that for all $f\in L^\alpha_{p_1}(\bR^d,w^{1/p_1})$ and $g\in L^\alpha_{p_2}(\bR^d,w^{1/p_2})$
\begin{equation}
 \|fg\|_{L^\alpha_p(\bR^d,w^{1/p})}
 \leq C\,\|f\|_{L^\alpha_{p_1}(\bR^d,w^{1/p_1})} 
 \,
 \|g\|_{L^\alpha_{p_2}(\bR^d,w^{1/p_2})}.
\end{equation}
\end{thm}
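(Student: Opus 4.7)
The plan is to combine Bony's paraproduct decomposition with the Littlewood--Paley characterization of the weighted Bessel potential spaces recorded in Remark~\ref{rem:sobolev}. The structural identity that makes the weighted estimate work is the factorization $w^{1/p}=w^{1/p_1}\cdot w^{1/p_2}$, which is immediate from $1/p=1/p_1+1/p_2$ and lets one distribute the weight between the two factors at every step.

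The base case $\alpha=0$ follows at once from this factorization and a single application of H\"older's inequality:
\begin{equation}
\|fg\|_{L_p(\bR^d, w^{1/p})}
= \|(w^{1/p_1}f)(w^{1/p_2}g)\|_{L_p(\bR^d)}
\leq \|f\|_{L_{p_1}(\bR^d, w^{1/p_1})}\,\|g\|_{L_{p_2}(\bR^d, w^{1/p_2})}.
\end{equation}
For $\alpha>0$ the strategy is to use the identification $L^\alpha_p(\bR^d, w^{1/p}) \asymp F^\alpha_{p,2}(\bR^d, w^{1/p})$ from Remark~\ref{rem:sobolev} and Bony's decomposition $fg=\pi_f g+\pi_g f+\rho(f,g)$ coming from a Littlewood--Paley partition of unity, where $\pi_f g=\sum_j S_{j-2}f\cdot\Delta_j g$ is the low-high paraproduct, $\pi_g f$ is its transpose, and $\rho(f,g)$ gathers the frequency-resonant pairs. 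Each piece is estimated via the square-function characterization of $F^\alpha_{p,2}(w^{1/p})$, a weighted Fefferman--Stein vector-valued maximal inequality, and H\"older's inequality with the weight split $w^{1/p}=w^{1/p_1}w^{1/p_2}$ and the exponent split $1/p=1/p_1+1/p_2$. A typical piece reads
\begin{equation}
\|\pi_f g\|_{F^\alpha_{p,2}(w^{1/p})}
\lesssim \big\|\sup_j |S_{j-2}f|\big\|_{L_{p_1}(w^{1/p_1})}\,
\big\|\big(\textstyle\sum_j 2^{2j\alpha}|\Delta_j g|^2\big)^{1/2}\big\|_{L_{p_2}(w^{1/p_2})}
\lesssim \|f\|_{L_{p_1}(w^{1/p_1})}\,\|g\|_{L^\alpha_{p_2}(w^{1/p_2})},
\end{equation}
and the $L_{p_1}$ factor is then absorbed into $L^\alpha_{p_1}$ using the embedding $L^\alpha_{p_1}(w^{1/p_1})\hookrightarrow L_{p_1}(w^{1/p_1})$, which is available thanks to $\alpha\geq 0$. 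The transpose paraproduct is symmetric, and the resonant term requires only $\alpha\geq 0$ to secure summability of the $2^{j\alpha}$ factors.

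The main technical point is the weighted vector-valued Fefferman--Stein maximal inequality needed to convert frequency localizations into square functions. For Muckenhoupt $A_p$ weights this is classical; for the admissible class of Def.~\ref{dfn:weights} (which includes the polynomially growing weights used elsewhere in this paper) it is part of the weighted Triebel--Lizorkin theory of Schmeisser--Triebel, from which one can extract the paraproduct estimate in $F^\alpha_{p,2}(\bR^d, w)$ directly. A route that bypasses weighted harmonic analysis altogether is to reduce to the unweighted case at the outset: setting $F:=w^{1/p_1}f$ and $G:=w^{1/p_2}g$ and using the norm equivalence $\|h\|_{L^\alpha_q(\bR^d, w^{1/q})} \asymp \|w^{1/q}h\|_{L^\alpha_q(\bR^d)}$ of Remark~\ref{rem:sobolev}, the claim collapses to the standard unweighted Kato--Ponce-type estimate $\|FG\|_{L^\alpha_p(\bR^d)} \lesssim \|F\|_{L^\alpha_{p_1}(\bR^d)}\|G\|_{L^\alpha_{p_2}(\bR^d)}$. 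This requires only that $w^{1/q}$ inherits admissibility from $w$, which is immediate from the derivative bounds in Def.~\ref{dfn:weights} via the Fa\`a di Bruno formula.
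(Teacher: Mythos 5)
Your final paragraph — reduce to the unweighted case by setting $F:=w^{1/p_1}f$, $G:=w^{1/p_2}g$, using the norm equivalence $\|h\|_{L^\alpha_q(\bR^d,w^{1/q})}\asymp\|w^{1/q}h\|_{L^\alpha_q(\bR^d)}$ and the factorization $w^{1/p}=w^{1/p_1}w^{1/p_2}$, then invoke the unweighted fractional Leibniz rule — is precisely the paper's proof: the paper cites the same norm equivalence from Remark~\ref{rem:sobolev}, the fractional Leibniz rule from Muscalu--Schlag, and Theorem~\ref{thm:embedding}~(A) (to supply the trivial embedding $L^\alpha_{p_i}\hookrightarrow L_{p_i}$ for $\alpha\geq 0$, which you invoke in the same role). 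Your remark that $w^{1/q}$ inherits admissibility is a point the paper leaves implicit and is worth recording.

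Your primary route — a direct Bony decomposition inside $F^\alpha_{p,2}(\bR^d,w^{1/p})$ — is morally fine but contains a step I would not let pass as stated. You appeal to a weighted Fefferman--Stein vector-valued maximal inequality and say that for admissible weights this is ``part of the weighted Triebel--Lizorkin theory of Schmeisser--Triebel.'' But the admissible class of Def.~\ref{dfn:weights} is not contained in any Muckenhoupt class: the very weight $v_L\sim(1+|x|^2)^{-16}$ used throughout this paper fails $A_p$ on $\bR^2$ for every $p\in[1,\infty)$, since $A_p$ requires the decay exponent to exceed $-d$. The Hardy--Littlewood maximal operator is in fact unbounded on $L_p(\bR^d,v_L^{1/p})$ for such weights, so the inequality
\begin{equation}
\bigl\|\sup_j|S_{j-2}f|\bigr\|_{L_{p_1}(\bR^d,w^{1/p_1})}\lesssim\|f\|_{L_{p_1}(\bR^d,w^{1/p_1})}
\end{equation}
does not hold by maximal-function arguments alone. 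What is true — and what Schmeisser--Triebel and Edmunds--Triebel actually establish for this class — is that the weighted spaces are isomorphic to the unweighted ones via multiplication by $w^{1/q}$, which is exactly the reduction of your final paragraph. So when the paraproduct route is carried out rigorously for admissible (rather than $A_p$) weights, it collapses to the shortcut; the shortcut is not merely an alternative, it is the load-bearing argument.
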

\begin{proof}
The statement follows from the equivalence of the norms $\|\Cdot\|_{L^\alpha_p(\bR^d,w)}$ and $\|w\Cdot\|_{L^\alpha_p(\bR^d)}$, the fractional Leibniz rule~\cite[Ch.~2]{Muscalu} and Theorem A.5 (A).  Alternatively, one can use \cite[Lemma 5]{BM01}.
\end{proof}

\begin{thm}\label{thm:sobolev_interpolation}
Let $w$ be an admissible weight, $p_1,p_2\in[1,\infty)$, $\alpha_1,\alpha_2\in\bR$, $\theta\in(0,1)$ and 
\begin{equation}
 \alpha=\theta\,\alpha_1+(1-\theta)\,\alpha_2,
 \qquad\qquad
 \frac{1}{p} = \frac{\theta}{p_1} + \frac{1-\theta}{p_2}.
\end{equation}
There exists $C\in(0,\infty)$ such that for all $f\in L^{\alpha_1}_{p_1}(\bR^d,w^{1/p_1})\cap L^{\alpha_2}_{p_2}(\bR^d,w^{1/p_2})$ it holds
\begin{equation}
 \|f\|_{L^\alpha_p(\bR^d,w^{1/p})} \leq C\,\|f\|^\theta_{L^{\alpha_1}_{p_1}(\bR^d,w^{1/p_1})}\, \|f\|^{1-\theta}_{L^{\alpha_2}_{p_2}(\bR^d,w^{1/p_2})}\,.
\end{equation}
\end{thm}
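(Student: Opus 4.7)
I will deduce this interpolation inequality from the complex interpolation identity
\[
[L^{\alpha_1}_{p_1}(\bR^d, w^{1/p_1}),\, L^{\alpha_2}_{p_2}(\bR^d, w^{1/p_2})]_\theta \;=\; L^\alpha_p(\bR^d, w^{1/p})
\]
(with equivalent norms), combined with the universal bound $\|f\|_{[A_0,A_1]_\theta}\le \|f\|_{A_0}^{1-\theta}\|f\|_{A_1}^{\theta}$ that Calder\'on's complex method satisfies on any compatible Banach couple. Once the identification is in place the theorem is a one-line consequence, so essentially all the content of the argument lies in establishing this identification.

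For the identification, I would proceed in two stages. The case $\alpha_1=\alpha_2=0$ reduces to complex interpolation of weighted $L_p$ spaces with a common underlying measure $w(x)\,\rd x$, which follows from the classical (unweighted) $L_p$ interpolation by absorbing the weight into the reference measure; an elementary direct verification is also available by writing
\[
\int |f|^p\, w\,\rd x \;=\; \int (|f|^{p_1}w)^{p\theta/p_1}(|f|^{p_2}w)^{p(1-\theta)/p_2}\,\rd x
\]
and applying H\"older's inequality with conjugate exponents $p_1/(p\theta)$ and $p_2/(p(1-\theta))$. To pass from $\alpha=0$ to general $\alpha_1,\alpha_2$, the Bessel lifting $(1-\Delta)^{\sigma/2}$ is a bounded isomorphism between the corresponding weighted spaces for every admissible weight (by a weighted Mikhlin--H\"ormander multiplier estimate whose hypotheses are satisfied by any admissible weight in the sense of Definition~\ref{dfn:weights}), and transporting the unweighted interpolation identity through an analytic family of such liftings parametrised by $\sigma\in\bC$ yields the full identification. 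A cleaner alternative that bypasses the explicit analytic-family argument is to use Remark~\ref{rem:sobolev} to identify $L^\alpha_p(\bR^d,w)$ with the weighted Triebel--Lizorkin space $F^\alpha_{p,2}(\bR^d,w)$ and invoke the complex interpolation theorem for the $F$-scale with admissible weights, as recorded in Triebel's monographs.

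I expect the main technical point to be the behaviour of the Bessel lifting on the specific class of admissible weights of Definition~\ref{dfn:weights} (equivalently, the validity of the $F$-scale interpolation theorem for these weights); the remaining ingredients (the abstract interpolation bound and the $L_p$ H\"older calculation) are entirely routine. The polynomial growth and bounded logarithmic-derivative conditions built into the definition of an admissible weight are precisely what is needed for standard weighted multiplier theory to apply, so this step should go through without difficulty and the stated inequality follows.
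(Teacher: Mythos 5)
Your approach is correct but takes a genuinely different route from the paper. The paper's proof is a one-liner: it invokes the equivalence $L^\alpha_p(\bR^d,w)\simeq F^\alpha_{p,2}(\bR^d,w)$ from Remark~\ref{rem:sobolev} and then H\"older's inequality, citing \cite[Sec.~3]{BM01}. Unpacked, this is a direct double-H\"older argument on the Littlewood--Paley characterization and avoids complex interpolation entirely: writing the $F^\alpha_{p,2}(\bR^d,w^{1/p})$ norm as the $L_p(w\,\rd x)$ norm of the square function $\bigl(\sum_j 2^{2j\alpha}|\Delta_j f|^2\bigr)^{1/2}$ (with $\Delta_j$ the $j$-th Littlewood--Paley block), one uses $\alpha=\theta\alpha_1+(1-\theta)\alpha_2$ to apply H\"older pointwise in the sum over $j$ with exponents $1/\theta$ and $1/(1-\theta)$, bounding the $\alpha$-square-function by the product of the $\alpha_1$- and $\alpha_2$-square-functions raised to powers $\theta$ and $1-\theta$; then one applies H\"older in $x$ with exponents $p_1/(p\theta)$ and $p_2/(p(1-\theta))$, which is exactly the computation you wrote out for the $\alpha_1=\alpha_2=0$ case. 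Your primary route instead needs the complex interpolation identity for weighted Bessel/Triebel--Lizorkin spaces, which is true and quotable but rests on considerably heavier machinery (Stein-type analytic families of Bessel liftings on admissible weights, or the retraction/coretraction method for the weighted $F$-scale). Both arguments share the nontrivial input $L^\alpha_p(\bR^d,w)\simeq F^\alpha_{p,2}(\bR^d,w)$; beyond that, the paper's route needs only elementary H\"older and is therefore more self-contained, whereas yours imports an interpolation theorem that is stronger than the inequality being proved. One small bookkeeping point: with the usual convention $[A_0,A_1]_0=A_0$ and $\|f\|_{[A_0,A_1]_\theta}\le\|f\|_{A_0}^{1-\theta}\|f\|_{A_1}^{\theta}$, your couple should be ordered as $[L^{\alpha_2}_{p_2}(w^{1/p_2}),L^{\alpha_1}_{p_1}(w^{1/p_1})]_\theta$ to match the exponent assignment $\alpha=\theta\alpha_1+(1-\theta)\alpha_2$ in the statement.
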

\begin{proof} 
The statement is a consequence of the equivalence of the Bessel potential spaces $L^\alpha_p(\bR^d,w)$ with the Triebel-Lizorkin spaces $F^\alpha_{p,2}(\bR^d,w)$, mentioned in Remark~\ref{rem:sobolev}, and the H{\"o}lder inequality, cf. \cite[Sec 3]{BM01}.
\end{proof}

\newcommand{\pnew}{p'}

\begin{lem}\label{lem:Z_Psi_n}
Let $w\in L_1(\bR^2)$ be an admissible weight, $n\in\{3,4,\ldots\}$, $\delta\in(0,\infty)$ and $\kappa\in(0,2/(n-1)(n-2))$. 
Then there exists $C\in(0,\infty)$ and $p\in[1,\infty)$ such that for all \mbox{$m\in\{1,\ldots,n-1\}$} and $\Psi\in L_2^1(\bR^2,w^{1/2}) \cap L_n(\bR^2,w^{1/n})$, $Z\in L^{-\kappa}_p(\bR^2,w^{1/p})$  it holds
\begin{equation}
 |\langle Z,\Psi^m \rangle_{L_2(\bR^2,w^{1/2})}| 
 \leq
 C\,\|Z\|_{L^{-\kappa}_p(\bR^2,w^{1/p})}^p+
 \delta\,\|\vec\nabla\Psi\|_{L_2(\bR^2,w^{1/2})}^2+
 \delta\,\|\Psi\|_{L_n(\bR^2,w^{1/n})}^n+\delta.
\end{equation}
\end{lem}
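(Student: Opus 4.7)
The plan is to chain together the generalized Hölder inequality of Remark~\ref{Hoelder-remark}, the Sobolev multiplication estimate of Theorem~\ref{thm:sobolev_multiplication}, the Sobolev interpolation inequality of Theorem~\ref{thm:sobolev_interpolation} and a three-term Young inequality. The upper bound $\kappa<2/(n-1)(n-2)$ enters the argument only at the last step, where it is exactly what is needed to make Young's exponents close in the worst case $m=n-1$; verifying that this is sharp is the main technical obstacle.

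First, for a Hölder dual pair $(p,q)$ with $1/p+1/q=1$ (to be fixed below) Remark~\ref{Hoelder-remark} applied to the pairing of $Z$ with $\Psi^m$, with order $-\kappa$, gives
\[
 |\langle Z,\Psi^m \rangle_{L_2(\bR^2,w^{1/2})}|
 \le C\, \|Z\|_{L^{-\kappa}_p(\bR^2,w^{1/p})}\, \|\Psi^m\|_{L^{\kappa}_q(\bR^2,w^{1/q})}.
\]
Iterating Theorem~\ref{thm:sobolev_multiplication} $m-1$ times with each factor of $\Psi$ placed in $L^\kappa_{qm}(\bR^2,w^{1/(qm)})$ (the weights compose coherently since $1/q=m\cdot 1/(qm)$) yields
\[
 \|\Psi^m\|_{L^{\kappa}_q(\bR^2,w^{1/q})}
 \le C\,\|\Psi\|_{L^\kappa_{qm}(\bR^2,w^{1/(qm)})}^{\,m}.
\]
I then apply Theorem~\ref{thm:sobolev_interpolation} with $(\alpha_1,p_1)=(1,2)$, $(\alpha_2,p_2)=(0,n)$, $\theta=\kappa\in(0,1)$; matching orders with the Hölder step forces $\theta=\kappa$, and the target space is $L^\kappa_{qm}(\bR^2,w^{1/(qm)})$ precisely when $1/(qm)=\kappa/2+(1-\kappa)/n$, which fixes $q$. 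Chaining the three estimates gives
\[
 |\langle Z,\Psi^m \rangle_{L_2(\bR^2,w^{1/2})}|
 \le C\,\|Z\|_{L^{-\kappa}_p(\bR^2,w^{1/p})}\,
 \|\Psi\|_{L^1_2(\bR^2,w^{1/2})}^{\kappa m}\,\|\Psi\|_{L_n(\bR^2,w^{1/n})}^{(1-\kappa)m}.
\]
Using $w\in L_1(\bR^2)$ and a Hölder inequality with exponents $(n/2,n/(n-2))$, the full $L^1_2(\bR^2,w^{1/2})$ norm can in turn be controlled by $\|\vec\nabla\Psi\|_{L_2(\bR^2,w^{1/2})}+\|\Psi\|_{L_n(\bR^2,w^{1/n})}$ up to a multiplicative constant.

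To conclude I apply the three-term Young inequality with exponents $(p,\,2/(\kappa m),\,n/((1-\kappa)m))$. Their reciprocals sum to $1$ exactly when $1/p=1-m(\kappa/2+(1-\kappa)/n)$, and the condition $p\in(1,\infty)$ is equivalent to $m(\kappa/2+(1-\kappa)/n)<1$. Since $m\mapsto m(\kappa/2+(1-\kappa)/n)$ is increasing, the tightest case is $m=n-1$, and a direct computation gives $(n-1)(\kappa/2+(1-\kappa)/n)=(n-1)/n+\kappa(n-1)(n-2)/(2n)$, which is less than $1$ iff $\kappa<2/(n-1)(n-2)$; this matches the hypothesis exactly. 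I then take $p$ equal to the (largest) value attained at $m=n-1$, and for each smaller value of $m$ use the elementary bounds $\|Z\|^{p_m}\le 1+\|Z\|^p$ and $\|\Psi\|_{L_n}^{s}\le \delta\|\Psi\|_{L_n}^n+C_\delta$ for $s<n$ to rewrite the estimates in a uniform form. All residual multiplicative constants pair with $\delta$ small factors arising from Young, and purely additive constants are absorbed into the $+\delta$ term of the statement, yielding the claimed bound.
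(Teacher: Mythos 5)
Your proof is correct and follows essentially the same route as the paper's: generalized H\"older, then the Sobolev multiplication and interpolation theorems to reduce $\|\Psi^m\|_{L^\kappa_q}$ to the $L^1_2$ and $L_n$ norms of $\Psi$, then Young's inequality with exponents $(p_m,\,2/(\kappa m),\,n/((1-\kappa)m))$ and the $w\in L_1$ observation to pass to the uniform exponent $p=p_{n-1}$. The only point you should make explicit is that the shorthand $\|Z\|^{p_m}\le 1+\|Z\|^p$ hides a change of space and must be preceded by the embedding $L^{-\kappa}_p(\bR^2,w^{1/p})\hookrightarrow L^{-\kappa}_{p_m}(\bR^2,w^{1/p_m})$ for $p_m\le p$, which, as the paper notes, is again a consequence of H\"older's inequality and $w\in L_1(\bR^2)$.
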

\begin{proof}
Let $1/r=(1-\kappa)/n+\kappa/2$, $1/q = m/r$ and $1/\pnew=1-1/q$. By H{\"o}lder's inequality
\begin{equation}
 |\langle Z,\Psi^m \rangle_{L_2(\bR^2,w^{1/2})}| \leq C\,\|Z\|_{L^{-\kappa}_{\pnew}(\bR^2,w^{1/\pnew})}\,\|\Psi^m\|_{L^\kappa_q(\bR^2,w^{1/q})},
\end{equation}
for some $C\in(0,\infty)$. Theorem~\ref{thm:sobolev_multiplication} implies that
\begin{equation}
 \|\Psi^m\|_{L^\kappa_q(\bR^d,w^{1/q})} \leq C\, \|\Psi\|_{L^\kappa_{r}(\bR^d,w^{1/r})}^m
\end{equation}
and Theorem~\ref{thm:sobolev_interpolation} implies that
\begin{equation}
 \|\Psi\|_{L^\kappa_r(\bR^d,w^{1/r})} \leq C\,\|\Psi\|_{L^1_2(\bR^d,w^{1/2})}^\kappa\,\|\Psi\|_{L_n(\bR^d,w^{1/n})}^{1-\kappa}
\end{equation}
for some $C\in(0,\infty)$. Combining the above bounds we obtain
\begin{equation}
 |\langle Z,\Psi^m \rangle_{L^2(\bR^d,w^{1/2})}|
 \leq C\,\|Z\|_{L^{-\kappa}_{\pnew}(\bR^d,w^{1/\pnew})}\,
 \|\Psi\|_{L^1_2(\bR^d,w^{1/2})}^{m\kappa}
 \,\|\Psi\|_{L_n(\bR^d,w^{1/n})}^{m(1-\kappa)}
\end{equation}
for some $C\in(0,\infty)$. Hence, by Young's inequality for every $\delta\in(0,\infty)$ there is $C\in(0,\infty)$ such that
\begin{equation}\label{eq:Z_Psi_n_proof}
 |\langle Z,\Psi^m \rangle_{L_2(\bR^2,w^{1/2})}| 
 \leq C\,\|Z\|_{L^{-\kappa}_{\pnew}(\bR^2,w^{1/\pnew})}^{\pnew}\,
 +\delta\,\|\Psi\|_{L^1_2(\bR^2,w^{1/2})}^2
 +\delta\,\|\Psi\|_{L_n(\bR^2,w^{1/n})}^n.
\end{equation}
We observe that by H{\"o}lder's inequality and the assumption $w\in L_1(\bR^2)$ for all $q,r\in[1,\infty)$ such that $q\leq r$ there exists $C\in(0,\infty)$ such that $\|\Cdot\|_{L_q(\bR^2,w^{1/q})}\leq C\,\|\Cdot\|_{L_r(\bR^2,w^{1/r})}$. Hence, the bound~\eqref{eq:Z_Psi_n_proof} implies the statement of the lemma with $1/p = (2-\kappa(n-1)(n-2))/2n$.
\end{proof}

\begin{lem}\label{lem:embedding_sphere}
Let $p\in[2,\infty)$ and $\alpha=1-2/p$. Then there exists $C\in(0,\infty)$ such that $\|f\|_{L_p(\bS_R)}\leq C\,\|f\|_{L_2^\alpha(\bS_R)}$ for all $f\in L_2^\alpha(\bS_R)$ and all $R\in[1,\infty)$.
\end{lem}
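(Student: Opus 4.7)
My strategy would be to reduce the statement to the Sobolev embedding on the unit sphere $\bS_1$ by a scaling argument that carefully tracks the $R$-dependence, exploiting the fact that $\alpha = 1 - 2/p$ is the scale-invariant Sobolev exponent in dimension two.

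First, I would invoke the classical Sobolev embedding $\|g\|_{L_p(\bS_1)} \leq C\,\|g\|_{L_2^\alpha(\bS_1)}$ on the compact two-dimensional manifold $\bS_1$, which is standard (e.g. via spherical harmonics or via local charts plus the Euclidean critical embedding). Next, I would transfer $f \in L_2^\alpha(\bS_R)$ to the unit sphere using the dilation $\sigma_R\,:\,\bS_1 \to \bS_R$, $\omega \mapsto R\omega$, setting $\tilde f := f \circ \sigma_R$. Since the pulled-back metric is $R^2$ times the unit-sphere metric, I would compute
\begin{equation}
 \|f\|_{L_p(\bS_R)} = R^{2/p}\,\|\tilde f\|_{L_p(\bS_1)},
 \qquad
 \|g\|_{L_2(\bS_R)} = R\,\|\tilde g\|_{L_2(\bS_1)},
 \qquad
 \widetilde{\Delta_R f} = R^{-2}\,\Delta_1 \tilde f.
\end{equation}
Using these and $1 - 2/p = \alpha$, the target inequality $\|f\|_{L_p(\bS_R)} \leq C\,\|(1-\Delta_R)^{\alpha/2} f\|_{L_2(\bS_R)}$ rephrases as
\begin{equation}
 \|\tilde f\|_{L_p(\bS_1)} \leq C\, R^{\alpha}\,\|(1 - R^{-2}\Delta_1)^{\alpha/2} \tilde f\|_{L_2(\bS_1)}.
\end{equation}

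To close the argument, I would apply the embedding from the first step to bound $\|\tilde f\|_{L_p(\bS_1)} \leq C\,\|(1 - \Delta_1)^{\alpha/2} \tilde f\|_{L_2(\bS_1)}$, and then compare the two Bessel norms via spherical harmonics. Expanding $\tilde f = \sum_\ell \tilde f_\ell$ with $-\Delta_1 \tilde f_\ell = \ell(\ell+1)\tilde f_\ell$, Plancherel yields
\begin{equation}
 R^{2\alpha}\,\|(1 - R^{-2}\Delta_1)^{\alpha/2} \tilde f\|_{L_2(\bS_1)}^2
 = \sum_\ell (R^2 + \ell(\ell+1))^\alpha\,\|\tilde f_\ell\|_{L_2(\bS_1)}^2,
\end{equation}
while the analogous expression for $(1 - \Delta_1)^{\alpha/2}$ has $R^2$ replaced by $1$. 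Since $R \geq 1$ and $\alpha = 1 - 2/p \geq 0$ (as $p \geq 2$), each term in the sum dominates the corresponding unit-sphere term, giving the required majorization with constant $1$. The main conceptual point is that the exponent $\alpha = 1 - 2/p$ is precisely the scale-invariant one in two dimensions, so no $R$-factor is lost; everything else is bookkeeping, and the only place where it is essential to use $R \geq 1$ (rather than an arbitrary lower bound) is the final term-by-term comparison.
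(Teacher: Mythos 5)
Your proof is correct. The paper itself proves this lemma purely by citation to Beckner and to Varopoulos--Saloff-Coste--Coulhon, so there is no argument to compare line by line; your scaling reduction supplies a self-contained proof of the $R$-uniform statement which those references leave implicit. The bookkeeping is right: the pullback by $\sigma_R$ carries $\rho_R$ to $R^2\rho_1$ and $\Delta_R$ to $R^{-2}\Delta_1$, and the identity
\begin{equation}
 R^{2\alpha}\,\|(1 - R^{-2}\Delta_1)^{\alpha/2}\tilde f\|_{L_2(\bS_1)}^2
 = \sum_{\ell\geq 0}\bigl(R^2+\ell(\ell+1)\bigr)^{\alpha}\|\tilde f_\ell\|_{L_2(\bS_1)}^2
\end{equation}
makes the term-by-term domination of $\sum_\ell (1+\ell(\ell+1))^{\alpha}\|\tilde f_\ell\|^2_{L_2(\bS_1)}$ immediate once $R\geq 1$ and $\alpha\geq 0$, with constant exactly the unit-sphere Sobolev constant. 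Your observation that $\alpha=1-2/p$ is the scale-critical exponent in dimension two, so that the powers of $R$ cancel, is precisely the reason the lemma is uniform in $R$. The one thing worth making explicit is the base case: the critical Sobolev embedding $L_2^{1-2/p}(\bS_1)\hookrightarrow L_p(\bS_1)$ is standard but not trivial (it is exactly what Beckner's theorem establishes, with sharp constant); a blind reader might otherwise wonder whether the burden has merely been moved. Apart from that stylistic point, the argument is complete and arguably more transparent for this paper's purposes than deferring to the general machinery in the cited texts.
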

\begin{proof} See e.g.~\cite[Theorem~6]{Be01} or~\cite[Theorem~II.2.7(ii)]{Varopoulos}.
\end{proof}

\section{Mathematical preliminaries}\label{sec:preliminaries}

\begin{lem}\label{lem:trace}
Let $\kappa\in(0,\infty)$. There exists $C\in(0,\infty)$ such that for all $R,N\in[1,\infty)$ it holds 
\begin{equation}
 -C\leq \sum_{l=0}^\infty \frac{(2l+1)}{2R^2\,(1+l(l+1)/R^2)\,(1+l(l+1)/(NR)^2)^\kappa} - \log (N+1) \leq C.
\end{equation}
\end{lem}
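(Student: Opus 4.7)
Let $f(l) := \frac{2l+1}{2R^2\,(1+l(l+1)/R^2)\,(1+l(l+1)/(NR)^2)^\kappa}$, write $S := \sum_{l=0}^\infty f(l)$, and let $\varphi:[0,\infty)\to(0,\infty)$ be the continuous extension of $f$ obtained by replacing the integer $l$ with a real variable $x$. The change of variable $\lambda = x(x+1)/R^2$, for which $\rd\lambda = (2x+1)\,\rd x/R^2$, gives
\begin{equation}
I_N := \int_0^\infty \varphi(x)\,\rd x = \int_0^\infty \frac{\rd \lambda}{2(1+\lambda)(1+\lambda/N^2)^\kappa}.
\end{equation}
The plan is to prove (a) $|S - I_N| \leq C$ uniformly in $R,N\geq 1$, and (b) $I_N = \log(N+1) + O(1)$ uniformly in $N\geq 1$. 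Combining these two estimates immediately yields the desired two-sided bound.

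\emph{Comparison of sum and integral.} I will first observe that $\varphi$ is unimodal on $[0,\infty)$: its logarithmic derivative equals $2/(2x+1)$ plus two strictly decreasing rational terms coming from $\log(1+x(x+1)/R^2)$ and $\kappa\log(1+x(x+1)/(NR)^2)$, hence vanishes at a single point. For any unimodal $\varphi$ with $\varphi(x)\to 0$ at infinity, a standard interval-by-interval comparison gives $|S - I_N| \leq 2\,\sup_{x\geq 0}\varphi(x)$. Since $(1+\lambda/N^2)^\kappa\geq 1$ it suffices to bound $(2x+1)/(2(R^2+x(x+1)))$. On $[0,R]$ this is at most $(2R+1)/(2R^2)\leq 3/(2R)$, and on $[R,\infty)$ one uses $R^2+x(x+1)\geq x^2$ to get $(2x+1)/(2x^2)\leq 3/(2R)$; since $R\geq 1$ the uniform bound $\sup\varphi\leq 3/2$ follows.

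\emph{Evaluation of the integral.} The substitution $u = \lambda/N^2$ rewrites $I_N = \tfrac12\int_0^\infty \rd u/\bigl((u+1/N^2)(1+u)^\kappa\bigr)$. The tail $[1,\infty)$ contributes at most $\tfrac12\int_1^\infty u^{-1-\kappa}\,\rd u = 1/(2\kappa)$. On $[0,1]$, write $(1+u)^{-\kappa} = 1 - r(u)$ with $|r(u)|\leq \kappa u$. The leading part is
\begin{equation}
\tfrac12\int_0^1 \frac{\rd u}{u+1/N^2} = \tfrac12\log\bigl(1+N^2\bigr) = \log N + O(1),
\end{equation}
while the remainder $\tfrac12\int_0^1 r(u)/(u+1/N^2)\,\rd u$ is bounded by $\kappa/2$ because $r(u)/u$ is bounded on $(0,1]$. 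Thus $I_N = \log N + O(1)$, and since $\log(N+1) = \log N + O(1/N)$ for $N\geq 1$ one obtains $I_N = \log(N+1) + O(1)$.

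\textbf{Main obstacle.} The only substantive point is establishing the \emph{uniform} bound on $\sup\varphi$ in $R$ and $N$, because the natural upper estimate naively blows up near $x\approx\sqrt{R}$; splitting the domain at $x = R$ handles this. The Riemann-sum comparison and the elementary estimate of $I_N$ are then entirely routine.
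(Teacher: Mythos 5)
Your strategy — separate the comparison of the lattice sum $S$ with the integral $I_N$ of its continuous extension from the evaluation of $I_N$ — is correct in outline and genuinely different from the paper's. The paper represents both $S$ (via a floor function under the integral sign) and $\log(N+1)$ (as $\int_0^\infty \rd l/((1+l)(1+(1+l)/N))$) as integrals, rescales $l\mapsto Rl$, removes the floors, and then estimates the pointwise difference of the two integrands in one step; the price is a somewhat delicate comparison of the kernels $(1+l)^{-1}(1+l^2/N^2)^{-\kappa}$ and $(1+l)^{-1}(1+(1+l)/N)^{-1}$. Your two-step version is more modular and transparent: a total-variation bound controls $|S-I_N|$ uniformly once $\sup\varphi\leq 3/(2R)$ is established, and the change of variable $\lambda=x(x+1)/R^2$ collapses $I_N$ to a one-dimensional integral that is evaluated up to $O(1)$ by a short elementary calculation.

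There is, however, a flaw in the justification of unimodality that you need to repair. The two rational terms you describe, $-(2x+1)/(R^2+x(x+1))$ and $-\kappa(2x+1)/((NR)^2+x(x+1))$, are \emph{not} strictly decreasing on $[0,\infty)$: differentiating $(2x+1)/(R^2+x(x+1))$ gives numerator $2R^2-2x^2-2x-1$, positive for small $x$ and negative for large $x$, so each term first decreases and then increases. The "sum of three decreasing functions" structure you invoke therefore fails. The conclusion is nevertheless true, and a correct argument is short: put $s=x+\tfrac12$, $A=R^2-\tfrac14\geq\tfrac34$, $B=(NR)^2-\tfrac14\geq\tfrac34$, so that
\begin{equation}
 \bigl(x+\tfrac12\bigr)\,\frac{\rd}{\rd x}\log\varphi(x)\;=\;1-\frac{2s^2}{A+s^2}-\frac{2\kappa s^2}{B+s^2}\,.
\end{equation}
Each fraction on the right is strictly increasing in $s>0$ (the first has derivative $4sA/(A+s^2)^2>0$, and similarly with $B$), so the whole expression is strictly decreasing and tends to $-1-2\kappa<0$; hence it, and therefore $\varphi'$, changes sign at most once on $[0,\infty)$. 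With this repair, and together with your uniform bound $\sup\varphi\leq 3/(2R)$ and the evaluation $I_N=\log(N+1)+O(1)$, the argument is sound.
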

\begin{proof}
Observe that the expression in the statement of the lemma coincides with
\begin{equation}
 \int_0^\infty \frac{(2\floor{l}+1)\,\rd l}{2R^2\,(1+\floor{l}(\floor{l}+1)/R^2)\,(1+\floor{l}(\floor{l}+1)/(NR)^2)^\kappa} - \int_0^\infty \frac{\rd l}{(1+l)(1+(1+l)/N)}.
\end{equation}
The absolute value of the above expression is bounded by
\begin{equation}
 \int_0^\infty \left|\frac{(2\floor{R l}+1)/R}{2(1+\floor{R l}(\floor{R l}+1)/R^2)\,(1+\floor{R l}(\floor{R l}+1)/(NR)^2)^\kappa} 
 - \frac{1}{(1+l)(1+(1+l)/N)}\right|\rd l.
\end{equation}
Using $0 \leq l-\floor{R l}/R \leq 1$ we show that there exists $\hat C\in(0,\infty)$ such that the above expression is bounded by
\begin{equation}
 \hat C+ \int_0^\infty \left|\frac{1}{(1+l)(1+l^2/N^2)^\kappa} - \frac{1}{(1+l)(1+(1+l)/N)}\right|\rd l
 \leq C.
\end{equation}
This finishes the proof. 
\end{proof}

\begin{dfn}\label{dfn:tightness_weak_convergence}
Let $\mathcal{X}$ be a topological space and let $(\mu_{n})_{n \in \mathbb{N}_+}$ be a sequence of probability measures  defined on $(\mathcal{X},\mathrm{Borel}(\mathcal{X}))$. The sequence $(\mu_{n})_{n \in \mathbb{N}_+}$
is tight iff for every $\epsilon>0$ there exists a compact set $K_\epsilon\subset\mathcal{X}$ such that $\mu_n(K_\epsilon)\geq 1-\epsilon$ for all $n\in\mathbb{N}_+$. The sequence $(\mu_{n})_{n \in \mathbb{N}_+}$ converges weakly if for every bounded $F\in C(\mathcal{X})$ the sequence of real numbers $(\mu_n(F))_{n\in\mathbb{N}_+}$ converges.
\end{dfn}
\begin{thm}[Prokhorov's theorem] 
Let $\mathcal{X}$ be a separable metric space. A~sequence of probability measures $(\mu_{n})_{n\in\mathbb{N}_+}$ on $(\mathcal{X},\mathrm{Borel}(\mathcal{X}))$ is tight  iff  there exists a diverging sequence of natural numbers $(a_n)_{n\in\mathbb{N}_+}$ such that the sequence $(\mu_{a_n})_{n\in\mathbb{N}_+}$ converges weakly.
\end{thm}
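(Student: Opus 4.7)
The statement is Prokhorov's theorem. The nontrivial direction is that tightness implies the existence of a weakly convergent subsequence; my plan is a compactification-plus-diagonalization argument, with tightness invoked at the end to confine the limit measure to $\mathcal{X}$ rather than to any points added at infinity.

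First, I would use separability to embed $\mathcal{X}$ homeomorphically into the Hilbert cube $[0,1]^{\bN}$ (Urysohn metrization) and let $\widehat{\mathcal{X}}$ be the closure of the image, which is a compact metrizable space; each $\mu_n$ then pushes forward to a Borel probability measure $\widehat\mu_n$ on $\widehat{\mathcal{X}}$. Since $C(\widehat{\mathcal{X}})$ is separable, the dual unit ball is weak-$*$ compact and metrizable by Banach--Alaoglu, so some subsequence $(\widehat\mu_{a_n})$ converges weak-$*$ to a positive linear functional; by Riesz--Markov this functional is represented by a positive Borel measure $\widehat\mu$ on $\widehat{\mathcal{X}}$ of total mass at most~$1$.

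Next I would use tightness to show $\widehat\mu$ is a probability measure concentrated on $\mathcal{X}$. Pick compact sets $K_m\subset\mathcal{X}$ with $\mu_n(K_m)\geq 1-1/m$ for every $n$; each $K_m$ is closed in $\widehat{\mathcal{X}}$, so the portmanteau lemma gives $\widehat\mu(K_m)\geq \limsup_n\widehat\mu_{a_n}(K_m)\geq 1-1/m$, forcing $\widehat\mu$ to be concentrated on $\bigcup_m K_m\subset\mathcal{X}$ with total mass $1$ and hence to descend to a Borel probability measure $\mu$ on $\mathcal{X}$. To transfer the convergence back to $\mathcal{X}$: given $f\in C_b(\mathcal{X})$, Tietze-extend $f|_{K_m}$ to $\widetilde f_m\in C(\widehat{\mathcal{X}})$ with the same sup-norm, apply weak-$*$ convergence on $\widehat{\mathcal{X}}$ to $\widetilde f_m$, and bound the error on $K_m^c$ by $2\|f\|_\infty/m$ uniformly in $n$ using tightness; letting $m\to\infty$ then yields $\mu_{a_n}(f)\to\mu(f)$.

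The main obstacle is this last step: the Tietze extension of $f|_{K_m}$ need not agree with $f$ outside $K_m$, so the transfer of convergence really does require the uniformity of the tightness bound in $n$. The converse direction, interpreted in the Polish-space setting actually used in the paper, follows from the standard fact that any weakly convergent sequence of Borel probability measures on a Polish space is tight (via Ulam's theorem combined with an outer-regularity approximation by compact sets).
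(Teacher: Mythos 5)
The paper does not prove this theorem; it is stated as a classical result, so there is no internal proof to compare against. Your argument for the forward implication — tightness implies the existence of a weakly convergent subsequence — is correct and is essentially the standard textbook proof: embed the separable metric space $\mathcal{X}$ in the Hilbert cube, pass to its compact metrizable closure $\widehat{\mathcal{X}}$, extract a weak-$*$ convergent subsequence via Banach--Alaoglu and Riesz--Markov, and then use tightness with the portmanteau lemma to confine the limit measure to a $\sigma$-compact subset of $\mathcal{X}$. The Tietze-extension step is handled correctly: the $2\|f\|_\infty/m$ error on $K_m^c$ is controlled uniformly in $n$ precisely because tightness is a uniform condition, which you rightly flag as the crux of the transfer.

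One issue concerns the converse. As stated, the theorem asserts that the existence of a single weakly convergent subsequence of $(\mu_n)$ forces tightness of the entire sequence; this is false already on $\bR$ — take $\mu_{2n}=\delta_0$ and $\mu_{2n+1}=\delta_n$, so the even subsequence converges weakly to $\delta_0$ while the full sequence is not tight. The fact you invoke (on a Polish space, a weakly convergent sequence is tight, via Ulam plus outer regularity) is correct but establishes a different claim, so your sketch does not actually close the ``if'' direction as literally written. This is a defect of the theorem statement rather than of your argument — the intended formulation is presumably that every subsequence has a weakly convergent further subsequence, for which the equivalence holds on Polish spaces — and the paper only ever uses the forward direction, where your proof is sound.
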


\begin{lem}\label{lem:tightness_criterion}
Let $\mathcal{X},\mathcal{Y}$ be separable normed spaces such that $\imath:\mathcal{X}\to\mathcal{Y}$ is a compact embedding and let $(\mu_{n})_{n\in\mathbb{N}_+}$ be a sequence of probability measures on $(\mathcal{X},\mathrm{Borel}(\mathcal{X}))$. Assume that there exists $M\in(0,\infty)$ such that $\int_{\mathcal{X}}\|x\|_{\mathcal{X}}\,\mu_n(\rd x)\leq M$ for all $n\in\mathbb{N}_+$. Then the sequence of measures $(\nu_n)_{n\in\mathbb{N}_+}$ on $(\mathcal{Y},\mathrm{Borel}(\mathcal{Y}))$ defined by
\begin{equation}
\nu_n(A):=\mu_n(\imath^{-1}(A)),
\qquad\quad
n \in \mathbb{N}_+,
\quad
A \in \mathrm{Borel}(\mathcal{Y}),
\end{equation}
is tight.
\end{lem}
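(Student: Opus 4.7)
The plan is to combine Markov's inequality with the definition of a compact embedding, which is essentially the only ingredient needed. I would proceed in three short steps.

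First, by the uniform first-moment bound and Markov's inequality, the closed $\mathcal{X}$-ball $B_R := \{x\in\mathcal{X}\,|\,\|x\|_{\mathcal{X}}\leq R\}$ satisfies
\[
\mu_n(\mathcal{X}\setminus B_R) \leq \frac{1}{R}\int_{\mathcal{X}}\|x\|_{\mathcal{X}}\,\mu_n(\rd x)\leq \frac{M}{R}
\]
for every $n\in\mathbb{N}_+$ and every $R\in(0,\infty)$.

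Next, since $\imath:\mathcal{X}\to\mathcal{Y}$ is a compact embedding, the image $\imath(B_R)$ is relatively compact in $\mathcal{Y}$; set $K_R:=\overline{\imath(B_R)}^{\mathcal{Y}}$, which is a compact, hence Borel, subset of $\mathcal{Y}$. Since $\imath$ is continuous (compact embeddings are by definition bounded linear maps) and $B_R$ is closed in $\mathcal{X}$, the preimage $\imath^{-1}(K_R)$ is Borel in $\mathcal{X}$ and contains $B_R$. Therefore
\[
\nu_n(K_R)=\mu_n(\imath^{-1}(K_R))\geq \mu_n(B_R)\geq 1-M/R
\]
for every $n\in\mathbb{N}_+$. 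Given $\epsilon>0$, choosing $R:=M/\epsilon$ yields a compact set $K_\epsilon:=K_R\subset\mathcal{Y}$ with $\nu_n(K_\epsilon)\geq 1-\epsilon$ uniformly in $n$, which is precisely the tightness criterion from Def.~\ref{dfn:tightness_weak_convergence}.

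There is no real obstacle in this argument; the only delicate point is Borel measurability of $\imath^{-1}(K_R)$, which is resolved by continuity of $\imath$, and verifying that $\imath(B_R)$ is relatively compact, which is exactly the definition of compactness of the embedding applied to a bounded set.
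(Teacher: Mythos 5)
Your proof is correct and follows essentially the same route as the paper: Markov's inequality gives a uniform bound on the mass outside the ball $B_{M/\epsilon}$, and compactness of the embedding makes $\overline{\imath(B_{M/\epsilon})}$ the required compact set in $\mathcal{Y}$. The only cosmetic difference is that you first work with a general radius $R$ before specializing to $R=M/\epsilon$, and you spell out the Borel-measurability point, which the paper leaves implicit.
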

\begin{proof}
Let $\epsilon>0$, $L_\epsilon := \{x\in\mathcal{X}\,|\,\|x\|_{\mathcal{X}}\leq M/\epsilon\}$ and $K_\epsilon:=\overline{\imath(L_{\epsilon})}$. Observe that $K_\epsilon\subset\mathcal{Y}$ is compact. It holds
\begin{equation}
 1-\nu_n(K_{\epsilon}) \leq 1-\mu_n(L_\epsilon) = \mu_n(\|x\|_{\mathcal{X}}>M/\epsilon)
 \leq \epsilon/M~\int_{\mathcal{X}}\|x\|_{\mathcal{X}}\,\mu_n(\rd x)\leq\epsilon.
\end{equation}
This finishes the proof.
\end{proof}

\section{Stochastic estimates}\label{sec:stochastic_estimates}

We recall from \cite[Section 1.1.1]{Nualart}  some basic definitions related  to the Wiener chaos. Let $\mfh$ be a real, separable Hilbert space with scalar product $\lan \,\cdot\,,\, \cdot\, \ran_{\mfh}$. We say that a stochastic process $X=\{X(h) \,|\, h\in \mathfrak{h}\}$ defined in a complete probability space $(\Om,\mcF, \mathbb{P})$ is a   Gaussian process on $\mathfrak{h}$ if $X$ is a centered Gaussian family of random variables such that $\mathbb{E}(X(h) X(g))=\lan h, g\ran_\mfh$ for $h,g\in \mfh$. Now let $H_n, n\in \bN_0$, be the Hermite polynomials. We denote by $\hil_n$ the closed linear subspace of $L^2(\Om,\mathbb{P})$
generated by random variables $\{ H_n(X(h)), h\in \mfh, \|h\|_\mfh=1\}$ and call it the Wiener chaos of order $n$. The subspace
$\bigoplus_{\ell=0}^{n} \hil_\ell$ is called the inhomogeneous Wiener chaos of order $n$.

In our case   $\Om=\sD'(\bS_R)$, $\mathcal{F}=\mathrm{Borel}(\Om)$, $\mathbb{P}=\nu_R$ is the Gaussian measure with covariance $G_R$ and $\mfh=L^{-1}_2(\bS_R)$. Observe that $X^{:m:}_{R,N}=c^{m/2}_{R,N} H_m(X_{R,N}/c_{R,N}^{1/2})$. The choice of the counterterm in~\eqref{eq:counterterm} is dictated by the assumptions of Lemma~\ref{Nualart-Lemma-1-1}.

To facilitate the application of Lemmas~\ref{lem:stochastic_X_uniform}, \ref{lem:stochastic_Y} below  in the proof of Proposition~\ref{prop:uv_limit}, we recall  that convergence  in $L_2(\Om,  \mathbb{P})$ implies convergence in probability, and that the latter property is preserved under composition with continuous functions.

\begin{lem}\label{Nualart-Lemma-1-1}
Let $X,Y$ be two random variables with joint Gaussian distribution such that  $\mathbb{E}(X)=\mathbb{E}(Y)=0$ and $\mathbb{E}(X^2)=\mathbb{E}(Y^2)=1$.
Then, for all $n,m$ we have
\begin{equation}
\mathbb{E}(H_n(X)H_m(Y))=\delta_{n,m}  n! (\mathbb{E}(XY))^n.
\end{equation}
\end{lem}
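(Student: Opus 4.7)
The plan is to prove this via the generating function identity for Hermite polynomials. Recall that $\sum_{n=0}^\infty \frac{t^n}{n!}H_n(x) = \exp(tx - t^2/2)$. The strategy is to compute the joint generating function $\bE\bigl[\exp(tX - t^2/2)\exp(sY - s^2/2)\bigr]$ explicitly using the joint Gaussianity, and then extract the coefficient of $t^n s^m$ on both sides to read off $\bE[H_n(X)H_m(Y)]$.

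First, I would set $\rho := \bE(XY)$. Since $(X,Y)$ is jointly centered Gaussian with unit variances, the linear combination $tX + sY$ is Gaussian with mean zero and variance $t^2 + 2ts\rho + s^2$. Hence
\begin{equation}
\bE\!\left[\exp(tX - t^2/2)\exp(sY - s^2/2)\right]
= \exp\!\bigl((t^2 + 2ts\rho + s^2)/2\bigr)\,\exp(-t^2/2 - s^2/2)
= \exp(ts\rho).
\end{equation}
Next I would expand both sides as power series in $t$ and $s$. The left-hand side becomes $\sum_{n,m\geq 0} \frac{t^n s^m}{n!\,m!}\,\bE[H_n(X)H_m(Y)]$, while the right-hand side is $\sum_{k\geq 0} \frac{(ts\rho)^k}{k!} = \sum_{n,m\geq 0} \delta_{n,m}\,\frac{t^n s^m}{n!}\,\rho^n$. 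Matching coefficients of $t^n s^m$ yields exactly $\bE[H_n(X)H_m(Y)] = \delta_{n,m}\, n!\, \rho^n$.

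The only subtlety is a justification of interchanging expectation and the power-series expansion so that termwise coefficient comparison is valid. This can be handled by either noting that for fixed $t,s$ the series $\sum_n \frac{t^n}{n!}H_n(X)$ converges in $L_2$ to $\exp(tX - t^2/2)$ (so that Fubini applies on the double sum after bounding with Cauchy--Schwarz and the Gaussian moment generating function), or simply by invoking real-analyticity in $(t,s)$ of both sides on a neighborhood of the origin. Either argument is routine; I do not expect a serious obstacle here. Overall the proof is a short generating-function calculation and should occupy only a few lines.
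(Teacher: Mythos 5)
Your generating-function argument is correct and is essentially the same proof as the one in Nualart's book (Lemma 1.1.1), which is what the paper cites; the paper itself gives no proof beyond that citation. Your note on justifying the termwise comparison (via $L_2$ convergence of the Hermite expansion or real-analyticity in $(t,s)$) addresses the only point that requires care.
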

\begin{proof} 
See \cite[Lemma 1.1.1]{Nualart}. 
\end{proof}

\begin{lem}[Nelson's estimate]\label{lem:nelson}
For every random variable $X$ in an inhomogeneous Wiener chaos of order $n\in\mathbb{N}_+$, cf.  \cite{Nualart}, and every $p\in[2,\infty)$ it holds
\begin{equation}
\bE \big[ |X|^{p}\big]^{\frac{1}{p}} \leq \sqrt{n}(p - 1)^{\frac{n}{2}}\, \bE \big[ X^{2}\big]^{\frac{1}{2}},
\qquad
\bE\exp\Bigg(\frac{n|X|^{2/n}}{6\bE \big[ X^{2}\big]^{\frac{1}{n}}}\Bigg) <\infty.
\end{equation}
\end{lem}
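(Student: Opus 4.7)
The plan is to reduce the claim to the classical hypercontractivity estimate for the Ornstein--Uhlenbeck semigroup $(P_t)_{t\geq 0}$ on the Gauss space associated with $\mfh$, and then to deduce the exponential integrability from the $L^p$ bound via a power-series expansion.

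First, for the $L^p$ bound, I would decompose $X = \sum_{\ell=0}^n X_\ell$ with $X_\ell \in \hil_\ell$, which is possible because $X$ lies in an inhomogeneous chaos of order $n$. On $\hil_\ell$ the Ornstein--Uhlenbeck semigroup acts as $P_t|_{\hil_\ell} = \re^{-\ell t}$, and Nelson's hypercontractivity theorem gives $\|P_t Y\|_{L_p} \leq \|Y\|_{L_2}$ whenever $\re^{2t} \geq p-1$ and $Y \in L_2(\Omega,\mathbb{P})$. Choosing $\re^{2t} = p-1$ and applying this to $Y = X_\ell$ I obtain
\begin{equation}
\|X_\ell\|_{L_p} = \re^{\ell t}\|P_t X_\ell\|_{L_p} \leq (p-1)^{\ell/2}\|X_\ell\|_{L_2}.
\end{equation}
Summing over $\ell$ with the triangle inequality, using $(p-1)^{\ell/2} \leq (p-1)^{n/2}$ for $p\geq 2$, and applying Cauchy--Schwarz together with the orthogonality of the chaoses $\sum_\ell \|X_\ell\|_{L_2}^2 = \|X\|_{L_2}^2$, yields the desired bound $\|X\|_{L_p} \leq \sqrt{n}\,(p-1)^{n/2}\|X\|_{L_2}$ (up to an immaterial $\sqrt{n+1}$ vs.\ $\sqrt{n}$ depending on whether the $0$-th chaos is counted, which can be absorbed by the same argument applied to $X - \bE X$).

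Second, for the exponential estimate, normalize so that $\bE[X^2]=1$; I must then show $\bE \exp(n|X|^{2/n}/6) < \infty$. Expanding the exponential as a power series and taking expectations term by term gives
\begin{equation}
\bE\exp\!\bigl(n|X|^{2/n}/6\bigr) \;=\; \sum_{k=0}^\infty \frac{(n/6)^k}{k!}\,\bE|X|^{2k/n}.
\end{equation}
For $k < n$ I bound $\bE|X|^{2k/n} \leq 1$ by Jensen's inequality, and for $k \geq n$ I apply the first part of the lemma with $p = 2k/n \geq 2$ to obtain $\bE|X|^{2k/n} \leq n^{k/n}(2k/n-1)^k \leq n^{k/n}(2k/n)^k$. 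Combined with Stirling's lower bound $k! \geq (k/\re)^k$, the general term is dominated by $n^{k/n}(\re/3)^k$, which is summable since $\re/3 < 1$; this proves finiteness.

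The main technical input is the hypercontractivity of the Ornstein--Uhlenbeck semigroup, which I would simply quote from \cite[Theorem~1.4.1]{Nualart} (or an equivalent reference); the rest of the argument is a direct and routine combination of that input with standard inequalities. The only mild bookkeeping care is matching the prefactor $\sqrt{n}$ in the statement and choosing the constant $1/6$ in the exponent so that the series above converges, both of which follow transparently from the computation sketched above.
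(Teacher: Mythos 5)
Your proof of the first inequality is correct and takes essentially the same route as the paper: quote the hypercontractivity of the Ornstein--Uhlenbeck semigroup from \cite[Theorem~1.4.1]{Nualart} and apply it chaos by chaos; the $\sqrt{n}$ versus $\sqrt{n+1}$ discrepancy you flag is indeed immaterial for the paper's purposes.

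Your derivation of the exponential bound, however, has a genuine gap in the final summability step. You correctly arrive at the estimate that the $k$-th term of the series is dominated by $n^{k/n}(\re/3)^k$, but you then conclude summability from $\re/3<1$. This overlooks that $n^{k/n}=(n^{1/n})^k$ itself grows geometrically in $k$, so the true base of the geometric bound is $n^{1/n}\cdot\re/3$, not $\re/3$. Since $n^{1/n}>3/\re\approx 1.10$ for every integer $2\leq n\lesssim 37$ (with a maximum of $3^{1/3}\approx 1.44$ at $n=3$), the series you wrote down actually diverges in precisely the range of $n$ relevant to the paper ($n\geq 4$). The culprit is the prefactor $\sqrt{n}$ in the first inequality: raising $\|X\|_{L_p}\leq\sqrt{n}(p-1)^{n/2}$ to the power $p=2k/n$ produces $n^{k/n}$, which is an exponential loss in $k$, not a harmless constant. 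To repair the argument one must either replace $1/6$ by a smaller constant in the exponent (any $c$ with $c\geq 2\re\,n^{1/n}$ in the denominator works, so a uniform choice such as $8$ suffices for all $n$), or avoid the $\sqrt{n}$ altogether by estimating $\bE|X|^{2k/n}$ through the chaos decomposition before applying hypercontractivity (so that each homogeneous piece contributes $(p-1)^{\ell k/n}\|X_\ell\|_{L_2}^{2k/n}$ without the aggregating $\sqrt{n}$), or pass through a Markov tail estimate $\mathbb{P}(|X|>t)\leq\inf_{p\geq 2}t^{-p}\|X\|_{L_p}^p$ and check the resulting integral, which exposes the same threshold $\alpha<n/(2\re\,n^{1/n})$ and hence the same shortfall. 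The paper itself only states that the second bound is ``an immediate consequence of the first,'' so your write-up usefully reveals that this consequence, with the specific constant $1/6$, is not quite as immediate as advertised; if you follow the power-series route you should present it with an adjusted constant or a sharper intermediate moment bound.
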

\begin{proof}
The first bound follows from the Nelson hypercontractivity of the Ornstein-Uhlenbeck operator (see e.g.~\cite[Theorem 1.4.1]{Nualart}  or \cite{Nelson}). The second bound is an immediate consequence of the first one. 
\end{proof}

\begin{dfn}\label{dfn:kernel}
For an operator $H\,:\,L_2(\bS_R)\to L_2(\bS_R)$ we denote by $H(\Cdot,\Cdot)$ its integral kernel (if it exists) such that $(Hf)(\rx)=\int_{\bS_R} H(\rx,\ry)\,f(\ry)\,\rho_R(\rd\ry)$. Similarly, for an operator $H\,:\,L_2(\bR^2)\to L_2(\bR^2)$ we denote by $H(\Cdot,\Cdot)$ its integral kernel (if it exists) such that $(Hf)(x)=\int_{\bR^2} H(x,y)\,f(y)\,\rd y$. 
\end{dfn}

\begin{lem}\label{lem:K_hat}
There exists $C\in(0,\infty)$ such that for all $R,N\in\bN_+$ it holds
\begin{equation}
 |\hat K_{R,N}| \leq C\,\frac{1}{1-\Delta_R/N^2},
 \qquad
 |1-\hat K_{R,N}|\leq
 C\,\frac{(1-\Delta_R)/N^2}{1-\Delta_R/N^2},
\end{equation} 
where $\hat{K}_{R,N}$ was introduced in Def.~\ref{def:hat-K}.
\end{lem}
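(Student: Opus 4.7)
The kernel $\hat K_{R,N}(\rx,\ry)=N^2 h(N d_R(\rx,\ry))$ depends only on the geodesic distance between $\rx$ and $\ry$, so $\hat K_{R,N}$ is rotation invariant on $\bS_R$ and therefore commutes with $\Delta_R$. Both sides of the two claimed inequalities are consequently simultaneously diagonalised in the spherical harmonic basis, and the lemma is equivalent to the scalar bounds
\begin{equation}
|\hat k_\ell|\Big(1+\tfrac{\ell(\ell+1)}{(RN)^2}\Big)\leq C,\qquad
|1-\hat k_\ell|\Big(1+\tfrac{\ell(\ell+1)}{(RN)^2}\Big)\leq \tfrac{C}{N^2}\Big(1+\tfrac{\ell(\ell+1)}{R^2}\Big),
\end{equation}
for all $\ell\in\bN_0$, where $\hat k_\ell$ denotes the eigenvalue of $\hat K_{R,N}$ on the eigenspace of $-\Delta_R$ with eigenvalue $\ell(\ell+1)/R^2$.

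\textbf{First inequality.} I would verify it by showing that the auxiliary operator $(1-\Delta_R/N^2)\hat K_{R,N}$ has a zonal kernel whose $L^1(\bS_R,\rho_R)$-norm, as a function of $\ry$ with $\rx$ fixed, is bounded uniformly in $R$ and $N$; Schur's lemma applied to this rotation-invariant operator then yields the bound. The $L^1$ estimate is checked in geodesic polar coordinates centred at the fixed point: the kernel $N^2 h(N\theta)$ is supported in $\theta \leq 1/N$, and applying $-\Delta_R/N^2$ produces terms of the form $h''(N\theta)$ together with curvature corrections of relative size $(RN)^{-1}$, all of which are integrable against the surface measure $R\sin(\theta/R)\,\rd\theta \lesssim \theta \,\rd\theta$ with a uniform bound.

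\textbf{Second inequality.} The analogous strategy is to show that the operator $N^2(1-\Delta_R/N^2)(1-\hat K_{R,N})(1-\Delta_R)^{-1}$ has a zonal kernel of bounded $L^1$-norm. The key algebraic input is $\hat k_0-1=O((RN)^{-2})$: in geodesic polar coordinates, $\hat k_0 = 2\pi R\int_0^{\pi R} N^2 h(N\theta)\sin(\theta/R)\,\rd\theta = 2\pi \int_0^{\infty} h(s)\,s\,\rd s + O((RN)^{-2}) = 1 + O((RN)^{-2})$, by the normalisation of $h$ in Definition~\ref{def:hat-K} and $\sin(\theta/R)=\theta/R + O((\theta/R)^3)$. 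A Taylor expansion of the zonal eigenfunctions $P_\ell^R(\theta)$ around $\theta=0$, using $-\Delta_R P_\ell^R = (\ell(\ell+1)/R^2) P_\ell^R$ and $P_\ell^R(0)=1$, then upgrades this to $|1-\hat k_\ell| \lesssim (1+\ell(\ell+1)/R^2)/N^2$ in the regime $\ell \lesssim RN$; the complementary regime follows from the first inequality combined with the trivial bound $|\hat k_\ell|\leq C$.

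\textbf{Main obstacle.} The mildly delicate point is keeping all Taylor expansions uniform in both $R$ and $N$, since two length scales (the curvature scale $R$ and the smoothing scale $1/N$) interact. A cleaner route is to invoke the Funk--Hecke formula and write $\hat k_\ell$ as a one-dimensional integral of $N^2 h(N\theta)$ against $P_\ell(\cos(\theta/R))\sin(\theta/R)$, and then bound it either directly (for small $\ell$, by Taylor expansion) or after two integrations by parts using the smoothness of $h$ (for large $\ell$). The crossover regime $\ell \sim RN$ is where these two estimates have to be patched together, and I expect this is the only nontrivial bookkeeping in the argument.
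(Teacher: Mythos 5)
Your proposal is correct, and the ``cleaner route'' you offer at the end is essentially the paper's actual proof. The paper writes $\hat K_{R,N}=\sum_{l\ge0}(2l+1)\hat k_l\,\mathcal P_{R,l}$ with $\hat k_l=\mathrm{Tr}(\hat K_{R,N}\mathcal P_{R,l})$, reduces the two operator inequalities to the two scalar bounds you write down, and then uses the explicit Funk--Hecke representation $\hat k_l=2\pi\int_0^1 P_l(\cos(\theta/RN))\,RN\sin(\theta/RN)\,h(\theta)\,\rd\theta$. For the first bound it uses $|P_l|\le1$ together with the identity $l(l+1)P_l(\cos\vartheta)\sin\vartheta=-\partial_\vartheta^2(\sin\vartheta\,P_l(\cos\vartheta))+\partial_\vartheta(\cos\vartheta\,P_l(\cos\vartheta))$ and two integrations by parts in $\theta$, exactly the ``two integrations by parts using the smoothness of $h$'' you describe. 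For the second bound it uses the normalisation $2\pi\int\theta h(\theta)\,\rd\theta=1$ to write $1-\hat k_l$ as an integral of $P_l(\cos(\theta/RN))RN\sin(\theta/RN)-\theta$, and then the Taylor-type estimates $0\le1-P_l(\cos\vartheta)\le l(l+1)\vartheta^2/4$ and $0\le1-\sin\vartheta/\vartheta\le\vartheta^2/6$, which is your ``small $\ell$ by Taylor expansion'' argument. Your observation about the crossover regime $\ell\sim RN$ is also accurate: the paper's two scalar bounds are combined precisely by splitting on $l(l+1)\lessgtr(RN)^2$ to obtain the second operator inequality.

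Your primary route via Schur's lemma applied to the zonal kernel of $(1-\Delta_R/N^2)\hat K_{R,N}$ is a valid alternative for the first inequality and would go through with the kernel computation you sketch (the operator is self-adjoint and rotation invariant, so the Schur bound on the $L^2$-operator norm does control the eigenvalues). For the second inequality the Schur route is less convenient because $(1-\Delta_R)^{-1}$ has a logarithmically singular, non-compactly-supported kernel, so the $L^1$ estimate you would need is more delicate; the Funk--Hecke route sidesteps this entirely since $(1-\Delta_R)^{-1}$ enters only as a scalar factor $(1+l(l+1)/R^2)^{-1}$. This is presumably why the paper went directly to the eigenvalue formula.
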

\begin{rem}\label{rem:hat_c}
Recall that $K_{R,N}=(1-\Delta_R/N^2)^{-1}$, $G_R=(1-\Delta_R)^{-1}$ and the counterterms $c_{R,N}$, $\hat c_{R,N}$ were introduced in Eq.~\eqref{eq:counterterm} and Def.~\ref{dfn:hat_X_hat_Y}. Note that the operators $G_R,K_{R,N},\hat K_{R,N}$ commute. Using the above lemma we obtain
\begin{equation}
 |\hat K_{R,N}^2-K_{R,N}^2| \leq |\hat K_{R,N}-K_{R,N}|\,|\hat K_{R,N}+K_{R,N}| \leq 2\,C\,(C+1)\,\frac{(1-\Delta_R)/N^2}{(1-\Delta_R/N^2)^2}.
\end{equation}
Consequently, it holds
\begin{multline}
 |\hat c_{R,N}-c_{R,N}|\leq 
 \mathrm{Tr}(|\hat K_{R,N}^2-K_{R,N}^2| G_R)/4\pi R^2
 \leq
 2C(C+1)\,
 \big[\mathrm{Tr}((1-\Delta_R/N^2)^{-1} (1-\Delta_R)^{-1})
 \\
 -
 \mathrm{Tr}((1-\Delta_R/N^2)^{-2} (1-\Delta_R)^{-1})
 +
 \mathrm{Tr}((1-\Delta_R/N^2)^{-2} (1-\Delta_R)^{-1})/N^2\big]/(4\pi R^2).
\end{multline}
By Lemma~\ref{lem:trace} the RHS of the last inequality above is bounded by a constant independent of $R,N\in\bN_+$.
\end{rem}
\begin{proof}
Note that $\hat K_{R,N}=\sum_{l=0}^\infty (2l+1)\mathrm{Tr}(\hat K_{R,N}\mathcal{P}_{R,l})\,\mathcal{P}_{R,l}$, where $\mathcal{P}_{R,l}\,:\,L_2(\bS_R)\to L_2(\bS_R)$ is defined such that $(2l+1)\mathcal{P}_{R,l}$ is the orthogonal projection onto the eigenspace of the operator $-\Delta_R$ corresponding to the eigenvalue $l(l+1)/R^2$. Consequently, by the triangle inequality for the commuting self-adjoint operators it is enough to show that there exists $C\in(0,\infty)$ such that for all $R,N\in\bN_+$ and $l\in\bN_0$ it holds
\begin{equation}\label{eq:trace_L_operator_bounds_proof}
\begin{gathered}
 (1+l(l+1)/R^2N^2)\,|\mathrm{Tr}(\hat K_{R,N}\mathcal{P}_{R,l})| \leq C,
 \\
 |\mathrm{Tr}((1-\hat K_{R,N})\mathcal{P}_{R,l})| \leq C\,(1+l(l+1))/R^2N^2.
\end{gathered} 
\end{equation}
(To obtain the second bound in the statement of the lemma one combines both estimates in~\eqref{eq:trace_L_operator_bounds_proof}.)
Recall that~\cite[Theorem~2.9]{AK12} the integral kernel of $\mathcal{P}_{R,l}$ is given by
$
 \mathcal{P}_{R,l}(\rx,\ry)=P_l(\rx\cdot\ry/R^2)/4\pi R^2,
$
where $P_l$ is the $l$-th Legendre polynomial. Hence, it holds
\begin{equation}
 \mathrm{Tr}(\hat K_{R,N}\mathcal{P}_{R,l}) = 2\pi\int_0^1 P_l(\cos(\theta/RN))\,RN\sin(\theta/RN)\,h(\theta)\,\rd\theta.
\end{equation}
Using the fact that $RN\sin(\theta/RN)\leq\theta$, $|P_l(\cos\vartheta)|\leq 1$ (cf.~\cite[Sec.~2.7.5]{AK12}) and 
\begin{equation}
 l(l+1) P_l(\cos\vartheta) \sin\vartheta
 =
 -\partial_\vartheta^2(\sin\vartheta\, P_l(\cos\vartheta))
 +
 \partial_\vartheta(\cos\vartheta\, P_l(\cos\vartheta))
\end{equation}
(cf.~\cite[Sec.~2.7.2]{AK12}) we show the first of the bounds~\eqref{eq:trace_L_operator_bounds_proof}. Next, using that $2\pi \int \theta h(\theta) \rd\theta=1$, we obtain that
\begin{equation}
 \mathrm{Tr}((1-\hat K_{R,N})\mathcal{P}_{R,l}) = 2\pi\int_0^1 (P_l(\cos(\theta/RN))\,RN\,\sin(\theta/RN)-\theta)\,h(\theta)\,\rd\theta.
\end{equation}
We note the estimates
\begin{equation}
 0\leq 1-P_l(\cos(\vartheta)) \leq l(l+1)\,(1-\cos(\vartheta))/2\leq l(l+1)\,\vartheta^2/4,
 \quad
 0\leq 1 - \sin(\vartheta)/\vartheta \leq \vartheta^2/6,
\end{equation}
where the second inequality follows from 
$$
1-P_l(u)=P_l(1)-P_l(u)=\int^1_u \frac{d}{dv}P_l(v)dv\leq (1-u)\,l(l+1)/2
$$ 
(cf.~\cite[Sec.~2.7.5]{AK12}). This shows the second bound in~\eqref{eq:trace_L_operator_bounds_proof} and finishes the proof.
\end{proof}

\begin{lem}\label{lem:stochastic_X_not_uniform}
For every $N\in\bN_+$ there exists $C\in(0,\infty)$ such that for all $R\in\bN_+$ it holds
\begin{enumerate}
 \item[(A)] $\bE\|X_{R,N}\|_{L^{1}_2(\bS_R)}^2 
 \leq R^2\, C^2$,
 \item[(B)] $\bE\|\hat X_{R,N}\|_{L^{1}_2(\bS_R)}^2 
 \leq R^2\, C^2$. 
 \end{enumerate}
\end{lem}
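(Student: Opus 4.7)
The plan is to reduce both expectations to traces of operators that are functions of $\Delta_R$, diagonalize them using the standard spectral decomposition on $\bS_R$, and then estimate an elementary sum.

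First I would observe that for a centered Gaussian field $X$ on $\sD'(\bS_R)$ with covariance operator $Q$, and for any nonnegative self-adjoint $A$, one has
\begin{equation}
\bE\|X\|_{L_2^1(\bS_R)}^2 = \bE\langle X,(1-\Delta_R)X\rangle_{L_2(\bS_R)} = \mathrm{Tr}((1-\Delta_R)\,Q).
\end{equation}
For part (A) the field $X_{R,N}=K_{R,N}X_R$ has covariance $G_{R,N}=K_{R,N}G_RK_{R,N}$. Since $K_{R,N}$, $G_R$ and $1-\Delta_R$ are all functions of $\Delta_R$, they commute and cyclicity gives
\begin{equation}
\bE\|X_{R,N}\|_{L_2^1(\bS_R)}^2 = \mathrm{Tr}(K_{R,N}^2).
\end{equation}
Using that the eigenvalues of $-\Delta_R$ on $\bS_R$ are $l(l+1)/R^2$ with multiplicities $2l+1$ for $l\in\bN_0$, this trace equals $\sum_{l=0}^\infty (2l+1)/(1+l(l+1)/(RN)^2)^2$.

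To estimate this sum for fixed $N$, I would split at $l\sim RN$: for $l\leq RN$ the denominator is bounded below by $1$, contributing $O((RN)^2)$; for $l>RN$ the summand is bounded by $C(RN)^4/l^3$, whose tail also sums to $O((RN)^2)$. In total the sum is bounded by a constant multiple of $N^2R^2$, yielding part (A) with $C$ depending on $N$.

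For part (B) the kernel $\hat K_{R,N}(\rx,\ry)=N^2 h(N d_R(\rx,\ry))$ depends only on the geodesic distance, hence $\hat K_{R,N}$ is rotationally invariant and therefore commutes with $\Delta_R$; in particular it is a function of $\Delta_R$ via its spherical-harmonic symbol, and is self-adjoint because the kernel is symmetric. The same trace manipulation gives
\begin{equation}
\bE\|\hat X_{R,N}\|_{L_2^1(\bS_R)}^2 = \mathrm{Tr}(\hat K_{R,N}^2).
\end{equation}
At this point I would invoke Lemma~\ref{lem:K_hat}, which yields $|\hat K_{R,N}|\leq C\,K_{R,N}$. Since both operators are functions of $\Delta_R$ and therefore simultaneously diagonalized, this gives $\hat K_{R,N}^2\leq C^2 K_{R,N}^2$ spectrally, so $\mathrm{Tr}(\hat K_{R,N}^2)\leq C^2\,\mathrm{Tr}(K_{R,N}^2)$, and part (A) closes the argument.

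The only substantive step is the sum bound, which is routine; the mild technical point is justifying that $\hat K_{R,N}$ commutes with $\Delta_R$ (via rotation invariance of the kernel) so that the trace cyclicity and the operator inequality from Lemma~\ref{lem:K_hat} can be applied without further care.
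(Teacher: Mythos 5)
Your proof is correct and follows essentially the same route as the paper: reduce both expectations to $\mathrm{Tr}(K_{R,N}^2)$ (respectively $\mathrm{Tr}(\hat K_{R,N}^2)$) using the Gaussian covariance structure and the fact that $K_{R,N}$, $G_R$, $1-\Delta_R$ (and $\hat K_{R,N}$, by rotational invariance of its kernel) are commuting functions of $\Delta_R$, then estimate the resulting spectral sum over the eigenvalues $l(l+1)/R^2$ with multiplicity $2l+1$, and finally deduce (B) from (A) via the operator inequality of Lemma~\ref{lem:K_hat}. The one cosmetic difference is how the sum is estimated: the paper first uses the crude bound $K_{R,N}^2\leq N^4 G_R^2$ (valid since $N\geq 1$) and then cites Lemma~\ref{lem:trace}, whereas you estimate $\sum_l (2l+1)/(1+l(l+1)/(RN)^2)^2$ directly by splitting at $l\sim RN$, which is more elementary and even gives the slightly sharper $N^2R^2$ in place of $N^4R^2$; both are of course equally good here since the constant is allowed to depend on $N$.
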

\begin{proof}
Recall that $X_{R,N}=K_{R,N}X_R$ and $K_{R,N}=(1-\Delta_R/N^2)^{-1}$. Consequently,
\begin{equation}
\bE \|X_{R,N}\|^2_{L^{1}_{2}(\bS_R)} = \bE\|(1-\Delta_R)^{1/2} (1-\Delta_R/N^2)^{-1} X_R\|^2_{L_2(\bS_R)}.
\end{equation}
By Fubini's theorem  and the fact that $\bE X_R(\rx)X_R(\ry)=G_R(\rx,\ry)$, where $G_R=(1-\Delta_R)^{-1}$, we obtain
\begin{equation}
\bE \|X_{R,N}\|^2_{L^{1}_{2}(\bS_R)} 
=
\mathrm{Tr}\big((1-\Delta_R/N^2)^{-2}\big) 
\leq
N^4\,\mathrm{Tr}\big((1-\Delta_R)^{-2}\big) 
= 
\sum_{l=0}^{\infty} \frac{N^4\,(2l+1)}{(1+l(l+1)/R^2)^2}.
\end{equation}
Now, Item~(A) follows from Lemma~\ref{lem:trace}. Thanks to Lemma~\ref{lem:K_hat} the proof of Item~(B) is the same.
\end{proof}

\begin{lem}\label{lem:stochastic_X_uniform}
For every $\kappa\in(0,\infty)$, $\delta\in[0,2]$ there exists $C\in(0,\infty)$ such that for all $R,N\in\bN_+$ it holds
\begin{enumerate}
 \item[(A)] $\bE\|X_R\|_{L^{-\kappa}_2(\bS_R)}^2 
  \leq R^2\,C^2$,
 \item[(B)] $\bE\|X_R- X_{R,N}\|_{L_2^{-\kappa-\delta}(\bS_R)}^2 
 \leq R^2\,C^2\, N^{-2\delta}$, 
 \item[(C)] $\bE\|X_R-\hat X_{R,N}\|_{L_2^{-\kappa-\delta}(\bS_R)}^2  
 \leq R^2\,C^2\, N^{-2\delta}$. 
\end{enumerate}
\end{lem}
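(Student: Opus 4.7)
The plan is to reduce each of the three bounds to a trace inequality on the eigenspaces of the Laplace--Beltrami operator $-\Delta_R$, exactly as in the proof of Lemma~\ref{lem:stochastic_X_not_uniform}. The operators $G_R$, $K_{R,N}$ and $\hat K_{R,N}$ are all functions of $-\Delta_R$, hence mutually commuting; consequently, by the Isserlis/Wick formula, for any self-adjoint $A$ commuting with $G_R$,
\begin{equation}
\bE\|AX_R\|_{L_2(\bS_R)}^2=\mathrm{Tr}(A^2 G_R).
\end{equation}
With the spectral data of $-\Delta_R$ being eigenvalues $l(l+1)/R^2$ of multiplicity $2l+1$, every such trace becomes a sum to which Lemma~\ref{lem:trace} applies.

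\textbf{Part (A).} Taking $A=(1-\Delta_R)^{-\kappa/2}$ gives
\begin{equation}
\bE\|X_R\|_{L^{-\kappa}_2(\bS_R)}^2
=\mathrm{Tr}((1-\Delta_R)^{-1-\kappa})
=\sum_{l=0}^\infty\frac{2l+1}{(1+l(l+1)/R^2)^{1+\kappa}}.
\end{equation}
Setting $N=1$ in Lemma~\ref{lem:trace} bounds this by $CR^2$ for some $C$ depending only on $\kappa$.

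\textbf{Part (B).} Taking $A=(1-\Delta_R)^{-(\kappa+\delta)/2}(1-K_{R,N})$, the quantity of interest equals $\mathrm{Tr}((1-\Delta_R)^{-1-\kappa-\delta}(1-K_{R,N})^2)$. Since $1-K_{R,N}=(-\Delta_R/N^2)/(1-\Delta_R/N^2)$, the elementary scalar inequality $a/(1+a)\leq a^{\delta/2}$ (valid for all $a\geq 0$ and $\delta\in[0,2]$, by a trivial case split depending on whether $a\leq 1$ or $a\geq 1$) yields spectrally
\begin{equation}
(1-K_{R,N})^2\leq(-\Delta_R/N^2)^\delta\leq N^{-2\delta}(1-\Delta_R)^\delta.
\end{equation}
Thus the trace is at most $N^{-2\delta}\,\mathrm{Tr}((1-\Delta_R)^{-1-\kappa})$, which is controlled by part (A).

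\textbf{Part (C).} The analogous reduction gives
\begin{equation}
\bE\|X_R-\hat X_{R,N}\|_{L^{-\kappa-\delta}_2(\bS_R)}^2
=\mathrm{Tr}\big((1-\Delta_R)^{-1-\kappa-\delta}(1-\hat K_{R,N})^2\big),
\end{equation}
and Lemma~\ref{lem:K_hat} furnishes the pointwise spectral bound $|1-\hat K_{R,N}|^2\leq C^2(1+\mu)^2/(N^2+\mu)^2$, where $\mu\geq 0$ denotes the running eigenvalue of $-\Delta_R$. It therefore suffices to establish $(1+\mu)^2/(N^2+\mu)^2\leq C_\delta(1+\mu)^\delta/N^{2\delta}$ uniformly in $\mu\geq 0$, $N\geq 1$ and $\delta\in[0,2]$, which follows from a short case split: when $\mu\leq N^2$ one uses $N^2+\mu\geq N^2$ and $1+\mu\leq 2N^2$; when $\mu\geq N^2$ one uses $N^2+\mu\geq \mu$, $1+\mu\leq 2\mu$ and $N^{2\delta}\leq\mu^\delta$. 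This gives $(1-\hat K_{R,N})^2\leq C'N^{-2\delta}(1-\Delta_R)^\delta$, and the proof concludes exactly as in (B).

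The only mildly subtle ingredient is the spectral interpolation used in (C): the naive bound $|1-\hat K_{R,N}|\leq\mathrm{const}$ produces no decay in $N$, while the full bound from Lemma~\ref{lem:K_hat} loses two extra Laplace derivatives, more than the Sobolev regularity $-\kappa-\delta$ can afford when $\delta<2$; interpolating between these via the case split above is exactly what is needed and is the lone non-routine step.
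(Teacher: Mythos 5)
Your proof is correct and takes essentially the same route as the paper's: express each second moment as a trace of a function of $-\Delta_R$, bound the relevant spectral multiplier by $N^{-2\delta}(1-\Delta_R)^\delta$ (using Lemma~\ref{lem:K_hat} in part (C)), and control the resulting trace with Lemma~\ref{lem:trace}. The one small difference is in part (A), which the paper derives from part (B) together with Lemma~\ref{lem:stochastic_X_not_uniform}~(A) by a triangle inequality, whereas you compute $\mathrm{Tr}((1-\Delta_R)^{-1-\kappa})$ directly; your version is marginally cleaner and also spells out the elementary scalar case splits that the paper leaves implicit.
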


\begin{proof}
Item~(A) follows from Item~(B) and Lemma~\ref{lem:stochastic_X_not_uniform} (A) since, clearly, $\|X_{R,N}\|_{L^{-\kappa}_2(\bS_R)} \leq \|X_{R,N}\|_{L^{1}_2(\bS_R)}$.  To prove Item~(B) note that
\begin{multline}
\bE \|X_R - X_{R,N}\|^2_{L^{-\kappa -\delta}_{2}(\bS_R)} 
= 
\mathrm{Tr}\big((1-\Delta_R)^{-1-\kappa-\delta} (1- (1 -\Delta_R/N^2)^{-1})^2 \big)
\\
\leq N^{-2\delta}\,\mathrm{Tr}\big((1-\Delta_R)^{-1-\kappa}\big)
= 
\sum_{l=0}^{\infty} \frac{N^{-2\delta}\,(2l+1)}{(1+l(l+1)/R^2)^{1+\kappa}}.
\end{multline}
Now, Item~(B) follows from Lemma~\ref{lem:trace}. Thanks to Lemma~\ref{lem:K_hat} the proof of Item~(C) is the same as the proof of Item~(B).
\end{proof}

\begin{lem}\label{lem:stochastic_Y}
Let $R\in\bN_+$. There exists a real-valued random variable $Y_R$ and $C\in(0,\infty)$ such that for all $N\in\bN_+$ it holds
\begin{enumerate}
 \item[(A)] $\bE Y_R^2\leq C^2$,
 \item[(B)] $\bE (Y_R-Y_{R,N})^2\leq C^2\,N^{-1/n}$,
 \item[(C)] $\bE (Y_R-\hat Y_{R,N})^2\leq C^2\,N^{-1/n}$,
 \item[(D)] $\bE (\hat Y_{R,N}-\tilde Y_{R,N})^2\leq C^2\,N^{-1}$.
\end{enumerate}
\end{lem}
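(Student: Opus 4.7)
My approach is to define $Y_R$ as the $L_2(\Omega,\mathbb{P})$-limit of $Y_{R,N}$ as $N\to\infty$ and to verify the rates (A)--(D) via explicit second-moment computations. The central tool is the Wick isometry of Lemma~\ref{Nualart-Lemma-1-1}, which, applied to the unit-variance Gaussians $X_{R,N}(\rx)/c_{R,N}^{1/2}$ and $X_{R,M}(\ry)/c_{R,M}^{1/2}$ together with the identity $X_{R,N}^{:m:}=c_{R,N}^{m/2}H_m(X_{R,N}/c_{R,N}^{1/2})$, yields
\begin{equation}
\bE\bigl[X_{R,N}^{:m:}(\rx)\,X_{R,M}^{:m:}(\ry)\bigr] = m!\,(K_{R,N}G_RK_{R,M})(\rx,\ry)^m,
\end{equation}
and analogous identities with either $K_{R,N}$ or $K_{R,M}$ replaced by $\hat K_{R,N}$ or $\hat K_{R,M}$. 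The counterterms $c_{R,N}$, $\hat c_{R,N}$ cancel exactly, and cross-chaos contributions ($m\neq m'$) vanish, so every second moment reduces to an integral over $\bS_R\times\bS_R$ of an $m$-th power of a smoothed Green's function.

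For (A) and the uniform boundedness of $\bE Y_{R,N}^2$, I would use that $K_{R,N}G_RK_{R,N}\leq G_R$ pointwise and that $G_R(\rx,\ry)^m$ is $\rho_R\otimes\rho_R$-integrable, the only singularity being the logarithmic $\log(1/d_R(\rx,\ry))^m$ at the diagonal, which is integrable in two dimensions. For (B) and (C), I would write
\begin{equation}
\bE\bigl(X_{R,N}^{:m:}(1_{\bS_R})-X_{R,M}^{:m:}(1_{\bS_R})\bigr)^{2} = m!\int\!\!\int\bigl[G_{R,N}^m + G_{R,M}^m - 2\tilde G_{R,N,M}^m\bigr](\rx,\ry)\,\rho_R(\rd\rx)\rho_R(\rd\ry),
\end{equation}
with $G_{R,N}=K_{R,N}G_RK_{R,N}$ and $\tilde G_{R,N,M}=K_{R,N}G_RK_{R,M}$, factor the pointwise difference of $m$-th powers via $a^m-b^m=(a-b)\sum_{k=0}^{m-1}a^k b^{m-1-k}$, and bound the single factor $|G_{R,N}-\tilde G_{R,N,M}|$ using the operator identity $K_{R,N}-K_{R,M}=(1/N^2-1/M^2)\Delta_R K_{R,N}K_{R,M}$ and, in the $\hat K$-variant, Lemma~\ref{lem:K_hat}. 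This produces a smoothing factor of order $(N\wedge M)^{-\alpha}$ for some $\alpha>0$. The remaining $m-1$ factors are kept under control via H{\"o}lder's inequality and the uniform integrability established in the preceding step. Interpolating the smoothing exponent $\alpha$ against the logarithmic divergence of the tail factors yields the stated rate $N^{-1/n}$. Cauchy completeness in $L_2(\Omega,\mathbb{P})$ then furnishes the limit $Y_R$, and sending $M\to\infty$ in the above inequalities gives (B) and (C).

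For (D), observe that $\hat Y_{R,N}-\tilde Y_{R,N}=\sum_{m=0}^n a_m\hat X_{R,N}^{:m:}(1_{\bS_R\setminus\bS_{R,N}})$, and that $\bS_R\setminus\bS_{R,N}$ is a spherical strip of geodesic width $2/N$ around the great circle $\{\rx_1=0\}$, hence of $\rho_R$-measure at most $4\pi R/N$. The Wick isometry with the indicator of this strip, combined with the pointwise bound $|(\hat K_{R,N}G_R\hat K_{R,N})(\rx,\ry)|\lesssim G_R(\rx,\ry)$ from Lemma~\ref{lem:K_hat} and the local logarithmic integrability of $G_R^m$, gives an estimate of order $(R/N)(\log N)^m$, which is $O(N^{-1})$ for fixed $R$ and large $N$. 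The main obstacle is tracking the sharp exponent $1/n$ in (B), (C) for the top Wick power $m=n$: one must carefully balance the smoothing estimate $\|(K_{R,N}-K_{R,M})(1-\Delta_R)^{\delta}\|=O(N^{-2\delta})$ against the loss of integrability of $G_R^{n-1}$ as $\delta$ is increased, using H{\"o}lder interpolation, with $1/n$ being the value at which the interpolation closes. All remaining steps reduce to the Wick calculus combined with the kernel bounds of Lemmas~\ref{lem:trace} and~\ref{lem:K_hat}.
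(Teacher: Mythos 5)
Your overall route mirrors the paper's: apply the Wick isometry of Lemma~\ref{Nualart-Lemma-1-1} to reduce all second moments to integrals of $m$-th powers of (differences of) smoothed Green's functions, factor $a^m-b^m$ to isolate a single difference, and trade smoothing against integrability to produce the $N^{-1/n}$ rate, finally using the measure of the strip $\bS_R\setminus\bS_{R,N}$ for (D). That is exactly the paper's strategy. However, two of the concrete estimates you invoke along the way do not hold as stated.

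First, the pointwise kernel domination $K_{R,N}G_RK_{R,N}(\rx,\ry)\leq G_R(\rx,\ry)$ used for (A) is false. As operators $K_{R,N}G_RK_{R,N}\leq G_R$, but that does not pass to kernels: mollifying a positive decaying radial profile by a symmetric bump \emph{increases} the tail (e.g. $\phi_\epsilon * \re^{-r} > \re^{-r}$ for any normalized symmetric $\phi_\epsilon$), so away from the diagonal $K_{R,N}G_RK_{R,N}(\rx,\ry)$ actually exceeds $G_R(\rx,\ry)$. What you need, and what the paper establishes, is a uniform-in-$N$ bound on $\|G_{R,N}(\rx,\cdot)\|_{L_n(\bS_R)}$, obtained from the Sobolev embedding of Lemma~\ref{lem:embedding_sphere} combined with the trace bound of Lemma~\ref{lem:trace}; you should replace the pointwise comparison with an $L_p$-norm argument of that type.

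Second, in (D) your intermediate estimate ``of order $(R/N)(\log N)^m$, which is $O(N^{-1})$'' is both incorrect and, taken at face value, would defeat the conclusion: $(\log N)^m\to\infty$, so $R(\log N)^m/N$ is \emph{not} $O(N^{-1})$. The $(\log N)^m$ factor is spurious. The correct input is that $\rho_R(\bS_R\setminus\bS_{R,N})=O(1/N)$ together with the fact that $\sup_{\rx}\|\hat G_{R,N,N}(\rx,\cdot)\|_{L_m(\bS_R)}$ is bounded \emph{uniformly in $N$} — precisely what the bound analogous to Eq.~\eqref{eq:prove_Y_bound_G} delivers. The logarithmic divergence only appears in the $L_\infty$ norm of the kernel at the diagonal, not in its $L_m$ norm, which is what enters after an application of H\"older over the strip. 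Once this is corrected, the argument closes as you intended.
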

\begin{rem}
Recall that $n\in 2\bN_+$, $n\geq 4$, is the degree of the polynomial $P$ and the random variables $Y_{R,N}$ and $\hat Y_{R,N},\tilde Y_{R,N}$ were introduced in Def.~\ref{dfn:X_Y} and Def.~\ref{dfn:hat_X_hat_Y}, respectively.
\end{rem}
\begin{proof}
To prove Items~(A) and~(B) it is enough to show that for every $m\in\{1,\ldots,n\}$ there exists $C\in(0,\infty)$ such that for all $N,M\in\bN_+$ it holds
\begin{equation}
 \bE X^{:m:}_{R,N}(1_{\bS_R})(X^{:m:}_{R,N}-X^{:m:}_{R,M})(1_{\bS_R})\leq C^2\,(N\wedge M)^{-1}.
\end{equation}
Let $G_{R,N,M}:=K_{R,N}G_R K_{R,M}$. By Lemma~\ref{Nualart-Lemma-1-1}
\begin{multline}
 \bE X^{:m:}_{R,N}(1_{\bS_R})(X^{:m:}_{R,N}-X^{:m:}_{R,M})(1_{\bS_R}) \\
 =
 m! \int_{\bS_R^2}\!\!(G_{R,N,N}(\rx,\ry)^m-G_{R,N,M}(\rx,\ry)^m)\,\rho_R(\rd\rx)\rho_R(\rd\ry).
\end{multline}
Consequently, using H{\"o}lder's inequality we obtain that for every $m\in\{1,\ldots,n\}$ there exists $C\in(0,\infty)$ such that for all $N,M\in\bN_+$ and $\rx\in\bS_R$ it holds
\begin{multline}
 |\bE X^{:m:}_{R,N}(1_{\bS_R})(X^{:m:}_{R,N}-X^{:m:}_{R,M})(1_{\bS_R})| 
 \\
 \leq C\,\|(G_{R,N,N}-G_{R,N,M})(\Cdot,\Cdot)\|_{L_m(\bS_R^2)}
 (\|G_{R,N,N}(\Cdot,\Cdot)\|_{L_m(\bS_R^2)}^{m-1}
 +
 \|G_{R,N,M}(\Cdot,\Cdot)\|_{L_m(\bS_R^2)}^{m-1})
 \\
 \leq \hat C\,\|(G_{R,N,N}-G_{R,N,M})(\Cdot,\Cdot)\|_{L_n(\bS_R^2)}
 (\|G_{R,N,N}(\Cdot,\Cdot)\|_{L_n(\bS_R^2)}^{m-1}
 +
 \|G_{R,N,M}(\Cdot,\Cdot)\|_{L_n(\bS_R^2)}^{m-1})
 \\=
 \check C\,\|(G_{R,N,N}-G_{R,N,M})(\rx,\Cdot)\|_{L_n(\bS_R)}
 (\|G_{R,N,N}(\rx,\Cdot)\|_{L_n(\bS_R)}^{m-1}
 +
 \|G_{R,N,M}(\rx,\Cdot)\|_{L_n(\bS_R)}^{m-1}),
\end{multline}
where in the last step above we used the fact that $G_{R,N,N}$ is invariant under rotations and $\hat C=(4\pi R^2)^{( 2-2m/n)} C$, $\check C=(4\pi R^2)^{m/n}\hat C$. By the Sobolev embedding stated in Lemma~\ref{lem:embedding_sphere} there exist $\hat C,C\in(0,\infty)$ such that for all $N\in\bN_+$ it holds
\begin{multline}
 \|(G_{R,N,N}-G_{R,N,M})(\rx,\Cdot)\|^2_{L_n(\bS_R)}
 \leq \hat C\, 
 \|(G_{R,N,N}-G_{R,N,M})(\rx,\Cdot)\|^2_{L_2^{(n-2)/n}(\bS_R)}
 \\
 =
 (4\pi R^2)^{-1}\,\hat C\,\big(\mathrm{Tr}\big[G_R^{(n+2)/n} K_{R,N}^2 (K_{R,N}-K_{R,M})^2\big]\big)
 \leq C\,(N\wedge M)^{-2/n}.
\end{multline}
The last estimate above follows from the bound
\begin{multline}
 \mathrm{Tr}\big[G_R^{(n+2)/n} K_{R,N}^2 (K_{R,N}-K_{R,M})^2\big]
 \\
 \leq
 \mathrm{Tr}\big[(1-\Delta_R)^{-(n+2)/n} |(1-\Delta_R)/N^2+(1-\Delta_R)/M^2|^{1/n}\big]
 \\
 \leq
 2\,(N\wedge M)^{-2/n}\,\mathrm{Tr}\big[(1-\Delta_R)^{-(n+1)/n}\big]
\end{multline}
and Lemma~\ref{lem:trace}. By an analogous reasoning we obtain
\begin{multline}\label{eq:prove_Y_bound_G}
 \|G_{R,N,N}(\rx,\Cdot)\|^2_{L_n(\bS_R)}
 \leq \hat C\, 
 \|G_{R,N,N}(\rx,\Cdot)\|^2_{L_2^{(n-2)/n}(\bS_R)}
 \\
 =
 (4\pi R^2)^{-1}\,\hat C\,\big(\mathrm{Tr}\big[G_R^{(n+2)/n} K_{R,N}^4\big]\big)
 \leq C
\end{multline}
for some constants $C,\hat C$ independent of $N$ and $m$. This proves~(A) and~(B). Thanks to Lemma~\ref{lem:K_hat} the above estimates are also valid when $X_{R,N}$ is replaced with $\hat X_{R,N}$ and $G_{R,N,M}$ is replaced with $\hat G_{R,N,M}:=\hat K_{R,N}G_R\hat K_{R,M}$. Hence,~(C) follows. To prove Item~(D) note that for every $m\in\{1,\ldots,n\}$ there exists $C\in(0,\infty)$ such that for all $N\in\bN_+$ and $\rx\in\bS_R$ it holds
\begin{multline}
 \bE \hat X^{:m:}_{R,N}(1_{\bS_R\setminus\bS_{R,N}})\hat X^{:m:}_{R,N}(1_{\bS_R\setminus\bS_{R,N}})
 \leq \|\hat G_{R,N,N}(\Cdot,\Cdot)\|^m_{L_m(\bS_R\setminus\bS_{R,N}\times \bS_R\setminus\bS_{R,N})}
 \\
 \leq 
 \|\hat G_{R,N,N}(\Cdot,\Cdot)\|^m_{L_m(\bS_R\setminus\bS_{R,N}\times \bS_R)}
 \leq C/N\,
 \|\hat G_{R,N,N}(\rx,\Cdot)\|^m_{L_m(\bS_R)},
\end{multline}
where in the last step we used the rotational invariance of $\hat G_{R,N,N}$ and the fact that the volume of $\bS_R\setminus\bS_{R,N}$ is bounded by $C/N$. To conclude the proof of Item~(D) we use an analog of the bound~\eqref{eq:prove_Y_bound_G} with $G_{R,N,N}$ replaced by $\hat G_{R,N,N}$ and H{\"o}lder inequality.
\end{proof}

\begin{lem}\label{lem:stochastic_bound_infinite_volume}
Let $m\in\bN_+$, $p\in[1,\infty)$, $\kappa\in(0,\infty)$ and $L\in[1,\infty)$. There exists $C\in(0,\infty)$ such that for all $R,N\in\bN_+$, $R\geq L$, it holds
\begin{equation}
 \bE\|\jmath_R^* X^{:m:}_{R,N}\|_{L^{-\kappa}_p(\bR^2,v_L^{1/p})}^p \leq C,
 \qquad
\lim_{N\to \infty}\bE\|\jmath_R^* (X_R-X_{R,N})\|_{L^{-\kappa}_p(\bR^2,v_L^{1/p})}^p =0.
\end{equation} 
\end{lem}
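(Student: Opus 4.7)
The plan is to reduce the weighted $L_p$-moment bound to a pointwise $L_2$-moment computation via Nelson hypercontractivity, evaluate that second moment using the explicit covariance of the Wick powers, and rely on the strong polynomial decay of $v_L$ to absorb the $R$-dependent volume factors coming from stereographic coordinates.

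First I would unfold the weighted Bessel norm as
$\bE\|\jmath_R^*X^{:m:}_{R,N}\|^p_{L^{-\kappa}_p(\bR^2,v_L^{1/p})} = \int_{\bR^2}\bE\bigl|((1-\Delta)^{-\kappa/2}\jmath_R^*X^{:m:}_{R,N})(x)\bigr|^p\,v_L(x)\,\rd x.$
For each fixed $x$ the integrand is a random variable in the $m$-th Wiener chaos generated by $X_R$, so Lemma~\ref{lem:nelson} (together with Jensen for $p<2$) bounds the $p$-th moment by the $(p/2)$-th power of the second moment. A direct stereographic change of variables—in which the measure factor $w_R(u)w_R(v)$ produced by $\rho_R\otimes\rho_R$ cancels the $w_R^{-1}(u)w_R^{-1}(v)$ coming from the pullback of distributions—shows that the covariance of $\jmath_R^*X_{R,N}$ with respect to Lebesgue measure on $\bR^2$ equals $G_{R,N,N}(\jmath_R(u),\jmath_R(v))$. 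Combined with Lemma~\ref{Nualart-Lemma-1-1} this gives the closed-form expression
$\bE\bigl|((1-\Delta)^{-\kappa/2}\jmath_R^*X^{:m:}_{R,N})(x)\bigr|^2 = m!\int k_\kappa(x-u)\,k_\kappa(x-v)\,G_{R,N,N}(\jmath_R(u),\jmath_R(v))^m\,\rd u\,\rd v,$
where $k_\kappa$ is the (rapidly decaying) Bessel convolution kernel of $(1-\Delta)^{-\kappa/2}$.

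The remaining task is to control this pointwise second moment in a $v_L$-integrable fashion uniformly in $R\geq L$ and $N\in\bN_+$. I would combine a uniform-in-$R$ estimate of the form $|G_{R,N,N}(\rx,\ry)|\lesssim 1+|\log d_R(\rx,\ry)|+\log N$ on the sphere—obtained from the spectral decomposition that underlies Lemma~\ref{lem:trace}—with the local comparison $d_R(\jmath_R(u),\jmath_R(v))\asymp|u-v|$ valid on bounded subsets of $\bR^2$ for $R\geq L$. The resulting logarithmic singularity of order $m$ is absorbed upon convolution against $k_\kappa\otimes k_\kappa$ for any $\kappa>0$, producing a pointwise bound that is at worst polynomial in $|x|$. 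Any such polynomial growth in the north-pole region, where the stereographic volume factor $w_R$ inflates, is then dominated by the fast decay $v_L(x)\sim|x|^{-16}$ that was built into the definition of the weight. Integrating against $v_L$ yields the first assertion.

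For the second claim, $X_R-X_{R,N}$ lies in the first Wiener chaos, so the same scheme applies with the pointwise second moment bounded directly by translating Lemma~\ref{lem:stochastic_X_uniform}(B) through the stereographic projection; this produces an additional factor $N^{-2\delta}$, and the majorant supplied by the first part permits dominated convergence as $N\to\infty$. I expect the main obstacle to be the simultaneous uniform-in-$R$ control of the two singular regions—the diagonal $u=v$ and the north-pole tail corresponding to $|x|\to\infty$—for which the specific polynomial decay rate of $v_L$ is precisely tuned.
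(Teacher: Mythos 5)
Your top-level plan agrees with the paper's: reduce $\bE\|\cdot\|_{L^{-\kappa}_p(v_L^{1/p})}^p$ to a pointwise second moment via Nelson hypercontractivity (Lemma~\ref{lem:nelson}) and Jensen, and compute that moment from the covariance using Lemma~\ref{Nualart-Lemma-1-1}. Your identification of the covariance of $\jmath_R^*X_{R,N}$ as $\tilde G_{R,N}(u,v)=G_{R,N}(\jmath_R(u),\jmath_R(v))$, with the $w_R$ factors cancelling, is exactly the $\tilde G_{R,N}$ used in the paper. Where you genuinely diverge from the paper is in how the second-moment kernel is controlled: the paper never establishes pointwise bounds on $\tilde G_{R,N}$; instead it passes from the weighted $L_\infty$ norm of $\bigl((1-\Delta)^{-\kappa/2}\otimes(1-\Delta)^{-\kappa/2}\bigr)\tilde G_{R,N}^m$ to an $L_q$ norm of $G_{R,N}(\rx,\Cdot)^m$ on the sphere via Theorem~\ref{thm:embedding}~(B), then to $L_2^{(mq-2)/mq}(\bS_R)$ via Lemma~\ref{lem:embedding_sphere}, and finally to a trace which is controlled uniformly in $R,N$ by Lemma~\ref{lem:trace}. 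You instead propose a pointwise log-distance bound on $G_{R,N,N}$ and a direct convolution estimate against $k_\kappa\otimes k_\kappa$.

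The pointwise route could in principle work, but there is a concrete problem with the way you state it. Writing $|G_{R,N,N}(\rx,\ry)|\lesssim 1+|\log d_R(\rx,\ry)|+\log N$ with $\log N$ as an additive global contribution is fatal: raised to the $m$th power it produces $(\log N)^m$, and convolving with $k_\kappa\otimes k_\kappa$ and integrating against $v_L$ then yields a bound that diverges as $N\to\infty$, so uniformity in $N$ is lost. What you actually need (and what is true) is the $N$-uniform estimate $|G_{R,N,N}(\rx,\ry)|\lesssim 1+\bigl(\log(1/d_R(\rx,\ry))\bigr)_+$: the $\log N$ cap is relevant only at distances $d_R\lesssim 1/N$, where it is dominated by $|\log d_R|$. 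Establishing this kernel estimate uniformly in both $R$ and $N$ is not trivial---it requires controlling the full spectral sum with Legendre polynomials, including the behavior in the stereographic blow-up region where $w_R^{1/2}|u-v|$ can be small even if $|u-v|$ is of order one, and then tracking how the resulting $(\log|x|)^m$ growth is absorbed by $v_L$. The paper sidesteps all of this by working at the level of $L_q$ kernel norms and reducing to a trace, which is why it only needs the elementary Lemma~\ref{lem:trace}. Finally, for the second assertion your plan is sound in spirit, but note that the paper implements it by the same $L_q$/trace machinery with $m=1$ and $G_{R,N}$ replaced by $(1-K_{R,N})G_R(1-K_{R,N})$; directly transferring Lemma~\ref{lem:stochastic_X_uniform}~(B) as you suggest produces an $R^2$ prefactor and still leaves the passage from $L_2^{-\kappa-\delta}(\bS_R)$ to the weighted plane norm to be spelled out.
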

\begin{proof}
By Jensen's inequality it suffices to prove the statement for $p\in2\bN_+$. Let $q=(4/\kappa) \vee 4$. There exists $C\in(0,\infty)$ depending on $p$ and $\kappa$ such that for all $R,N\in\bN_+$ it holds
\begin{multline}
 \bE\|\jmath_R^* X^{:m:}_{R,N}\|_{L^{-\kappa}_p(\bR^2,v_L^{1/p})}^p \leq 
 \|v_L w_L^{-1/q}\|_{L_1(\bR^2)}\,\|\bE ((1-\Delta)^{-\kappa/2}\jmath_R^* X^{:m:}_{R,N}(\Cdot))^p\|_{L_\infty(\bR^2,w_L^{1/q})}
 \\
 \leq C\,\|\bE((1-\Delta)^{-\kappa/2}\jmath_R^* X^{:m:}_{R,N}(\Cdot))^2\|_{L_\infty(\bR^2,w_L^{1/q})}^{p/2}\,,
\end{multline}
where the last bound is a consequence of Lemma~\ref{lem:nelson}. Recall that $\bE X_{R,N}\otimes X_{R,N} = G_{R,N}(\Cdot,\Cdot)$, where $G_{R,N}=K_{R,N}G_RK_{R,N}$. By Lemma~\ref{Nualart-Lemma-1-1}
\begin{equation}
 \bE \jmath_R^*X^{:m:}_{R,N}\otimes \jmath_R^*X^{:m:}_{R,N} = m!\, \tilde G_{R,N}^m,
 \qquad 
 \tilde G_{R,N}:=(\jmath_R^*\otimes\jmath_R^*)G_{R,N}(\Cdot,\Cdot)\,.
\end{equation}
Hence, by Fubini's theorem and explicit formula for the kernel in terms of spherical harmonics
\begin{multline}
\bE(1-\Delta)^{-\kappa/2}\jmath_R^* X^{:m:}_{R,N}\otimes\,
(1-\Delta)^{-\kappa/2}\jmath_R^* X^{:m:}_{R,N}
\\
= m!\,
\big((1-\Delta)^{-\kappa/2} \otimes (1-\Delta)^{-\kappa/2}\big)\, \tilde G_{R,N}^m \in C(\bR^2\times\bR^2).
\end{multline}
Since for $F\in C(\bR^2\times\bR^2)$ it holds $\sup_{x\in\bR^2} F(x,x)\leq \sup_{y\in\bR^2}\sup_{x\in\bR^2} F(x,y)$ we obtain
\begin{multline}
\|\bE((1-\Delta)^{-\kappa/2}\jmath_R^* X^{:m:}_{R,N}(\Cdot))^2\|_{L_\infty(\bR^2,w_L^{1/q})}
\\
\leq
m!\,\sup_y w_L^{1/q}(y)\,\|((1-\Delta)^{-\kappa/2} \otimes 1)\big(1 \otimes (1-\Delta)^{-\kappa/2}) \tilde G_{R,N}^m\big)(\Cdot,y)\|_{L_\infty(\bR^2)}
\\
=
m!\,\sup_y w_L^{1/q}(y)\,\|\big( 1 \otimes (1-\Delta)^{-\kappa/2}) \tilde G_{R,N}^m\big)(\Cdot,y)\|_{L^{-\kappa}_\infty(\bR^2)}.
\end{multline}
By Theorem~\ref{thm:embedding}~(B) there exists $C\in(0,\infty)$ such that for all $R,N\in\bN_+$ the above expression is bounded by
\begin{multline}
\sup_{y\in\bR^2} w_L^{1/q}(y)\,\|\big(1\otimes(1-\Delta)^{-\kappa/2}) \tilde G_{R,N}^m\big)(\Cdot,y)\|_{L_\infty(\bR^2)}
\\
=
\sup_{x\in\bR^2}  \|\big(  1\otimes(1-\Delta)^{-\kappa/2}) \tilde G_{R,N}^m\big)(x,\Cdot)\|_{L_\infty(\bR^2,w_L^{1/q})}
\\
=
\sup_{x\in\bR^2} \|\tilde G_{R,N}^m(x,\Cdot)\|_{L_\infty^{-\kappa}(\bR^2,w_L^{1/q})}
\end{multline}
up to a multiplicative constant $C$, which depends on $m$. The first equality above follows from the fact that for \mbox{$F\in C(\bR^2\times\bR^2)$} it holds $\sup_{x\in\bR^2}\sup_{y\in\bR^2} F(x,y)=\sup_{y\in\bR^2}\sup_{x\in\bR^2} F(x,y)$. By Theorem~\ref{thm:embedding}~(B), since $q>2/\kappa$, the above expression is bounded by
\begin{multline}
\sup_{x\in\bR^2} \|\tilde G_{R,N}^m(x,\Cdot)\|_{L_q(\bR^2,w_L^{1/q})}
\leq
\sup_{x\in\bR^2} \|\tilde G_{R,N}^m(x,\Cdot)\|_{L_q(\bR^2,w_R^{1/q})}
=
\sup_{\rx\in\bS_R} \|G_{R,N}(\rx,\Cdot)^m\|_{L_q(\bS_R)}
\\
=
\sup_{\rx\in\bS_R} \|G_{R,N}(\rx,\Cdot)\|_{L_{mq}(\bS_R)}^m
\leq 
C\sup_{\rx\in\bS_R} \|G_{R,N}(\rx,\Cdot)\|_{L^{(mq-2)/mq}_2(\bS_R)}^m
\\
=
C(4\pi R^2)^{-m/2}\,[\mathrm{Tr}(G_{R,N}(1-\Delta_R)^{(mq-2)/mq}G_{R,N})]^{m/2}.
\end{multline}
The first bound above is true because $R\geq L$. The second bound is a consequence of the Sobolev embedding stated in Lemma~\ref{lem:embedding_sphere}, since $q\geq 2/m$. The first of the bounds from the statement of the lemma follows now from Lemma~\ref{lem:trace} applied with $N'=1$ and $\kappa'=2/mq$. To prove the second of the bounds we use exactly the same strategy as above with $m=1$ and the operator $G_{R,N}$ replaced by $(1-K_{R,N})G_R(1-K_{R,N})$.
\end{proof}

\section*{Acknowledgments}

The financial support by the grant `Sonata Bis' 2019/34/E/ST1/00053 of the National Science Centre, Poland, is gratefully acknowledged.

\vspace{0.5cm}

\end{document}